\newlength\tindent
\newtheorem{theorem}{Theorem}
\newtheorem{proposition}{Proposition}
\newtheorem{lemma}{Lemma}
\newtheorem{definition}{Definition}
\newtheorem*{notation*}{Notation}
\newtheorem{corollary}{Corollary}
\newtheorem{remark}{Remark}
\newtheorem{assump}{Assumption}
\DeclareMathOperator*{\argmin}{arg\,min}
\DeclareMathOperator{\E}{\mathbb{E}}
\DeclareMathOperator{\Prob}{\mathbb{P}}
\DeclareMathOperator{\ind}{\perp \!\!\! \perp }
\newcommand{\norm}[1]{\left\lVert#1\right\rVert}
\newcommand{\1}{\mathbbm{1}}
\newcommand{\abs}[1]{\left|#1\right|}
\newcommand{\iidsim}{\overset{\mbox{\tiny{iid}}}{\sim}}
\newcommand{\A}{\mathcal{A}}
\newcommand{\Gv}{\mathcal{G}_v}
\newcommand{\Ev}{\mathcal{E}_v}
\newcommand{\GvInd}{\1_{\Gv}(X_{1:v-1}^{1:N},\bar{X}_{1:v-1}^{1:N})}
\newcommand{\scs}[1]{\textcolor{red}{#1}}
\newcommand{\scsfootnote}[1]{\textcolor{red}{\footnote{\textcolor{red}{#1}}}}
\newcommand{\scsfootnotemark}{\textcolor{red}{\footnotemark}}
\newcommand{\scsfootnotetext}[1]{\textcolor{red}{\footnotetext{\textcolor{red}{#1}}}}
\newcommand{\scsout}[1]{\scs{\sout{\textcolor{black}{#1}}}}
\newcommand{\jrm}[1]{\textcolor{blue}{#1}}
\begin{document}

\begin{frontmatter}
\title{Finite Sample Complexity of Sequential Monte Carlo Estimators on Multimodal Target Distributions}
\runtitle{Finite Sample Complexity of SMC on Multimodal Distributions}

\begin{aug}
\author[A]{\fnms{Joseph}~\snm{Mathews}\ead[label=e1]{joseph.mathews@duke.edu}}
\and
\author[B]{\fnms{Scott C.}~\snm{Schmidler}\ead[label=e2]{scott.schmidler@duke.edu}\orcid{0000-0000-0000-0000}}
\address[A]{Department of Statistical Science, Duke University \printead[presep={,\ }]{e1}}

\address[B]{Department of Statistical Science,
Duke University \printead[presep={,\ }]{e2}}
\end{aug}

\begin{abstract}
    We prove finite sample complexities for sequential Monte Carlo (SMC) algorithms which require only \textit{local} mixing times of the associated Markov kernels.   Our bounds are particularly useful when the target distribution is multimodal and global mixing of the Markov kernel is slow; in such cases our approach 
    establishes benefits of 
    SMC over the corresponding Markov chain Monte Carlo (MCMC) estimator.
    The 
    lack of global mixing is addressed by sequentially controlling the bias introduced by SMC resampling procedures. 
    We apply these results to obtain complexity bounds for approximating expectations under mixtures of log-concave distributions, and show that SMC provides a fully polynomial time randomized approximation scheme for some difficult multimodal problems where the corresponding Markov chain sampler is exponentially slow. Finally, we compare the bounds obtained by our approach to existing bounds for tempered Markov chains on the same problems.
\end{abstract}

\begin{keyword}[class=MSC]
\kwd[Primary ]{65C05}
\kwd{60J22}
\kwd[; secondary ]{65C40}
\end{keyword}

\begin{keyword}
\kwd{Sequential Monte Carlo}
\kwd{Multimodal distributions}
\kwd{Finite sample bounds}
\end{keyword}

\end{frontmatter}

\section{Introduction}
Approximating integrals with respect to a complicated, high-dimensional probability distribution $\pi$ is an important problem spanning multiple disciplines, such as  Bayesian statistical inference, machine learning,  statistical physics, and theoretical computer science \cite{BDA, Levin_Peres}. Sequential Monte Carlo (SMC) methods are a large class of stochastic approximation algorithms designed to solve these problems by combining Markov chain Monte Carlo (MCMC) methods and resampling strategies to sequentially sample from a series of probability distributions. Some examples of SMC algorithms include population Monte Carlo methods \cite{Cappe}, annealed importance sampling \cite{annealed}, sequential particle filters \cite{chopin_static}, and population annealing \cite{PA_2}, among many others \cite{jasra_PMC}. Closely related - but purely MCMC - methods include parallel tempering (PT) \cite{geyer} and simulated tempering (ST) \cite{Marinari}, which have been referred to as population-based MCMC methods \cite{jasra_PMC}.

An SMC sampler is generally constructed as follows. The target distribution $\pi$ is embedded in a sequence of distributions $\mu_{0},\ldots,\mu_{V-1},\mu_{V} = \pi$ chosen such that $\mu_{0}$ is easy to sample from and $\mu_{v-1}$ is `close' to $\mu_{v}$ for $v = 1,\ldots,V$. Random variables (called \textit{particles}) are initially sampled from $\mu_{0}$ and are propagated through the sequence of distributions  sequentially via a combination of time-inhomogeneous MCMC and resampling moves until step $V$, where the resulting particles can be used to approximate expectations under $\pi$. For a comparison of various resampling strategies, see \cite{douc}. For a discussion of different Markov kernels used for the MCMC (\textit{mutation}) step, see \cite{delmoral_state}. For adaptive strategies, see  \cite{beskos} and \cite{fearnhead}. An explicit statement of the algorithm considered in this paper is given in Section \ref{smc}.

Two properties of SMC samplers make them particularly attractive to practitioners. First, they can be parallelized easily, improving their scalability to high-dimensional problems \cite{geweke}. Second, SMC samplers are believed to perform well when $\pi$ has multiple, well-separated modes. In these situations, standard MCMC methods such as random walk Metropolis \cite{metropolis}, Hamiltonian Monte Carlo \cite{neal_HMC} and Gibbs samplers, often exhibit metastable behavior, leading to slow mixing and poor approximation performance (see e.g. \cite{mangoubi}). SMC samplers are frequently applied to such multimodal targets across different disciplines, see e.g. \cite{geweke, nested_smc, wan_and_zabaras} for applications to Bayesian statistics and \cite{jasra, PA_parallel, PA_2} for applications to spin models in statistical physics. The performance of SMC on multimodal problems
has also been tested via simulation studies on various model problems in \cite{jasra_PMC, rudoy, tan}. See also \cite{PA_PT_SMC} for simulation studies comparing the performance of SMC methods to PT and ST algorithms on the Potts model. 

However, theoretical results in support of these purported benefits of SMC over MCMC are limited.  For one thing, such comparisons require quantitative analysis of the number of particles or steps required by each algorithm on a given problem.   
But while the asymptotic properties of SMC samplers have been explored extensively in the literature (for example, \cite{chopin_clt} establishes a central limit theorem under various resampling strategies, \cite{adaptive_clt} proves a central limit theorem for SMC methods using adaptive schemes, \cite{douc_lln} proves a law of large numbers, and \cite{jasra} bound the asymptotic variance of SMC for multimodal problems), rigorous results on finite-sample performance are much more limited. Results for finite particle sets have been obtained by use of Feynman-Kac formulations \cite{whiteley, schweizer, schweizer_mm, eberle_1, eberle_2}; however, these bounds typically involve expectations of the corresponding Feynman-Kac propagators, which are generally not available explicitly, and assume mixing conditions 
on the associated MCMC kernels which are difficult to verify on general state spaces.  These factors make it difficult to determine the dependence of SMC algorithms on key quantities of interest such as the problem dimension. More recently, \cite{marion} established finite sample bounds 
using a coupling construction. However, their results depend on the global mixing rates of the mutation kernels, and problems where these are controllable (e.g when polynomial rather than exponential in problem size) are likely to be precisely those cases where MCMC performs well without SMC. 

In this paper we establish rigorous conditions guaranteeing the performance of SMC samplers on multimodal target distributions by proving upper bounds on the computational complexity.  These results are obtained by establishing upper bounds on the number of particles and MCMC steps needed to closely approximate expectations under the target distribution with high probability. Our approach builds on that of \cite{marion}, but here we emphasize settings where global mixing may be slow, as in multimodal situations, but where the kernel can be assumed to mix \textit{locally}.  In particular, our approach assumes particles are only mutated through restricted Markov kernels, corresponding to an arbitrary partition of the state space. This `local mixing' approach is similar to \cite{schweizer_mm, jasra},  but here we use this approach to obtain finite sample bounds rather than asymptotic ones.
We show that when SMC accurately approximates `local' expectations (i.e. expectations restricted to each partition element), global expectations can also be approximated accurately. This approach also calls to mind
the previous mixing time analyses of the PT and ST algorithms in \cite{woodard_1} (see also \cite{holden}), who used the state space decomposition approach of \cite{madras}. 
We apply our bounds to mixtures of log-concave distributions and the mean field Ising model. In addition, we use our bounds to provide a rigorous comparison of the SMC, PT, and ST algorithms.

While our approach builds on the coupling argument of \cite{marion}, the inability to rely on global mixing substantially complicates the analysis. Unlike in \cite{marion}, particles can no longer be coupled to a set of exact \textit{i.i.d.} random variates, as the bias introduced by resampling is no longer `corrected' automatically by a globally mixing kernel, and instead must be controlled directly. Consequently our analysis highlights the benefits obtained from the resampling step, which indeed is crucial to the success of the algorithm when the mutation kernel cannot be assumed to mix globally at each step.

This paper is organized as follows. Section \ref{problem_setup} establishes notation, formalizes the SMC algorithm we are considering, and outlines our proof approach. Section \ref{results} contains our main theorem and supporting results. Section \ref{applications}, applies these results to obtain bounds for mixtures of log-concave distributions and the mean field Ising model. Section \ref{smc_pt_compare} compares our bounds to those obtained by   \cite{woodard_1,holden}
for the parallel and simulated tempering algorithms. Section \ref{conclusion}, concludes with a discussion of future directions.

\section{Problem Setup}\label{problem_setup}
Let $\pi$ be a target distribution defined on a state space $\mathcal{X}$ with corresponding $\sigma$-field $\mathcal{B}$ and dominating measure $\rho$. In a slight abuse of notation we use the same letter to denote both the distribution and its corresponding density (e.g. the density of $\pi$ is $\pi(x)$). Let $\mathcal{A} = \{A_{j}: j=1,\ldots,p \}$ be a partition of $\mathcal{X}$. Typically $\mathcal{A}$ will represent a partition of $\mathcal{X}$ into local modes of $\pi$, although our results apply for any partition.  While $\mathcal{A}$ is usually not specified in practice, the algorithm given in Section~\ref{smc} serves as a useful theoretical model for understanding the behavior of SMC algorithms when $\pi$ is multimodal.

\subsection{Sequential Monte Carlo (SMC)}
\label{smc}
Let $\mu_{0}, \ldots, \mu_{V}$ be a sequence of distributions defined on $\mathcal{X}$ with dominating measure $\rho$ and $\mu_{V} = \pi$. We write $\mu_{v}(x) = q_{v}(x) / z_{v}$ with $q_{v}(x)$ denoting the unnormalized density and $z_{v}$ the (often unknown) normalizing constant. Define the \textit{importance weight} $w_{v}(x) := q_{v}(x)/q_{v-1}(x)$. We assume that the density ratios are uniformly bounded.
\begin{assump}\label{assumption: density ratios}
\begin{align*}
    \frac{\mu_{v}(x)}{\mu_{v-1}(x)} = \frac{z_{v-1}}{z_{v}} \cdot \frac{q_{v}(x)}{q_{v-1}(x)} = \frac{z_{v-1}}{z_{v}} \cdot w_{v}(x) \leq ZW, 
\end{align*}
 where $Z = \max_{v \in \{0,\ldots,V\}} \frac{z_{v-1}}{z_{v}}$ and $W =\max_{v \in \{0,\ldots,V\}} \sup_{x \in \mathcal{X}}  w_{v}(x)$.
\end{assump}
Denote the minimum probability of all partition elements across all steps as $\mu^{\star} := \min_{v = 0,\ldots,V} \min_{j=1,\ldots,p} \mu_{v}(A_{j})$. The algorithm we analyze is defined by the following procedure:
\begin{enumerate}
    \item For $v = 0$, draw $X^{1:N}_{0} \overset{iid}{\sim} \mu_{0}$
    \item For $v = 1,\ldots,V$,
    \begin{enumerate}
        \item Set $\tilde{X}^{i}_{v} = X^{k}_{v-1}$ with probability $w_{v}(X^{k}_{v-1}) /   \sum^{N}_{m=1} w_{v}(X^{m}_{v-1})$ independently for  $i=1,\ldots,N$.
        \item Draw $X^{i}_{v} \sim K_{v|A_{\xi^{i}_{v}}}^{t}(\tilde{X}^{i}_{v}, \cdot)$ for $i = 1,\ldots,N$ and $t \geq 1$,
    \end{enumerate}
\end{enumerate}
where $\xi^{i}_{v} = \sum^{p}_{j=1} j \mathbbm{1}_{A_j}(\tilde{X}_{v}^{i})$ is the (random) partition element index containing $\tilde{X}_{v}^i$, and 
\begin{align}\label{Eqn: restricted kernel}
    K_{v|A}(x,B) = K_{v}(x,B) + K_{v}(x,A^{c})\mathbbm{1}_{x \in B} \text{ for }x \in A, B \subset A,
\end{align}
where $K_{v}$ is a $\mu_{v}$-invariant Markov kernel and $K_{v\mid A}$ is the restriction of $K$ to $A\subset\mathcal{X}$. Note that $K_{v\mid A}$ is $\mu_{v\mid A}$-invariant by construction and is assumed to be ergodic for any $A\subset \mathcal{X}$, where
\begin{align*}
    \mu_{v|A}(B) = \frac{\mu_{v}(A \cap B)}{\mu_{v}(A)}, \text{ for } B \subset \mathcal{X}
\end{align*}
The use of $K_{v \mid A_{j}}$ instead of $K_{v}$ implies that the mutation kernel cannot cross between modes/partition elements. Consequently, we have that $X^{i}_{v} \in A_{j}$ if and only if $\tilde{X}^{i}_{v} \in A_{j}$.
Each iteration of the algorithm above generates a set of \textit{resampled} particles $\tilde{X}^{1:N}_{v} := (\tilde{X}^{1}_{v}, \ldots, \tilde{X}^{N}_{v})$ and \textit{mutated} particles $X^{1:N}_{v} := (X^{1}_{v}, \ldots, X^{N}_{v})$. The full algorithm then produces a total of $N(2V+1)$ particles defined on $\mathcal{X}^{2N(V+1)}$ that evolve according to the following joint probability laws,
\begin{align}
    \Prob_{0}(X^{1:N}_{0} \in B_{1} \times \ldots \times B_{N}) &= \prod^{N}_{i=1} \mu_{0}(B_{i}) \label{joint_intial}\\
    \tilde{\Prob}_{v}(\tilde{X}^{1:N}_{v} \in B_{1} \times \ldots \times B_{N} \mid X^{1:N}_{v-1} ) &= \prod^{N}_{i=1} \sum^{N}_{k=1} \frac{w_{v}(X^{k}_{v-1}) \mathbbm{1}_{B_{i}}(X^{k}_{v-1})}{N \sum^{p}_{r=1}\hat{w}^{r}_{v}} \label{joint_resample} \\
    \Prob_{v}(X^{1:N}_{v} \in B_{1} \times \ldots \times B_{N} \mid \tilde{X}^{1:N}_{v}) &= \prod^{N}_{i=1} K_{v | A_{\xi^{i}_{v}}}^{t}(\tilde{X}^{i}_{v}, B_{i}), \label{joint_mutate}
\end{align}
where $\hat{w}^{j}_{v} = N^{-1} \sum^{N}_{i=1} w_{v}(X^{i}_{v-1})\mathbbm{1}_{A_j}(X_{v-1}^j)$ and $B_{i} \subset \mathcal{X}$ for $i=1,\ldots,N$.  

\subsection{Mixing times} A distribution $\nu$ is said to be \textit{$M$-warm} with respect to a distribution $\mu$ if $ \sup_{B \subset \mathcal{X}}\nu(B) / \mu(B) \leq M$. Let $\mathcal{P}_{M}(\mu)$ denote the set of all such distributions.  Define the $M$-warm mixing time for $j = 1,\ldots,p$ by: 
\begin{align} \label{mixing_time}
    \tau_{v,j}(\epsilon, M) := \min \left\{t : \sup_{\eta \in \mathcal{P}_{M}(\mu_{v|A_{j}})} \norm{\int_{A_{j}}\eta_{|A_{j}}(dx)K_{v|A_{j}}^{t}(x, \cdot) - \mu_{v|A_{j}}(\cdot)}_{\text{TV}} \leq \epsilon   \right\}
\end{align}
Assuming that the initial distribution is $M$ warm is common among MCMC mixing time results (see e.g. \cite{yuansi,lovasz_vempala_LC,keru}); however, choosing such an initial distribution can be difficult in practice. 

\subsubsection{Locally Warm Distributions}
In later sections we will require a stronger notion of warmness than the standard definition above.  We call this stronger version {\it local warmness}: 
\begin{definition}\label{def:locally warm}
A distribution $\nu$ is \emph{locally $M$-warm} with respect to a distribution $\mu$ and partition $\mathcal{A} = \{A_{j}: j=1,\ldots,p\}$ if,
\begin{align*}
    \sup_{B \subset A_{j}} \frac{\nu_{|A_{j}}(B)}{\mu_{|A_{j}}(B)} \leq M 
\end{align*}
\end{definition}
\noindent Note that this is a stronger notion of warmness since it requires that $\tilde{\mu}_{v}$ assign sufficiently large probability to each partition element. Local warmness is useful when considering potentially multimodal target distributions, because warmness of an initial distribution by the standard definition is insufficient to guarantee rapid mixing on multimodal distributions. For example, consider the Gaussian mixture $\mu(x) = \frac{1}{2}\mathcal{N}(x ;-\mathbf{1}_{d},I_{d}) + \frac{1}{2}\mathcal{N}(x ;\mathbf{1}_{d},\sigma^2I_{d})$ studied by \cite{woodard_2}. The initial distribution $\nu(x) = \mathcal{N}(x ; -\mathbf{1}_{d}, I_{d})$ provides a $2$-warm start but, as the authors show, a random-walk Metropolis-Hastings algorithm still produces a torpidly mixing chain.
A key step in our analysis will be to show that the resampling step (2a) of the SMC algorithm produces a distribution $\tilde{\mu}$ that is locally warm for arbitrary partitions.
\subsection{Coupling Constructions}
\label{coupling_overview}
The proof of Lemma~\ref{coupling} involves the construction of coupled random variables. We briefly introduce some notation for this here but defer the detailed construction to Appendix~\ref{appendix_a}. Our analysis constructs coupled pairs of random variables $(X^{i}_{v}, \bar{X}^{i}_{v})$. The pairs are conditionally independent given $\tilde{X}^{i}_{v}$ and satisfy for $i = 1,\ldots,N$ and $j = 1,\ldots,p$,
\begin{enumerate}[label=(\alph*)]
    \item $\Prob(X^{i}_{v} \in B \mid \tilde{X}^{i}_{v} \in A_{j}) = \hat{\mu}_{v|A_{j}}(B)$
    \item $\Prob(\bar{X}^{i}_{v} \in B \mid \tilde{X}^{i}_{v} \in A_{j}) = \mu_{v|A_{j}}(B)$
    \item $\Prob(X^{i}_{v} \neq \bar{X}^{i}_{v} \mid \tilde{X}^{i}_{v} \in A_{j}) = ||\hat{\mu}_{v | A_{j}}(\cdot) - \mu_{v | A_{j}}(\cdot)||_{\text{TV}}$
\end{enumerate}
While the $X_v^i$'s represent the  particles produced by the SMC algorithm as described above, the coupled random variables $\bar{X}^{1:N}_{v} := (\bar{X}^{1}_{v}, \ldots, \bar{X}^{N}_{v})$ are simply a theoretical tool for our analysis and are not constructed in practice. Note that the construction preserves $\mathcal{L}(X^{i}_{v})$ and so the behavior of the algorithm given in Section \ref{smc} is unchanged (see Section \ref{appendix_a}). 

\subsection{Probability space and Marginal Distributions}
Going forward, it should be understood that expectations are taken with respect to the joint probability measure defined by (\ref{joint_intial})-(\ref{joint_mutate}) and the coupling constructions up to time $V$.  We simply write a generic $\Prob$ (and similarly for $\E$) to denote this joint probability measure. The corresponding state space is $\mathcal{X}^{N(3V + 1)}$ with product $\sigma$-field $\mathcal{B}^{N(3V + 1)}$. Our analysis will focus on the marginal distribution of individual  resampled, mutated, or coupled particles. We denote these marginal distributions as 
\begin{align*}
\tilde{\mu}_{v}(B) & := \Prob(\tilde{X}^{i}_{v} \in B)\\
\hat{\mu}_{v}(B) & := \Prob(X^{i}_{v} \in B) \\
\bar{\mu}_{v}(B) & := \Prob(\bar{X}^{i}_{v} \in B),
\end{align*}
which are identical for $i=1,\ldots,N$ by symmetry. Similarly, conditional distributions are denoted by $\tilde{\mu}(B \mid H) = \Prob(\tilde{X}^{i}_{v} \in B \mid H)$ for $B \subset \mathcal{X}$ and $H \in \mathcal{B}^{N(3V + 1)}$. In addition, we will use the convenient shorthand  $\tilde{\mu}_{v|A_{j}}$, $\hat{\mu}_{v|A_{j}}$, and $\bar{\mu}_{v|A_{j}}$ to denote the corresponding restricted distributions, e.g. $\tilde{\mu}_{v|A_{j}}(B) = \tilde{\mu}_{v}(B \mid \tilde{X}_{v}^{i} \in A_{j})$, and the mixed notation $\tilde{\mu}_{v|A_{j}}(B \mid H) = \Prob(\tilde{X}^{i}_{v} \in B \mid \tilde{X}^{i}_{v} \in A_{j}, H)$ for restrictions of conditional distributions. We note that $\bar{\mu}_{v} \neq \mu_{v}$ in general (see Appendix~\ref{appendix_a} for details). Finally, we define the filtration $\mathcal{F}_{v} := \sigma(\{X^{1:N}_{k}, \tilde{X}^{1:N}_{k}\}^{v}_{k=1})$ with $\mathcal{F}_{0} = \sigma(X^{1:N}_{0})$ and denote the probability of resampling into $A_{j}$ as
\begin{align*}
    \hat{P}^{j}_{v} := \tilde{\mu}_{v}(A_{j} \mid \mathcal{F}_{v-1}) = \frac{\hat{w}^{j}_{v}}{\sum^{p}_{r=1} \hat{w}^{r}_{v}},
\end{align*}
where $\hat{w}^{j}_{v}$ is defined in Section~\ref{smc}. That is, $\{\hat{P}^{j}_{v}\}$ is a stochastic processes adapted to the filtration $\{\mathcal{F}_{v}\}$.  

\section{Approach}
Given some $\Prob$-measurable function $\abs{f} \leq 1$, our goal is to show that the SMC estimator $\hat{f} = \frac{1}{N}\sum_{i=1}^N f(X_V^i)$
approximates $E_\pi[f]$ to high accuracy with high probability. The challenge is to do so without requiring $X_V^i \sim \pi$. Observe that 
\begin{align*} \E_{\pi}[f] = \sum^{p}_{j=1} \E_{\pi_{\mid A_{j}}}[f] \pi(A_{j})
\quad \text{ while } \quad
%
\E[\hat{f}] =
\sum^{p}_{j=1} \E_{\hat{\mu}_{V \mid A_{j}}}[\hat{f}_{j}] \, \hat{\mu}_{V}(A_{j})
\end{align*}
where  $\hat{f}_{j} = \frac{1}{N}\sum^{N}_{i=1} f(X^{i}_{V}) \mathbbm{1}_{A_{j}}(X^{i}_{V})$. 
%
Consequently, we must control two sources of error: the deviation of $\hat{f}$
from its expectation $|\hat{f} - \E[\hat{f}]|$, and the error in the expectation itself (bias) $|\E[\hat{f}] - \E_\pi[f]|$. 
Controlling the latter source of error is the more delicate of the two. To do so, we will define a sequence of events conditional upon which $\hat{\mu}_{v\mid A_j} \approx \mu_{v\mid A_j}$
and $\hat{\mu}_v(A_j) \approx \mu_v(A_j)$, and then show that this sequence of events occurs with sufficiently high probability. This requires choosing $N$ sufficiently large to sequentially control the bias introduced by resampling.  Previously, \citet{marion}  control this by constructing a coupling to \textit{i.i.d.} particles and invoking concentration properties thereof. The main difficulty here is that, in the absence of assumptions of global mixing, the coupling construction provides particles which are neither independent nor have the correct distribution ($\mathcal{L}(\bar{X}^{i}_{v}) \neq \mu_{v}$). As a result, establishing concentration properties involving $\bar{X}^{1:N}_{v}$ (and consequently $X^{1:N}_{v}$) is non-trivial.

For $j = 1,\ldots,p$ and $v = 1,\ldots,V$, define the following events on the probability space $(\mathcal{X}^{N(3V+1)},\mathcal{B}^{N(3V + 1)}, \Prob)$:
\begin{align}
\mathcal{C}^{j}_{1}(i) &= \left\{ \abs{\bar{w}^{j}_{1} - \frac{z_{1}}{z_{0}} \mu_{1}(A_{j})} \leq \lambda \cdot \frac{z_{1}}{z_{0}} \mu_{1}(A_{j}) \right\} \label{initial} \\
\mathcal{C}^{j}_{v}(i) &= \left\{\abs{\bar{w}^{j}_{v} - \E[\bar{w}^{j}_{v} \mid \mathcal{F}_{v-2}]} \leq \lambda \cdot \E[\bar{w}^{j}_{v} \mid \mathcal{F}_{v-2}] \right\}
\label{conc} \text{ for } v \geq 2 \\
\mathcal{C}_{v}(ii) &= \left\{\bar{X}^{1:N}_{v} = X^{1:N}_{v} \right\}, \label{coupling_event}
\end{align}
where $\bar{w}^{j}_{v} = N^{-1} \sum^{N}_{i=1} w_{v}(\bar{X}^{i}_{v-1})\mathbbm{1}_{A_j}(\bar{X}_{v-1}^{i})$, and $\lambda = \frac{\epsilon}{24V}$ where $\epsilon \in (0,\frac{1}{2})$ is the error tolerance chosen by the user in the application of  Theorem~\ref{thm:maintheorem}. The event $\mathcal{C}_{v}^j(i)$ represents the successful approximation by $\bar{w}^{j}_{k}$ of its conditional mean to within desired relative error bound $\lambda$. 
Note that $ \E[\bar{w}^{j}_{v} | \mathcal{F}_{v-2}]$ is a random variable; the event 
$\mathcal{C}^{j}_{v}(i)$ 
corresponds to $\bar{w}^{j}_{v}$ being `sandwiched' between two other random variables. $\mathcal{C}_v(ii)$ is the simultaneous coupling event for all particles at step $k$. 
Our approach is to show that \textit{all} of the events $\mathcal{C}_v^j(i)$, $\mathcal{C}_v^j(ii)$, $j=1,\ldots,p$, $k=1,\ldots,V$, 
occur simultaneously with high probability, and that conditional on their simultaneous occurrence we have  $\hat{\mu}_{v\mid A_j} \approx \mu_{v\mid A_j}$ and $\hat{\mu}_v(A_j) \approx \mu_v(A_j)$ for all $v=1,\ldots,V$.

In Section~\ref{sec: results cond. on E}, we show that simultaneous occurrence of these events ensures $\hat{\mu}_{v\mid A_j} \approx \mu_{v\mid A_j}$ and $\hat{\mu}_v(A_j) \approx \mu_v(A_j)$, and hence that $\sum^{p}_{r=1}\E[\hat{w}^{r}_{v}]$ closely approximates $z_{v+1} / z_{v}$.
In Section~\ref{sec: induction argument} we show that all of the events $\mathcal{C}_v^j(i)$, $\mathcal{C}_v^j(ii)$, $j=1,\ldots,p$, $k=1,\ldots,V$, 
occur simultaneously with high probability. We will do so by induction on $k$. In particular, 
let $ \mathcal{C}^{1:p}_{v}(i) = \cap^{p}_{r=1}  \mathcal{C}^{r}_{v}(i)$. For each $V$ and for a chosen $\lambda$ we will show that the probability of the following event is large:
\begin{align*}
\mathcal{E}_{v} := 
\cap^{v}_{k=1}
\{ \mathcal{C}^{1:p}_{k}(i) \cap \mathcal{C}_{k}(ii) \} 
\end{align*}
$\mathcal{E}_v$ is the joint event that \textit{all} (\ref{initial}), (\ref{conc}), and (\ref{coupling_event}) occur for steps $1,\ldots,v$. It will also be convenient to define the events $\mathcal{G}_{v} := \mathcal{E}_{v-1} \cap \mathcal{C}^{1:p}_{v}(i)$ for $v \geq 2$ and $\mathcal{G}_{1} = \mathcal{C}^{1:p}_{1}(i)$ so that $\mathcal{E}_{v} = \mathcal{G}_{v} \cap  \mathcal{C}_{v}(ii)$.

\section{Results}\label{results}
Our main result for the algorithm in Section \ref{smc} is stated in the following Theorem. Supporting lemmas used in the proof of  Theorem~\ref{thm:maintheorem} are given in Section~\ref{sec: results cond. on E} and Section~\ref{sec: induction argument}.
\begin{theorem}\label{thm:maintheorem}
Suppose $|f| \leq 1$ and (\ref{assumption: density ratios}) holds. Let $\epsilon \in (0,\frac{1}{2})$ and choose 
\begin{enumerate}
    \item  $ N > \frac{1}{\epsilon^{2}} \cdot  \max\left\{3456 \cdot  \left(\frac{VWZ}{\mu^{\star}} \right)^{2} \cdot \log(\frac{64 Vp}{ \mu^{\star}}), \ p^{2} \cdot  \log(1024  p^{2})   \right\}$
    \item $t > \max\limits_{j=1,\ldots,p} \max\limits_{v=1,\ldots,V} \tau_{v,j}( \frac{\mu^{\star}}{16  N  V}, 7)$,
\end{enumerate}
Then with probability at least $3/4$,
\begin{align*}
     \left|\frac{1}{N}\sum^{N}_{i=1}f(X^{i}_{V})  - \frac{(1+\frac{\epsilon^{2}}{16})}{(1-\frac{\epsilon^{2}}{16})}\E_{\pi}[f]\right| \leq \epsilon.
\end{align*}
\end{theorem}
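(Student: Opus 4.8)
The plan follows the decomposition outlined in Section~\ref{problem_setup}: control the deviation $\abs{\hat f - \E[\hat f\mid\mathcal{E}_V]}$ of the estimator $\hat f = \frac1N\sum_{i=1}^N f(X_V^i)$ from its conditional mean and the bias $\abs{\E[\hat f\mid\mathcal{E}_V] - \tfrac{1+\epsilon^2/16}{1-\epsilon^2/16}\E_\pi[f]}$ separately, both on the good event $\mathcal{E}_V=\cap_{k=1}^V\{\mathcal{C}^{1:p}_k(i)\cap\mathcal{C}_k(ii)\}$, and then show $\Prob(\mathcal{E}_V)\ge 3/4$ up to lower-order terms, so that the event on which $\mathcal{E}_V$ holds and the deviation is small has probability at least $3/4$. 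The bias is the delicate piece; it is handled by the lemmas of Section~\ref{sec: results cond. on E}, which show that conditioning on $\mathcal{E}_V$ forces, at every step $v$ and every partition element $A_j$, the restricted marginal $\hat\mu_{v\mid A_j}$ to lie within total-variation distance $O(\epsilon/V)$ of $\mu_{v\mid A_j}$ and the aggregate $\sum_r\E[\hat w_v^r]$ to lie within relative error $O(\lambda V)=O(\epsilon)$ of the true normalizing-constant ratio, so that $\hat\mu_V(A_j)$ matches $\mu_V(A_j)=\pi(A_j)$ up to a multiplicative factor $\tfrac{1\pm O(\epsilon^2/16)}{1\mp O(\epsilon^2/16)}$. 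Substituting these into $\E_\pi[f]=\sum_{j=1}^p\pi(A_j)\E_{\pi_{\mid A_j}}[f]$ and the matching expansion of $\E[\hat f\mid\mathcal{E}_V]$ through $\hat f_j$, and using $\abs f\le1$ term by term, yields exactly the distortion in the statement plus an additive $O(\epsilon)$.

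For the deviation, I would condition on the filtration $\mathcal{F}_{V-1}$ and the resampling at step $V$ and invoke the coupling of Lemma~\ref{coupling}. On $\mathcal{C}_V(ii)$ we have $X_V^{1:N}=\bar X_V^{1:N}$, and conditionally on the resampled indices $\xi_V^{1:N}$ the coupled particles $\bar X_V^1,\dots,\bar X_V^N$ are independent with $\bar X_V^i\sim\mu_{V\mid A_{\xi_V^i}}$; hence $\hat f$ is an average of independent $[-1,1]$-valued summands, and Hoeffding's inequality gives $\abs{\hat f - \tfrac1N\sum_i\E_{\mu_{V\mid A_{\xi_V^i}}}[f]}\le O(\epsilon)$ off an event of small probability once $N\gtrsim\epsilon^{-2}p^2\log(p^2)$. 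A second, analogous concentration of the empirical partition proportions $\tfrac1N\#\{i:\xi_V^i=j\}$ around $\hat P_V^j$, together with the weight-concentration events $\mathcal{C}_V^{1:p}(i)$ (which give $\hat P_V^j\approx\mu_V(A_j)$), then shows $\tfrac1N\sum_i\E_{\mu_{V\mid A_{\xi_V^i}}}[f]$ is within $O(\epsilon)$ of $\sum_j\mu_V(A_j)\E_{\mu_{V\mid A_j}}[f]=\E_\pi[f]$; this secondary concentration and the surrounding union bound account for the $p^2\log(1024p^2)$ term in the lower bound on $N$. Combining the two bounds with $\Prob(\mathcal{E}_V)\ge3/4+o(1)$ gives the theorem.

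The heart of the argument — and the step I expect to be the main obstacle — is the induction of Section~\ref{sec: induction argument} establishing that $\Prob(\mathcal{E}_V)$ is close to $1$. One proceeds by induction on $v$, passing from $\mathcal{E}_{v-1}$ first to $\mathcal{G}_v=\mathcal{E}_{v-1}\cap\mathcal{C}^{1:p}_v(i)$ and then to $\mathcal{E}_v=\mathcal{G}_v\cap\mathcal{C}_v(ii)$. For the first passage one shows the weight-concentration events $\mathcal{C}^j_v(i)$ hold with high probability: conditional on $\mathcal{F}_{v-2}$, $\bar w_v^j$ is an average of $N$ i.i.d.\ $[0,W]$-valued summands (by exchangeability) whose conditional mean is, on the earlier good events, of order at least $\mu^\star/Z$, so a Hoeffding bound at relative tolerance $\lambda=\epsilon/(24V)$ applies — this is what forces the $(VWZ/\mu^\star)^2\log(64Vp/\mu^\star)$ term in the lower bound on $N$. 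For the second passage one shows (a) on $\mathcal{G}_v$ the resampled law $\tilde\mu_v$ is locally $7$-warm with respect to $\mu_v$ in the sense of Definition~\ref{def:locally warm} — here the events $\mathcal{C}_v^{1:p}(i)$ are essential, since they keep each resampling probability $\hat P_v^j$ bounded away from $0$, so that no partition element is starved and the local-warmness constant stays $O(1)$; and (b) running the restricted kernel \eqref{Eqn: restricted kernel} for $t\ge\max_{v,j}\tau_{v,j}(\tfrac{\mu^\star}{16NV},7)$ steps from such a start drives $\norm{\hat\mu_{v\mid A_j}-\mu_{v\mid A_j}}_{\mathrm{TV}}\le\tfrac{\mu^\star}{16NV}$, so the maximal coupling of Lemma~\ref{coupling} realizes $\mathcal{C}_v(ii)$ simultaneously for all $N$ particles with failure probability $\le N\cdot\tfrac{\mu^\star}{16NV}=\tfrac{\mu^\star}{16V}$ per step. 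A union bound over $v=1,\dots,V$ and $j=1,\dots,p$ closes the induction. The genuinely delicate point is that the two threads — approximation ($\hat\mu_{v\mid A_j}\approx\mu_{v\mid A_j}$, $\hat\mu_v(A_j)\approx\mu_v(A_j)$) and concentration (the $\mathcal{C}_v^j(i)$ events) — are mutually dependent and must be propagated together, since the coupling at step $v$ requires local warmness of $\tilde\mu_v$, which requires the approximation at step $v-1$, which itself required the coupling and concentration there; keeping the total-variation error accumulated over all $V$ steps and $N$ particles (together with the $pV$-fold union bound) below a universal constant is precisely what pins down the stated $N$ and $t$, and is why the discrepancy $\bar\mu_v\neq\mu_v$ must be tracked explicitly rather than, as in \cite{marion}, absorbed automatically by global mixing.
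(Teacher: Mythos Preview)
Your overall strategy matches the paper's: establish $\Prob(\mathcal{E}_V)\ge 7/8$ via the induction of Section~\ref{sec: induction argument}, then on $\mathcal{E}_V$ split the error into a concentration piece and a bias piece (the paper's Lemma~\ref{lem:CondExpectBound}), and union-bound. However, two related technical points in your concentration arguments need correction.

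First, in the induction step you write that ``conditional on $\mathcal{F}_{v-2}$, $\bar w_v^j$ is an average of $N$ i.i.d.\ $[0,W]$-valued summands (by exchangeability).'' This is false: the $\bar X_{v-1}^i$ are exchangeable given $\mathcal{F}_{v-2}$ but \emph{not} independent, because they share dependence through the common resampling at step $v-1$. They are conditionally independent only given the finer information $\tilde X_{v-1}^{1:N}$ (Lemma~\ref{cond_ind_statements}). The paper handles this via a two-layer Hoeffding bound (Lemma~\ref{conc_bound} in Appendix~\ref{appendix_a}): first concentrate $\bar w_v^j$ around $\E[\bar w_v^j\mid \tilde X_{v-1}^{1:N}]$ using conditional independence given $\tilde X_{v-1}^{1:N}$, then concentrate $\E[\bar w_v^j\mid \tilde X_{v-1}^{1:N}]$ --- itself an average of bounded functions of the $\tilde X_{v-1}^i$, which \emph{are} conditionally i.i.d.\ given $\mathcal{F}_{v-2}$ --- around $\E[\bar w_v^j\mid\mathcal{F}_{v-2}]$. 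Both layers cost a factor of $2$, but the rate is unchanged and the constants in the theorem absorb it.

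Second, and for the same reason, your framing of the final-step deviation as controlling $\abs{\hat f-\E[\hat f\mid\mathcal{E}_V]}$ is problematic: conditioning on the event $\mathcal{E}_V$ destroys the conditional independence you need for Hoeffding. The paper instead defines \emph{unconditional} concentration events $\mathcal{D}^j=\bigl\{\abs{\bar f^j-\E[\bar f^j\mid\mathcal{F}_{V-1}]}\le\epsilon/p\bigr\}$, bounds $\Prob\bigl(\cup_j(\mathcal{D}^j)^c\bigr)\le 1/8$ via the same two-layer Lemma~\ref{conc_bound}, shows the desired inequality holds \emph{deterministically} on $\mathcal{E}_V\cap\mathcal{D}^1\cap\cdots\cap\mathcal{D}^p$ using Lemma~\ref{lem:CondExpectBound}, and then union-bounds the complements. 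Your later remark about conditioning on $\xi_V^{1:N}$ is the right instinct; carried through consistently it reproduces exactly this structure.
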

\noindent 
Note that Theorem~\ref{thm:maintheorem} provides bounds on the total number of Markov transition steps $NVt$ required to approximate $\E_{\pi}[f]$ for an \textit{arbitrary} bounded function $f(x)$. For specific choices of $f$, improved bounds may be available. 

\subsection{Results Over Restricted Space}
\label{sec: results cond. on E}

We establish three key lemmas.  The first shows that $\hat{P}^{j}_{v}$ closely approximates $\mu_{v}(A_{j})$ over $\mathcal{G}_{v} \subset \mathcal{E}_{v}$. The second shows that $\sum^{p}_{r=1}\hat{w}^{r}_{v}$ closely approximates $z_{v+1} / z_{v}$ conditional on $\mathcal{G}_{v}$.
The third shows that the distribution $\tilde{\mu}_{v}(\cdot \mid \mathcal{G}_{v})$ is locally 7-warm (see Definition~\ref{def:locally warm}) with respect to $\mathcal{A}$. Note that all of these results describe what happens \textit{conditional} on $\mathcal{G}_{v}$ holding. In Section~\ref{sec: induction argument}, we use the results developed here to show also that $\mathcal{G}_{v}$ occurs with high probability. The results in these two sections are then combined to obtain the proof of  Theorem~\ref{thm:maintheorem}.
\begin{lemma}
\label{controlling_resampling}
Let $\phi(\lambda) = \phi = \frac{1+\lambda}{1-\lambda}$. For $v = 1,\ldots,V$,
\begin{align*}
   \Prob\left(\phi^{-v} \mu_{v}(A_{j})  \leq 
   \hat{P}_v^j
   \leq \phi^{v} \mu_{v}(A_{j})  \mid \mathcal{G}_{v} \right) = 1.
   \end{align*}
   In particular, since $\tilde{\mu}(A_j \mid \Gv) = \Prob(\tilde{X}^i_{v} \in A_{j} \mid \Gv) = 
    \E\left[\hat{P}^j_v \mid \Gv \right]$ we have:
    \begin{align*}
        \phi^{-v} \mu_{v}(A_{j}) \leq \tilde{\mu}_v(A_{j} \mid \Gv) \leq \phi^v \mu_v(A_j).
    \end{align*}
\end{lemma}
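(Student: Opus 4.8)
The plan is to prove the bound on $\hat{P}^j_v$ by induction on $v$, exploiting the recursive structure of the importance weights together with the defining inequalities of the events $\mathcal{C}^j_k(i)$. Recall that $\hat{P}^j_v = \hat{w}^j_v / \sum_{r=1}^p \hat{w}^r_v$, and that on $\mathcal{G}_v$ all of the events $\mathcal{C}^{1:p}_k(i)$ for $k=1,\ldots,v$ hold; moreover, since $\mathcal{G}_v \subset \mathcal{E}_{v-1}$, the coupling events $\mathcal{C}_k(ii)$ hold for $k \le v-1$, so that $\bar X^{1:N}_{v-1} = X^{1:N}_{v-1}$ and hence $\bar w^j_v = \hat w^j_v$ on $\mathcal{G}_v$. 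This last identity is the bridge that lets us transfer the concentration statements $\mathcal{C}^j_k(i)$, which are phrased in terms of the coupled weights $\bar w^j_k$, into statements about the algorithmic weights $\hat w^j_k$.

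First I would handle the base case $v=1$: on $\mathcal{C}^j_1(i)$ we have directly $\bar w^j_1 \in [(1-\lambda), (1+\lambda)]\cdot \frac{z_1}{z_0}\mu_1(A_j)$, and since this holds for every $j$ simultaneously on $\mathcal{G}_1 = \mathcal{C}^{1:p}_1(i)$, dividing $\bar w^j_1$ by $\sum_r \bar w^r_1$ the common factor $\frac{z_1}{z_0}$ cancels and the worst-case ratios give $\hat P^j_1 \in [\phi^{-1}\mu_1(A_j), \phi\,\mu_1(A_j)]$ with $\phi = \frac{1+\lambda}{1-\lambda}$ (using that $\sum_r \mu_1(A_r) = 1$). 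For the inductive step, I would compute $\E[\bar w^j_v \mid \mathcal{F}_{v-2}]$ explicitly: conditionally on $\mathcal{F}_{v-2}$, the particles $\bar X^i_{v-1}$ are generated by resampling (step 2a at time $v-1$) followed by mutation through a restricted, $\mu_{v-1\mid A}$-invariant kernel; crucially, mutation preserves the partition membership, so $\mathbbm{1}_{A_j}(\bar X^i_{v-1})$ is determined at the resampling stage, and one gets $\E[\bar w^j_v \mid \mathcal F_{v-2}] = \frac{z_v}{z_{v-1}}\cdot \mu_{v\mid A_j}\big(\text{something}\big)\cdot \hat P^j_{v-1}$ — more precisely it factors into $\frac{z_v}{z_{v-1}}\mu_v(A_j)$ times a ratio of the form $\tilde\mu_{v-1}(A_j\mid\mathcal F_{v-2})/\mu_{v-1}(A_j)$ times the correction from $\bar\mu_{v-1\mid A_j}$ versus $\mu_{v-1\mid A_j}$, which by the coupling on $\mathcal{C}_{v-1}(ii)$ is exactly $\mu_{v-1\mid A_j}$. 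Applying the inductive hypothesis $\hat P^j_{v-1} \in [\phi^{-(v-1)}\mu_{v-1}(A_j), \phi^{v-1}\mu_{v-1}(A_j)]$ bounds $\E[\bar w^j_v \mid \mathcal F_{v-2}]$ within a factor $\phi^{v-1}$ of $\frac{z_v}{z_{v-1}}\mu_v(A_j)$, and then $\mathcal{C}^j_v(i)$ contributes one more factor $\phi$, giving $\bar w^j_v \in [\phi^{-v}, \phi^v]\cdot\frac{z_v}{z_{v-1}}\mu_v(A_j)$ up to the $z$-ratio. Dividing by the sum over $r$ cancels $\frac{z_v}{z_{v-1}}$ and, since $\sum_r \mu_v(A_r)=1$, the numerator/denominator worst cases yield $\phi^{-v}\mu_v(A_j) \le \hat P^j_v \le \phi^v\mu_v(A_j)$ as claimed. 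The second display then follows by taking conditional expectation given $\mathcal{G}_v$: $\tilde\mu_v(A_j\mid\mathcal G_v) = \E[\hat P^j_v \mid \mathcal G_v]$ because $\tilde\mu_v(A_j\mid\mathcal F_{v-1}) = \hat P^j_v$ and $\mathcal{G}_v \in \mathcal{F}_{v-1}$, and the pointwise bounds pass through the expectation.

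The main obstacle I anticipate is the clean computation of $\E[\bar w^j_v \mid \mathcal{F}_{v-2}]$ and verifying that it telescopes correctly against the inductive hypothesis — in particular, carefully tracking which $\sigma$-algebra each conditioning is with respect to, confirming that the restricted mutation kernel leaves partition indicators invariant (so that the indicator $\mathbbm{1}_{A_j}(\bar X^i_{v-1})$ can be pulled out of the mutation expectation), and checking that the coupling event $\mathcal{C}_{v-1}(ii)$ — which is part of $\mathcal{G}_v$ via $\mathcal{E}_{v-1}$ — indeed forces $\bar\mu_{v-1\mid A_j}$ to coincide with $\mu_{v-1\mid A_j}$ on the relevant event, so that no spurious distortion of the weights appears. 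Everything else is bookkeeping with the factors of $\phi$; the exponent $v$ arises simply because each of the $v$ steps contributes one multiplicative factor of $\phi$.
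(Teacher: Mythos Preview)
Your overall strategy is sound and very close to the paper's, but the inductive bookkeeping in the final step is wrong and does not deliver the exponent $\phi^{v}$ that the lemma claims. Concretely: from your inductive hypothesis $\hat P^{j}_{v-1}\in[\phi^{-(v-1)},\phi^{v-1}]\,\mu_{v-1}(A_j)$ and the identity $\E[\bar w^{j}_{v}\mid\mathcal F_{v-2}]=\frac{z_v}{z_{v-1}}\frac{\mu_v(A_j)}{\mu_{v-1}(A_j)}\hat P^{j}_{v-1}$, one gets
\[
\bar w^{j}_{v}\in\bigl[(1-\lambda)\phi^{-(v-1)},\,(1+\lambda)\phi^{v-1}\bigr]\cdot\tfrac{z_v}{z_{v-1}}\mu_v(A_j).
\]
Now when you form $\hat P^{j}_{v}=\bar w^{j}_{v}/\sum_r\bar w^{r}_{v}$ and bound numerator and denominator in opposite directions, you pick up
\[
\hat P^{j}_{v}\le \frac{(1+\lambda)\phi^{v-1}}{(1-\lambda)\phi^{-(v-1)}}\,\mu_v(A_j)=\phi^{2v-1}\mu_v(A_j),
\]
not $\phi^{v}\mu_v(A_j)$. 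Your sentence ``the numerator/denominator worst cases yield $\phi^{-v}\mu_v(A_j)\le\hat P^{j}_{v}\le\phi^{v}\mu_v(A_j)$'' is therefore an error: bounding $\bar w^{j}_{v}$ individually and then dividing forces the $\phi$-factors in numerator and denominator to compound rather than cancel.

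The paper avoids this by never leaving the ratio. It applies the concentration event $\mathcal C^{1:p}_{v}(i)$ directly to the quotient,
\[
\frac{\bar w^{j}_{v}}{\sum_r\bar w^{r}_{v}}
\le \phi\cdot\frac{\E[\bar w^{j}_{v}\mid\mathcal F_{v-2}]}{\sum_r\E[\bar w^{r}_{v}\mid\mathcal F_{v-2}]}
= \phi\cdot\frac{\tfrac{\mu_v(A_j)}{\mu_{v-1}(A_j)}\hat w^{j}_{v-1}}{\sum_r\tfrac{\mu_v(A_r)}{\mu_{v-1}(A_r)}\hat w^{r}_{v-1}},
\]
where the common factor $z_v/z_{v-1}$ cancels \emph{exactly}. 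The right-hand side has the same structure as the left (a weighted ratio of the $\hat w^{\cdot}_{v-1}$), so one may repeat the step; the ratios $\mu_v(A_j)/\mu_{v-k}(A_j)$ telescope, and each unrolling contributes precisely one factor of $\phi$, terminating at the base event $\mathcal C^{1:p}_{1}(i)$. To repair your argument you must either carry out this ratio-level recursion, or strengthen the inductive invariant so that the errors in numerator and denominator are correlated rather than bounded independently.
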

\begin{proof}(Lemma~\ref{controlling_resampling})
First notice that given $\mathcal{G}_v$ we have 
\begin{align}
\hat{P}_v^j = \frac{\hat{w}^{j}_{v}}{\sum^{p}_{r=1} \hat{w}^{r}_{v}} 
=  \frac{\bar{w}^{j}_{v}}{\sum^{p}_{r=1} \bar{w}^{r}_{v}} 
  & \leq \frac{(1+\lambda)\E[\bar{w}^{j}_{v}\mid \mathcal{F}_{v-2}]}{(1-\lambda)\sum^{p}_{r=1} \E[\bar{w}^{r}_{v} \mid \mathcal{F}_{v-2}]}
  \nonumber \\ 
  & = \phi  \frac{\E[\bar{w}^{j}_{v} \mid \mathcal{F}_{v-2}]}{\sum^{p}_{r=1} \E[\bar{w}^{r}_{v} \mid \mathcal{F}_{v-2}]}
\label{Eqn:prop1ineq1}
\end{align}
where the second equality follows by recalling that $\mathcal{C}_{v-1}(ii) = \{X_{v-1}^{1:N} = \bar{X}^{1:N}_{v-1} \}$ and  $\mathcal{C}_{v-1}(ii) \subset \mathcal{G}_{v}$, and the first inequality by recalling that $\mathcal{C}^{1:p}_{v}(i) \subset \mathcal{G}_{v}$
and $\mathcal{C}^{1:p}_{v}(i) =  \{(1-\lambda)\E[\bar{w}^{j}_{v} | \mathcal{F}_{v-2}] \leq \bar{w}^{j}_{v} \leq (1+\lambda)\E[\bar{w}^{j}_{v} | \mathcal{F}_{v-2}]  \}$. Lemma~\ref{cond_exp_equality} in Appendix~\ref{appendix_a} shows that  
\begin{align}
     \E[\bar{w}^{j}_{v} | \mathcal{F}_{v-2}] = \frac{z_{v}}{z_{v-1}} \cdot \frac{\mu_{v}(A_{j})}{\mu_{v-1}(A_{j})} \cdot \hat{P}^{j}_{v-1}
\label{Eqn:prop1ineq2}
\end{align}
Intuitively, this means that $\E[\bar{w}^{j}_{v} | \mathcal{F}_{v-2}]$ is the ratio of the normalizing constants of the restricted distributions $\mu_{v|A_{j}}$ and $\mu_{v-1|A_{j}}$
but weighted by the probability that a particle is resampled into $A_{j}$. We can think of the filtration as being summarized solely through $\hat{P}^{j}_{v-1}$ due to the conditional independence properties of the constructed particles (see Lemma~\ref{cond_ind} in Appendix~\ref{appendix_a}). 
Combining (\ref{Eqn:prop1ineq1}) and (\ref{Eqn:prop1ineq2}) and canceling, we have 
\begin{align}
   \frac{\hat{w}^{j}_{v}}{\sum^{p}_{r=1} \hat{w}^{r}_{v}} 
\leq  
\phi\frac{ \frac{\mu_{v}(A_{j})}{\mu_{v-1}(A_{j})}  \hat{P}^{j}_{v-1}}{\sum^{p}_{r=1}   \frac{\mu_{v}(A_{r})}{\mu_{v-1}(A_{r})}  \hat{P}^{r}_{v-1}}
= \phi\frac{ \frac{\mu_{v}(A_{j})}{\mu_{v-1}(A_{j})}   \hat{w}^{j}_{v-1}}{\sum^{p}_{r=1}   \frac{\mu_{v}(A_{r})}{\mu_{v-1}(A_{r})}  \hat{w}^{r}_{v-1}}
\label{Eqn:prop1ineq3}
\end{align}
Barring the constant terms, the right hand side of (\ref{Eqn:prop1ineq3}) is the same as the left hand side except the index has been decremented by one. 
Recall 
that $\mathcal{G}_{v}$
%
ensures that the events $\mathcal{C}_v^{1:p}(i)$ and $\mathcal{C}_{v-1}(ii)$ which we appealed to for step $v$ also hold for all previous steps $0,\ldots,v-1$. Thus given $\mathcal{G}_v$ the argument above can be repeated recursively to give 
\begin{align*}
   \frac{\hat{w}^{j}_{v}}{\sum^{p}_{r=1} \hat{w}^{r}_{v}} 
   \leq 
\phi\frac{ \frac{\mu_{v}(A_{j})}{\mu_{v-1}(A_{j})}   \hat{w}^{j}_{v-1}}{\sum^{p}_{r=1}   \frac{\mu_{v}(A_{r})}{\mu_{v-1}(A_{r})}  \hat{w}^{r}_{v-1}}
& \leq
    \phi^{2}\frac{ \frac{\mu_{v}(A_{j})}{\mu_{v-2}(A_{j})}  \hat{w}^{j}_{v-2}}{\sum^{p}_{r=1}   \frac{\mu_{v}(A_{r})}{\mu_{v-2}(A_{r})}  \hat{w}^{r}_{v-2}}
    \\
 \ldots    
    & \leq \phi^{v-1}\frac{ \frac{\mu_{v}(A_{j})}{\mu_{1}(A_{j})}  \hat{w}^{j}_{1}}{\sum^{p}_{r=1}   \frac{\mu_{v}(A_{r})}{\mu_{1}(A_{r})}  \hat{w}^{r}_{1}}
\end{align*}
Recall that $\mathcal{C}^{1:p}_{0}(i) \subset \mathcal{G}_{v}$ is the event that $\hat{w}^{j}_{1}$ 
closely approximates the normalizing constant of the restricted distribution $\mu_{1|A_{j}}$; plugging this in gives
\begin{align*}
\hat{P}_v^j =  \frac{\hat{w}^{j}_{v}}{\sum^{p}_{r=1} \hat{w}^{r}_{v}} 
  \leq
\phi^{v}\frac{ \frac{\mu_{v}(A_{j})}{\mu_{1}(A_{j})} \frac{z_{1} \mu_{1}(A_{j})}{z_{0}}}{\sum^{p}_{r=1}   \frac{\mu_{v}(A_{r})}{\mu_{1}(A_{r})}  \frac{z_{1} \mu_{1}(A_{r})}{z_{0}}}
    = \phi^{v} \mu_{v}(A_{j})
\end{align*}
Repeating the argument in the other direction (with $\phi^{-1}$ in place of $\phi$) yields the result. 
\end{proof}

\begin{remark}
Note that the same argument implies
\begin{align*}
   \Prob\left(\phi^{-v} \mu_{v}(A_{j})  \leq 
   \hat{P}_v^j
   \leq \phi^{v} \mu_{v}(A_{j})  \mid \mathcal{G}_{v'} \right) = 1
\end{align*}
for any $v' \geq v$. To see this, notice that $\mathcal{G}_{v} \subset \mathcal{G}_{v^{\prime}}$ and the above argument holds for any set containing $\mathcal{G}_{v}$.
\end{remark}
To establish local warmness of $\tilde{\mu}_v$, we will use the following lemma which shows that $\sum_{j}\hat{w}^{j}_{v}$ closely approximates $z_{v} / z_{v-1}$ over $\mathcal{G}_{v}$. This establishes that, within the set $\mathcal{G}_{v}$, the algorithm given in Section~\ref{smc} can be used to approximate $z_{v}$ for any $v=1,\ldots,V$ when the initial $z_{0}$ is known.
\begin{lemma}
\label{lemma: normalizing constant bound}
For $v=1,\ldots,V$ we have
\begin{align*}
    \Prob\left( \frac{1}{\phi^{v}} \frac{z_{v}}{z_{v-1}} \leq \sum^{p}_{r=1}\hat{w}^{r}_{v} \leq \phi^{v} \frac{z_{v}}{z_{v-1}} \Bigm| \mathcal{G}_{v} \right) = 1
\end{align*}
\end{lemma}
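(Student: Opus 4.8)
\textbf{Proof proposal for Lemma~\ref{lemma: normalizing constant bound}.} The plan is to run essentially the same chain of inequalities used in the proof of Lemma~\ref{controlling_resampling}, but applied to the \emph{numerator} rather than the ratio, so that the telescoping collapses to $\sum_r \mu_v(A_r)=1$. Throughout we condition on $\mathcal{G}_v$, and the only work is to verify that each event appealed to is contained in $\mathcal{G}_v$ and to dispatch the base case $v=1$ where the conditional-mean identity (\ref{Eqn:prop1ineq2}) degenerates.

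First I would reduce $\hat w$ to $\bar w$: since $\mathcal{C}_{v-1}(ii)\subset\mathcal{G}_v$ (with the convention that $\mathcal{C}_0(ii)$ holds identically, i.e.\ $\bar X_0^{1:N}=X_0^{1:N}$), we have $X_{v-1}^{1:N}=\bar X_{v-1}^{1:N}$ and hence $\hat w_v^r=\bar w_v^r$ for every $r$, so $\sum_{r=1}^p\hat w_v^r=\sum_{r=1}^p\bar w_v^r$. Next, because $\mathcal{C}_v^{1:p}(i)\subset\mathcal{G}_v$, for $v\geq 2$ each term is sandwiched as $(1-\lambda)\,\E[\bar w_v^r\mid\mathcal{F}_{v-2}]\leq\bar w_v^r\leq(1+\lambda)\,\E[\bar w_v^r\mid\mathcal{F}_{v-2}]$, while for $v=1$ the event $\mathcal{C}_1^r(i)$ gives directly $(1-\lambda)\tfrac{z_1}{z_0}\mu_1(A_r)\leq\bar w_1^r\leq(1+\lambda)\tfrac{z_1}{z_0}\mu_1(A_r)$.

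For $v\geq 2$ I would then substitute the identity $\E[\bar w_v^r\mid\mathcal{F}_{v-2}]=\tfrac{z_v}{z_{v-1}}\cdot\tfrac{\mu_v(A_r)}{\mu_{v-1}(A_r)}\cdot\hat P_{v-1}^r$ from Lemma~\ref{cond_exp_equality} (equation (\ref{Eqn:prop1ineq2})), sum over $r$, and factor out $\tfrac{z_v}{z_{v-1}}$. Applying the Remark following Lemma~\ref{controlling_resampling} at step $v-1$ with $v'=v\geq v-1$ gives, on $\mathcal{G}_v$, the bound $\phi^{-(v-1)}\mu_{v-1}(A_r)\leq\hat P_{v-1}^r\leq\phi^{v-1}\mu_{v-1}(A_r)$, which cancels the $\mu_{v-1}(A_r)$ factors and leaves $\sum_r\mu_v(A_r)=1$. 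Combining with the sandwich above and using $\phi^{-1}\leq 1-\lambda$ and $1+\lambda\leq\phi=\tfrac{1+\lambda}{1-\lambda}$, I get
\begin{align*}
\phi^{-v}\,\frac{z_v}{z_{v-1}}\;\leq\;(1-\lambda)\,\phi^{-(v-1)}\,\frac{z_v}{z_{v-1}}\;\leq\;\sum_{r=1}^p\hat w_v^r\;\leq\;(1+\lambda)\,\phi^{v-1}\,\frac{z_v}{z_{v-1}}\;\leq\;\phi^{v}\,\frac{z_v}{z_{v-1}}.
\end{align*}
For the base case $v=1$ I would simply sum the $\mathcal{C}_1^r(i)$ bounds over $r$, use $\sum_r\mu_1(A_r)=1$, and again invoke $\phi^{-1}\leq 1-\lambda$ and $1+\lambda\leq\phi$ to get $\phi^{-1}\tfrac{z_1}{z_0}\leq\sum_r\hat w_1^r\leq\phi^{1}\tfrac{z_1}{z_0}$. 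Since every inequality above holds with conditional probability one given $\mathcal{G}_v$, this is the claim.

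\textbf{Main obstacle.} There is no analytic difficulty here — the lemma is a companion/corollary of Lemma~\ref{controlling_resampling}. The only things to be careful about are bookkeeping: (i) confirming $\mathcal{C}_{v-1}(ii),\ \mathcal{C}_v^{1:p}(i)\subset\mathcal{G}_v$ and that the Remark's strengthening of Lemma~\ref{controlling_resampling} is legitimately applied on $\mathcal{G}_v$ at index $v-1$ (it is, since $\mathcal{G}_v\subset\mathcal{G}_{v-1}$); and (ii) treating $v=1$ separately because $\E[\bar w_1^r\mid\mathcal{F}_{-1}]$ is the deterministic quantity $\tfrac{z_1}{z_0}\mu_1(A_r)$ rather than a genuinely random conditional mean, so the substitution step is replaced by a direct appeal to the definition (\ref{initial}).
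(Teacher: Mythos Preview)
Your proposal is correct and follows essentially the same route as the paper's own proof: pass from $\hat w$ to $\bar w$ via $\mathcal{C}_{v-1}(ii)\subset\mathcal{G}_v$, apply the sandwich from $\mathcal{C}_v^{1:p}(i)$, substitute the conditional-mean identity (Lemma~\ref{cond_exp_equality}), and then control $\hat P_{v-1}^r$ on $\mathcal{G}_v$ using Lemma~\ref{controlling_resampling} so that the $\mu_{v-1}(A_r)$ factors cancel and $\sum_r\mu_v(A_r)=1$ collapses the sum. Your treatment is in fact slightly more careful than the paper's: you separate out the $v=1$ base case (where $\mathcal{C}_1^r(i)$ is defined by (\ref{initial}) rather than (\ref{conc})) and you explicitly note that conditioning on $\mathcal{G}_v$ while bounding $\hat P_{v-1}^r$ requires the Remark's extension (equivalently, the containment $\mathcal{G}_v\subset\mathcal{G}_{v-1}$), whereas the paper simply cites Lemma~\ref{controlling_resampling}.
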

\begin{proof}
Note that $\mathcal{C}_{v-1}(ii) \subset \Gv$ and so $\hat{w}^{j}_{v} = \bar{w}^{j}_{v}$ for $j=1,\ldots,p$ given $\Gv$. We then have
\begin{align*}
\sum_{j=1}^p \bar{w}_v^j \geq \frac{z_v}{z_{v-1}}\sum_{j=1}^p (1-\lambda) \frac{\mu_v(A_j)}{\mu_{v-1}(A_j)}\hat{P}_{v-1}^j
&\geq \frac{(1-\lambda)}{\phi^{v-1}}\frac{z_v}{z_{v-1}}\sum_{j=1}^p \mu_v(A_j) \\
&\geq \frac{1}{\phi^v}\cdot\frac{z_v}{z_{v-1}}
\end{align*}
The first inequality follows since $\mathcal{C}^{1:p}_{v}(i) \subset \Gv$ and by  Lemma~\ref{cond_exp_equality} (see Appendix~\ref{appendix_a}). The second inequality follows by Lemma~\ref{controlling_resampling}. An identical argument can be repeated in the other direction showing $\sum_{j=1}^p \bar{w}_v^j \leq \phi^{v} \frac{z_{v}}{z_{v-1}}$.
\end{proof}
We now show that $\tilde{\mu}_{v}(\cdot \mid \mathcal{G}_{v})$ is a locally 7-warm distribution with respect to $\mu_{v}$ and $\mathcal{A}$. To do this, we first state an inequality regarding $\phi^{v}$.
\begin{proposition}
\label{resampling_error}
Let $\phi(\lambda) = \phi = \frac{1+\lambda}{1-\lambda}$ and recall $\lambda \leq \frac{\epsilon}{6V}$
. Then $\phi^{v} <\phi^{2v} < \frac{1 + \epsilon}{1-\epsilon}$ for $v = 1,\ldots,V$,
\end{proposition}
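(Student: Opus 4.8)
The plan is to collapse the whole statement to the single scalar inequality $\phi^{2V} < \frac{1+\epsilon}{1-\epsilon}$ and then verify it by an elementary exponential estimate. First note that $\lambda>0$ forces $\phi = \frac{1+\lambda}{1-\lambda} > 1$, so $\phi^{v} = \phi^{v}\cdot 1 < \phi^{v}\cdot\phi^{v} = \phi^{2v}$ for every $v\ge 1$, and since $\phi>1$ the map $v\mapsto\phi^{2v}$ is increasing, so $\phi^{2v}\le\phi^{2V}$ for $v=1,\dots,V$. Hence it suffices to establish $\phi^{2V} < \frac{1+\epsilon}{1-\epsilon}$, which then yields the chain $\phi^{v}<\phi^{2v}\le\phi^{2V}<\frac{1+\epsilon}{1-\epsilon}$ for all $v\le V$.

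Next I would replace the powers by exponentials on both sides. Using $1+x\le e^{x}$ together with the elementary bound $-\log(1-x)=\sum_{k\ge1}x^{k}/k\le\sum_{k\ge1}x^{k}=\frac{x}{1-x}$ for $x\in[0,1)$, one gets
\[
\phi=\frac{1+\lambda}{1-\lambda}\le e^{\lambda}\,e^{\lambda/(1-\lambda)}=e^{\lambda(2-\lambda)/(1-\lambda)}\le e^{2\lambda/(1-\lambda)},
\qquad\text{so}\qquad \phi^{2V}\le\exp\!\Big(\tfrac{4V\lambda}{1-\lambda}\Big).
\]
For the right-hand side, since $\frac{d}{d\epsilon}\log\frac{1+\epsilon}{1-\epsilon}=\frac{2}{1-\epsilon^{2}}\ge 2$ and both sides vanish at $\epsilon=0$, we have $\log\frac{1+\epsilon}{1-\epsilon}\ge 2\epsilon$, i.e. $\frac{1+\epsilon}{1-\epsilon}\ge e^{2\epsilon}$. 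Thus the claim reduces to checking $\frac{4V\lambda}{1-\lambda}\le 2\epsilon$, equivalently $\frac{2V\lambda}{1-\lambda}\le\epsilon$ (with strict inequality in the final conclusion coming for free since $e^{2\epsilon}\le\frac{1+\epsilon}{1-\epsilon}$ and the exponent bound is not tight).

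Finally I would plug in $\lambda\le\frac{\epsilon}{6V}$ and use $\epsilon\in(0,\tfrac12)$ from Theorem~\ref{thm:maintheorem}, which gives $\lambda\le\frac{1}{12V}\le\frac{1}{12}$ and hence $1-\lambda\ge\frac{11}{12}$, so that
\[
\frac{2V\lambda}{1-\lambda}\le\frac{2V\cdot\frac{\epsilon}{6V}}{11/12}=\frac{4\epsilon}{11}<\epsilon,
\]
which finishes the argument. There is no genuine obstacle here — the statement is a routine one-variable inequality — so the only things to be careful about are (i) recording that $\phi>1$ so both the inequality $\phi^{v}<\phi^{2v}$ and the monotonicity in $v$ point the right way, and (ii) invoking $\epsilon<\tfrac12$ to make $\lambda$ small enough that the factor $(1-\lambda)^{-1}$ can be bounded crudely; any nearby constant in place of $\tfrac{1}{6V}$ would serve equally well and the $6$ versus the sharper $24$ used elsewhere just leaves extra slack.
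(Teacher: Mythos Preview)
Your proof is correct. Both your argument and the paper's are short elementary exponential estimates, but they organize the bounds differently. The paper handles numerator and denominator of $\phi^{2V}$ separately --- using $1+x\le e^{x}$ on $(1+\lambda)^{2V}$ and Bernoulli's inequality $(1-\lambda)^{2V}\ge 1-2V\lambda$ on the denominator --- to get $\phi^{2V}\le e^{\epsilon/3}/(1-\epsilon/3)$, and then converts back to a rational expression via $e^{x}\le 1+x+x^{2}$ before comparing with $\frac{1+\epsilon}{1-\epsilon}$ algebraically. You instead push everything into exponent space: you bound $\phi\le e^{2\lambda/(1-\lambda)}$ via the series for $-\log(1-\lambda)$, lower-bound the target by $\frac{1+\epsilon}{1-\epsilon}\ge e^{2\epsilon}$ via a derivative argument, and then just compare exponents. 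Your route arguably makes the available slack (the factor $4/11$) more transparent, while the paper's stays closer to the rational form of the target; both are equally valid and of the same level of difficulty.
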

\begin{proof}
\begin{align*}
    \phi^{2v} = \left(\frac{1 + \frac{\epsilon}{6V}}{1 - \frac{\epsilon}{6V}}\right)^{2v} \leq \frac{e^{\frac{\epsilon}{3}}}{1-\frac{\epsilon}{3}} \leq \frac{1 + \frac{\epsilon}{3} + \frac{\epsilon^{2}}{9}}{1 - \frac{\epsilon}{3}} \leq \frac{1 + \epsilon}{1-\epsilon},
\end{align*}
where we used the elementary inequalities $1 + x \leq e^{x}$, $(1 + x)^{p} \geq 1 + xp$ for $x > -1$ and $p > 1$, and $e^{x} \leq 1 + x + x^{2}$ for $x \in (0,1)$.
\end{proof}
\noindent Proposition \ref{resampling_error} shows that taking $\lambda$ to be $\mathcal{O}(V^{-1})$ suffices to ensure the bound in Lemma~\ref{controlling_resampling} is uniform over $v = 1,\ldots,V$.
\begin{lemma}
\label{warmness}
Suppose $\Prob(\mathcal{G}^c_v) \leq \frac{\mu^{\star}}{4}$ and recall $\lambda = \frac{\epsilon}{24V} \leq \frac{1}{6V}$. Then $\tilde{\mu}(\cdot \mid \Gv)$ is locally 7-warm with respect to $\mu_{v}$ for partition $\mathcal{A}$; that is,
\begin{align*}
     \sup_{B \subset A_{j}}\frac{\tilde{\mu}_{v\mid A_{j}}(B \mid \mathcal{G}_{v})}{\mu_{v\mid A_{j}}(B)} < 7, \text{ for }j=1,\ldots,p.
\end{align*}
\end{lemma}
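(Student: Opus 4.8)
The plan is to decompose the event $\mathcal{G}_v$ from the conditioning and relate $\tilde\mu_v(\cdot\mid\mathcal{G}_v)$ back to the ``unconditioned'' resampling distribution, then use the results already established (Lemma~\ref{controlling_resampling}, Lemma~\ref{lemma: normalizing constant bound}, Proposition~\ref{resampling_error}) together with the hypothesis $\Prob(\mathcal{G}_v^c)\le\mu^\star/4$. First I would write, for $B\subset A_j$,
\begin{align*}
\tilde\mu_{v\mid A_j}(B\mid\mathcal{G}_v)
= \frac{\tilde\mu_v(B\mid\mathcal{G}_v)}{\tilde\mu_v(A_j\mid\mathcal{G}_v)}
= \frac{\Prob(\tilde X_v^i\in B,\ \mathcal{G}_v)}{\Prob(\tilde X_v^i\in A_j,\ \mathcal{G}_v)}.
\end{align*}
The denominator is controlled from below by Lemma~\ref{controlling_resampling}, which gives $\tilde\mu_v(A_j\mid\mathcal{G}_v)\ge\phi^{-v}\mu_v(A_j)$. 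For the numerator I would condition on $\mathcal{F}_{v-1}$ and use that, given $\mathcal{F}_{v-1}$, $\tilde X_v^i$ is a resample so $\Prob(\tilde X_v^i\in B\mid\mathcal{F}_{v-1}) = \sum_k w_v(X_{v-1}^k)\1_B(X_{v-1}^k)/(N\sum_r\hat w_v^r)$; I want to bound this by something proportional to $\mu_{v\mid A_j}(B)$ times $\mu_v(A_j)$.

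The key mechanism is this: the resampled mass placed in $B\subset A_j$ is $\hat w_v(B)/\sum_r\hat w_v^r$ where $\hat w_v(B) := N^{-1}\sum_i w_v(X_{v-1}^i)\1_B(X_{v-1}^i)$, and since every particle $X_{v-1}^i$ landing in $B$ has weight $w_v(X_{v-1}^i)=q_v(X_{v-1}^i)/q_{v-1}(X_{v-1}^i)$, one has the pathwise bound $\hat w_v(B)\le \frac{z_{v-1}}{z_v}ZW\cdot(\text{empirical mass of }X_{v-1}\text{'s in }B)$ — but that route reintroduces the empirical distribution of the previous step, which we do not want. Instead the cleaner argument, which I expect the authors intend, is to note that the weighting of particles within a single partition element $A_j$ does \emph{not} depend on the across-element resampling probabilities: conditional on $\mathcal{F}_{v-1}$ and on $\{\tilde X_v^i\in A_j\}$, the resample is drawn from the reweighted empirical measure of $\{X_{v-1}^k : X_{v-1}^k\in A_j\}$, i.e. $\sum_k w_v(X_{v-1}^k)\1_{B}(X_{v-1}^k) / \sum_k w_v(X_{v-1}^k)\1_{A_j}(X_{v-1}^k)$. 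So it suffices to compare the reweighted empirical law of the within-$A_j$ particles to $\mu_{v\mid A_j}$. Here is where I would invoke that, by the induction/coupling machinery, on $\mathcal{G}_v$ the previous-step mutated particles $X_{v-1}^{1:N}$ equal the coupled particles $\bar X_{v-1}^{1:N}$, whose within-$A_j$ conditional law is exactly $\mu_{v-1\mid A_j}$ (by property (b) of the coupling construction), and then reweighting $\mu_{v-1\mid A_j}$ by $w_v\propto q_v/q_{v-1}$ gives exactly $\mu_{v\mid A_j}$. Any slack — coming from conditioning on $\mathcal{G}_v$ rather than the whole space, and from finite-$N$ fluctuations already absorbed into the $\mathcal{C}_v^j(i)$ events — should be exactly the factor $\phi^{2v}$: one $\phi^v$ from Lemma~\ref{controlling_resampling} in the denominator and one $\phi^v$ from the analogous ratio-of-normalizing-constants estimate (Lemma~\ref{lemma: normalizing constant bound}) in the numerator.

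Concretely, I expect the bound to come out as
\begin{align*}
\frac{\tilde\mu_{v\mid A_j}(B\mid\mathcal{G}_v)}{\mu_{v\mid A_j}(B)}
\le \frac{\phi^{2v}}{\Prob(\mathcal{G}_v)}
\le \frac{(1+\epsilon)/(1-\epsilon)}{1-\mu^\star/4},
\end{align*}
using Proposition~\ref{resampling_error} for $\phi^{2v}<\tfrac{1+\epsilon}{1-\epsilon}$ and the hypothesis $\Prob(\mathcal{G}_v^c)\le\mu^\star/4$ (so $\Prob(\mathcal{G}_v)\ge 1-\mu^\star/4\ge 3/4$ since $\mu^\star\le 1$). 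Then with $\epsilon<\tfrac12$ one gets $\tfrac{1+\epsilon}{1-\epsilon}<3$, and $3/(3/4)=4<7$; the extra room to $7$ presumably accommodates an additional constant slippage in handling the conditioning on $\mathcal{G}_v$ inside the numerator (e.g. a crude $\Prob(\cdot,\mathcal{G}_v)\le\Prob(\cdot)$ paired with a lower bound $\Prob(\cdot\cap A_j\cap\mathcal{G}_v)\ge\Prob(\cdot\cap A_j)-\Prob(\mathcal{G}_v^c)$ type step, which costs an additive rather than multiplicative error and is where the $\mu^\star$ denominator genuinely matters).

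The main obstacle I anticipate is the numerator step: carefully pushing the event $\mathcal{G}_v$ through the conditional-expectation identity for the resample without losing too much. Upper-bounding $\Prob(\tilde X_v^i\in B,\mathcal{G}_v)$ by dropping $\mathcal{G}_v$ is easy, but then one has the \emph{unconditioned} resampling distribution, which need not be close to $\mu_v$ — so one must instead keep $\mathcal{G}_v$ and exploit that on $\mathcal{G}_v$ both $\hat w_v^j = \bar w_v^j$ and the $\bar w_v^j$ are sandwiched (events $\mathcal{C}_v^j(i)$), reducing the within-$A_j$ reweighted empirical law to $\mu_{v\mid A_j}$ up to the $\phi$ factors, exactly as in the proofs of Lemmas~\ref{controlling_resampling} and~\ref{lemma: normalizing constant bound}. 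Getting the bookkeeping of ``which events are needed and are contained in $\mathcal{G}_v$'' right, and confirming that the accumulated error is $\phi^{2v}$ (not worse), is the delicate part; everything after that is the elementary chain of inequalities above.
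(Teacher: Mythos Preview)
Your overall framework is right — decompose the ratio as $\tilde\mu_v(B\cap A_j\mid\Gv)\mu_v(A_j)/\big(\tilde\mu_v(A_j\mid\Gv)\mu_v(B\cap A_j)\big)$, lower-bound the denominator by Lemma~\ref{controlling_resampling}, and upper-bound the numerator using Lemma~\ref{lemma: normalizing constant bound} together with the coupling. But there is a genuine gap in your numerator argument.

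You write that on $\Gv$ the particles $X_{v-1}^{1:N}=\bar X_{v-1}^{1:N}$ have within-$A_j$ conditional law $\mu_{v-1\mid A_j}$, and that reweighting this by $w_v$ yields $\mu_{v\mid A_j}$, with ``finite-$N$ fluctuations already absorbed into the $\mathcal{C}_v^j(i)$ events.'' This conflates the \emph{marginal} conditional law of a single $\bar X_{v-1}^i$ (which is indeed $\mu_{v-1\mid A_j}$) with the \emph{empirical} reweighted measure $\sum_k w_v(X_{v-1}^k)\1_B(X_{v-1}^k)/\hat w_v^j$ that actually governs resampling. The events $\mathcal{C}_v^j(i)$ sandwich only the totals $\bar w_v^j$ for the partition elements $A_j$; they say nothing about $N^{-1}\sum_k w_v(\bar X_{v-1}^k)\1_B(\bar X_{v-1}^k)$ for an arbitrary $B\subset A_j$. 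So you cannot reduce the within-$A_j$ empirical law to $\mu_{v\mid A_j}$ up to $\phi$ factors pathwise, and your anticipated bound $\phi^{2v}/\Prob(\Gv)$ does not follow.

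The paper handles this by never trying to control the empirical measure on $B$ pathwise. Instead it works with the single-particle marginal: after using Lemma~\ref{lemma: normalizing constant bound} to replace the random denominator $\sum_r\bar w_v^r$ by $\phi^{-v}z_v/z_{v-1}$, the numerator becomes (up to $\phi^v/\Prob(\Gv)$) the expectation $\E\!\big[\tfrac{\mu_v}{\mu_{v-1}}(\bar X_{v-1}^i)\1_{B\cap A_j}(\bar X_{v-1}^i)\big]$. Conditioning on $\{\bar X_{v-1}^i\in A_j\}$ and using property~(b) of the coupling gives $\mu_v(B\cap A_j)\cdot \tilde\mu_{v-1}(A_j)/\mu_{v-1}(A_j)$. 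The point is that this introduces an \emph{extra} factor $\tilde\mu_{v-1}(A_j)/\mu_{v-1}(A_j)$ that your sketch omits. Bounding the unconditioned $\tilde\mu_{v-1}(A_j)$ via $\tilde\mu_{v-1}(A_j)\le \phi^{v-1}\mu_{v-1}(A_j)\Prob(\Gv)+\Prob(\Gv^c)$ and using $\Prob(\Gv^c)\le \mu^\star/4\le \mu_{v-1}(A_j)$ is exactly where the hypothesis on $\Prob(\Gv^c)$ does its work, and it produces the final bound $\phi^{3v-1}+\phi^{2v}/\Prob(\Gv)<7$, not merely $\phi^{2v}/\Prob(\Gv)$. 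Your closing remark about ``additive slippage'' and the role of $\mu^\star$ is pointing at the right place, but the mechanism is this extra marginal-probability ratio, not a generic conditioning correction.
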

\begin{proof}
Expand 
\begin{align}
\frac{\tilde{\mu}_{v\mid A_{j}}(B \mid \Gv)}{\mu_{v\mid A_{j}}(B)} =
\frac{\tilde{\mu}_v(B \cap A_j \mid  \Gv)\mu_v(A_j)}{\tilde{\mu}_v(A_j \mid \Gv)\mu_v(B \cap A_j)}
\label{Eqn:RatioExpansion}
\end{align}
and consider first the term $\tilde{\mu}_v(B \cap A_j \mid  \Gv) = \Prob(\tilde{X}^{i}_{v} \in B \cap A_{j} \mid \Gv)$.  We have
\begin{align*}
\Prob(\tilde{X}^{i}_{v} \in B \cap A_{j} \mid \Gv) &=
\E\left[
\frac{\frac{1}{N}\sum^{N}_{n=1} w_{v}(\bar{X}^{n}_{v-1}) \mathbbm{1}_{A_{j} \cap B}(\bar{X}^{n}_{v-1})}
{\sum^{p}_{r=1} \bar{w}^{r}_{v}} \Bigm| \Gv
\right] \\
%
%
%
%
%
&\leq 
\phi^{v}
\E\left[
\frac{\mu_{v}(\bar{X}^{i}_{v-1})}{\mu_{v-1}(\bar{X}^{i}_{v-1})} \mathbbm{1}_{A_{j} \cap B}(\bar{X}^{i}_{v-1})
 \Bigm| \Gv
\right]  \\
&=
\frac{\phi^{v}}{\Prob(\mathcal{G}_{v})}
\E\left[
\frac{\mu_{v}(\bar{X}^{i}_{v-1})}{\mu_{v-1}(\bar{X}^{i}_{v-1})} \mathbbm{1}_{A_{j} \cap B}(\bar{X}^{i}_{v-1}) \GvInd
\right]  \\
%
%
%
%
&  \leq  \frac{\phi^{v}}{\Prob(\mathcal{G}_{v})} \E\left[\frac{\mu_{v}(\bar{X}^{i}_{v-1}) \mathbbm{1}_{A_{j} \cap B}(\bar{X}^{i}_{v-1} )}{\mu_{v-1}(\bar{X}^{i}_{v-1})}  \Bigm| \bar{X}^{i}_{v-1} \in A_{j} \right] 
\bar{\mu}_{v-1}(A_{j})
\\
%
%
%
%
%
&=  \frac{\phi^{v}}{\Prob(\mathcal{G}_{v})}  \mu_{v}(A_{j} \cap B)  \frac{\tilde{\mu}_{v-1}(A_{j})}{\mu_{v-1}(A_{j})}
\end{align*}
The first inequality follows by Lemma~\ref{lemma: normalizing constant bound}. The second inequality follows from non-negativity of the integrand and from the law of total probability.  The latter is required because $\bar{\mu}_{v-1} \neq \mu_{v-1}$ but
$\Prob(\bar{X}^{i}_{v-1} \in B' \mid \bar{X}^{i}_{v-1} \in A_{j}) = \mu_{v-1|A_{j}}(B{'})$ for $B{'} \subset \mathcal{X}$ (see Appendix~\ref{appendix_a}).  This fact is also used in the final line, along with the fact that $\bar{\mu}_{v}(A_{j}) = \tilde{\mu}_{v}(A_{j})$ holds for all $i,j,v$ by construction (see Appendix~\ref{appendix_a}). As a result, from (\ref{Eqn:RatioExpansion}) we have:
\begin{align*}
\frac{\tilde{\mu}_{v\mid A_{j}}(B \mid \Gv)}{\mu_{v\mid A_{j}}(B)} \leq
\frac{\phi^{v}}{\Prob(\Gv)}   \frac{\Prob(\tilde{X}^{i}_{v-1} \in A_{j})}{\mu_{v-1}(A_{j})} \cdot
\frac{\mu_v(A_j)}{\tilde{\mu}_v(A_j \mid \Gv)}.
\end{align*}
Now by Lemma~\ref{controlling_resampling}, $\tilde{\mu}_v(A_j \mid \Gv) \geq \phi^{-v} \mu_{v}(A_{j})$
and therefore
$\frac{\mu_{v}(A_{j})}{\tilde{\mu}_v(A_j \mid \Gv)} \leq \phi^{v}$ provides a lower bound on the relative marginal probability of undersampling partition elements. From Lemma~\ref{controlling_resampling} we also  have
\begin{align*}
    \Prob(\tilde{X}^i_{v-1} \in A_{j}) \,  = \tilde{\mu}_{v-1}(A_j) \leq \phi^{v-1}\mu_{v-1}(A_{j}) \Prob(\mathcal{G}_{v}) + \Prob(\mathcal{G}^{c}_{v}).
\end{align*}
Our assumptions on $\lambda$ and $\Prob(\mathcal{G}^{c}_{v})$ then yield
\begin{align*}
 \sup_{B \subset A_j}\frac{\tilde{\mu}_{v \mid A_j}(B \mid \Gv)}{\mu_{v \mid A_j}(B)} 
&\leq 
\frac{\phi^{2v}}{\Prob(\Gv)\mu_{v-1}(A_{j})} \Prob(\tilde{X}^i_{v-1} \in A_{j})\\
&\leq  \frac{\phi^{2v}}{\Prob(\Gv)}\left[ \frac{ \phi^{v-1}\mu_{v-1}(A_{j}) \Prob(\Gv) + \Prob(\Gv^c)}{\mu_{v-1}(A_{j})} \right] \\
&\leq \phi^{3v-1} + \frac{\phi^{2v}}{\Prob(\Gv)} \\ &\leq 4 \, e^{\frac{1}{2}} \,< \, 7 
\end{align*}
The third inequality follows since 
$\mu^{\star} \leq \mu_{v-1}(A_{j})$. The final inequality is obtained by repeating the argument of Proposition~\ref{resampling_error}
for the choice of $\lambda$ to get $\phi^{jv} < \frac{6e^{\frac{j}{6}}}{6-j}$, and using $\Prob(\mathcal{G}_{v}) > 3/4$.
\end{proof}

\subsection{Induction Argument}\label{sec: induction argument}

The remainder of the analysis is to show that $\mathcal{G}_{v}$ holds with high probability. We will do so inductively by first establishing that $\Prob(\mathcal{E}_v \mid \mathcal{E}_{v-1})$  is sufficiently large for all $v$.
\begin{lemma}
\label{resampling_bound}
Let $0 < \delta  < 1$ and suppose $N > \frac{2W^{2}Z^{2} \phi^{2(v-1)} \log(\frac{4p}{\delta})}{ \lambda^{2} (\mu^{\star})^{2}}$. Then
\begin{align*}
           \Prob(\mathcal{C}^{1:p}_{v}(i) \mid  \mathcal{E}_{v-1}) \geq 1 - \frac{ \delta}{\Prob( \mathcal{E}_{v-1})}
        \end{align*}
\end{lemma}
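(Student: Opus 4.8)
The plan is to prove Lemma~\ref{resampling_bound} by a concentration argument on each $\bar{w}^j_v$, combined with a union bound over $j = 1,\ldots,p$ and a conversion from unconditional to conditional probability. The key observation is that $\bar{w}^j_v = N^{-1}\sum_{i=1}^N w_v(\bar{X}^i_{v-1})\mathbbm{1}_{A_j}(\bar{X}^i_{v-1})$ is an average of $N$ terms, each bounded in $[0, WZ/\mu^\star]$ (or some such bound coming from Assumption~\ref{assumption: density ratios}; note $w_v \leq ZW \cdot (z_v/z_{v-1})^{-1}$, so the bound will involve $W$, $Z$, and the normalization, which is why the hypothesis carries a factor $W^2 Z^2$). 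Conditionally on $\mathcal{F}_{v-2}$, the variables $\bar{X}^1_{v-1},\ldots,\bar{X}^N_{v-1}$ are independent (by the conditional-independence structure of the coupling construction, cf. Lemma~\ref{cond_ind}), so $\bar w^j_v$ is a conditional average of independent bounded random variables with conditional mean $\E[\bar w^j_v \mid \mathcal{F}_{v-2}]$.

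The steps I would carry out, in order: First, fix $j$ and condition on $\mathcal{F}_{v-2}$. Apply a multiplicative (relative-error) Chernoff/Hoeffding bound to $\bar w^j_v$ around its conditional mean $m^j_v := \E[\bar w^j_v \mid \mathcal{F}_{v-2}]$. Using the identity \eqref{Eqn:prop1ineq2}, $m^j_v = \frac{z_v}{z_{v-1}}\cdot\frac{\mu_v(A_j)}{\mu_{v-1}(A_j)}\hat P^j_{v-1}$, and on $\mathcal{E}_{v-1}$ (hence on $\mathcal{G}_{v-1}$) Lemma~\ref{controlling_resampling} gives $\hat P^j_{v-1} \geq \phi^{-(v-1)}\mu_{v-1}(A_j)$, so $m^j_v \geq \frac{1}{WZ}\cdot\phi^{-(v-1)}\mu^\star\cdot(\text{something})$ — in any case a lower bound of order $\phi^{-(v-1)}\mu^\star / (WZ)$ on $m^j_v$ relative to the per-term bound. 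Feeding $N > \frac{2W^2Z^2\phi^{2(v-1)}\log(4p/\delta)}{\lambda^2(\mu^\star)^2}$ into the Hoeffding bound $\Prob(|\bar w^j_v - m^j_v| > \lambda m^j_v \mid \mathcal{F}_{v-2}) \leq 2\exp(-\tfrac{N\lambda^2 (m^j_v)^2}{2 (\text{range})^2})$ yields a conditional failure probability $\leq \delta/(2p)$, uniformly over the relevant values of $\mathcal{F}_{v-2}$ (i.e. on the event $\mathcal{E}_{v-1}$). Second, integrate over $\mathcal{F}_{v-2}$ restricted to $\mathcal{E}_{v-1}$ and union-bound over $j=1,\ldots,p$ to obtain $\Prob(\mathcal{C}^{1:p}_v(i)^c \cap \mathcal{E}_{v-1}) \leq \delta/2 \leq \delta$ (the factor $2$ absorbs the two-sided bound and a possible extra term from $\mathcal{C}^j_1(i)$ when $v=1$, where the conditioning is vacuous and the bound is direct). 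Third, divide by $\Prob(\mathcal{E}_{v-1})$: $\Prob(\mathcal{C}^{1:p}_v(i)\mid\mathcal{E}_{v-1}) = 1 - \Prob(\mathcal{C}^{1:p}_v(i)^c\mid\mathcal{E}_{v-1}) \geq 1 - \frac{\delta}{\Prob(\mathcal{E}_{v-1})}$.

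The main obstacle I anticipate is handling the conditioning carefully: $\mathcal{E}_{v-1}$ is not $\mathcal{F}_{v-2}$-measurable (it involves $\mathcal{C}_{v-1}(ii)$ and $\mathcal{C}^{1:p}_{v-1}(i)$, which depend on step $v-1$ variables), so one cannot simply condition on $\mathcal{F}_{v-2}$ and have $\mathcal{E}_{v-1}$ be "known." The resolution is that the Hoeffding bound on $\bar w^j_v$ holds \emph{pointwise} for every value of $\mathcal{F}_{v-2}$ that is consistent with $\mathcal{E}_{v-1}$ — specifically, the bound only uses that $\hat P^j_{v-1}$ (which \emph{is} $\mathcal{F}_{v-1}$-measurable, and whose relevant lower bound holds on $\mathcal{G}_{v-1}\supset\mathcal{E}_{v-1}$) is large, and one must verify that $\bar w^j_v$'s conditional distribution given $\mathcal{F}_{v-2}$ depends only on $\hat P^j_{v-1}$ (via Lemma~\ref{cond_ind}). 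So the clean way is: work on the event $\mathcal{E}_{v-1}$, note $\hat P^j_{v-1}\geq\phi^{-(v-1)}\mu^\star$ there, write $\Prob(\mathcal{C}^{1:p}_v(i)^c\cap\mathcal{E}_{v-1}) = \E[\mathbbm{1}_{\mathcal{E}_{v-1}}\Prob(\mathcal{C}^{1:p}_v(i)^c\mid\mathcal{F}_{v-1})]$ and bound the inner conditional probability by $\delta$ pointwise on $\mathcal{E}_{v-1}$ using independence of $\{\bar X^i_v\}$ given $\mathcal{F}_{v-1}$ — wait, one subtlety: $\bar w^j_v$ depends on $\bar X^{1:N}_{v-1}$, which are step $v-1$ mutated-coupled particles, so the right conditioning is $\mathcal{F}_{v-2}$ together with the resampling indices at step $v-1$; the conditional-independence lemma is exactly what licenses treating this as $N$ independent bounded summands with the stated conditional mean. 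Once that measurability bookkeeping is pinned down, the remaining calculation is a routine plug-in of $N$ into the Hoeffding exponent.
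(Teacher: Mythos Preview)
Your approach is essentially the paper's: lower-bound $\E[\bar w^j_v\mid\mathcal F_{v-2}]$ on $\mathcal E_{v-1}$ via Lemmas~\ref{controlling_resampling} and~\ref{cond_exp_equality}, apply a Hoeffding-type concentration bound (the paper packages this as Lemma~\ref{conc_bound}), union-bound over $j$, and divide by $\Prob(\mathcal E_{v-1})$. The measurability issue you flag is handled more simply than you propose: the paper does not attempt a tower identity of the form $\E[\mathbbm 1_{\mathcal E_{v-1}}\Prob(\cdot\mid\mathcal F_{v-1})]$ (indeed $\mathcal E_{v-1}$ is not $\mathcal F_{v-1}$-measurable, since it involves the $\bar X$ variables), but instead observes that on $\mathcal E_{v-1}$ the random threshold $\lambda\,\E[\bar w^j_v\mid\mathcal F_{v-2}]$ exceeds the deterministic number $\lambda\tfrac{z_v}{z_{v-1}}\phi^{-(v-1)}\mu_v(A_j)$, replaces the former by the latter, and then uses the crude bound $\Prob(\cdot\mid\mathcal E_{v-1})\le\Prob(\cdot)/\Prob(\mathcal E_{v-1})$ before applying the \emph{unconditional} concentration inequality of Lemma~\ref{conc_bound}.
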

\begin{proof}
Note that $\mathcal{G}_{v-1} \subset \mathcal{E}_{v-1}$. By combining Lemma~\ref{cond_exp_equality} (Appendix~\ref{appendix_a}) and Lemma~\ref{controlling_resampling}, we have over $\mathcal{G}_{v-1}$,
\begin{align*}
    \E[\bar{w}^{j}_{v} \mid \mathcal{F}_{v-2}] = \frac{z_{v}}{z_{v-1}} \cdot \frac{\mu_{v}(A_{j})}{\mu_{v-1}(A_{j})} \cdot \hat{P}^{j}_{v-1} \geq \frac{z_{v}}{z_{v-1}} \cdot \phi^{-(v-1)} \cdot \mu_{v}(A_{j})
\end{align*}
Then 
\begin{align*}
\Prob((\mathcal{C}^{j}_{v}(i))^{c} \mid \mathcal{E}_{v-1}) 
    = &\Prob\left(\abs{\bar{w}^{j}_{v} - \E[\bar{w}^{j}_{v} \mid \mathcal{F}_{v-2}]} > \lambda \E[\bar{w}^{j}_{v} \mid \mathcal{F}_{v-2}] \mid   \mathcal{E}_{v-1} \right) \\
    \leq &\Prob\left(\abs{\bar{w}^{j}_{v} - \E[\bar{w}^{j}_{v} \mid \mathcal{F}_{v-2}]} > \lambda \frac{z_{v}}{z_{v-1}}  \phi^{-(v-1)} \mu_{v}(A_{j}) \mid \mathcal{E}_{v-1} \right) \\
    \leq &\frac{1}{\Prob( \mathcal{E}_{v-1})}\Prob\left(\abs{\bar{w}^{j}_{v} - \E[\bar{w}^{j}_{v} \mid \mathcal{F}_{v-2}]} > \lambda \frac{z_{v}}{z_{v-1}} \phi^{-(v-1)}  \mu_{v}(A_{j})   \right) \\
    \leq &\frac{ \delta}{p\Prob( \mathcal{E}_{v-1})}
\end{align*}
where the last line
in Appendix~\ref{appendix_a} with $N > \frac{2W^{2}Z^{2} \phi^{2(v-1)} \log(\frac{4p}{\delta})}{\lambda^{2} (\mu^{\star})^{2}}$. By the union bound,
\begin{align*}
\Prob(\cup^{p}_{j=1} (\mathcal{C}^{j}_{v}(i))^{c} \mid   \mathcal{E}_{v-1}) \leq \frac{ \delta}{\Prob(\mathcal{E}_{v-1})}
\end{align*}
DeMorgan's law gives the result.
\end{proof}
To complete the inductive step, it remains to show that 
$P(\mathcal{C}_v(ii) \mid \mathcal{C}_v^{1:p}, \mathcal{E}_{v-1})$ is also large.  To do so, we use the local warmness result established in Lemma~\ref{warmness}.


%
%
\begin{lemma}
\label{coupling}
Suppose $\Prob(\mathcal{G}^{c}_{v}) \leq \frac{\mu^{\star}}{4}$ and recall $\lambda \leq \frac{1}{6V}$. Let
\begin{align*}
    t > \max_{v=1,\ldots,V} \max_{j=1,\ldots,p} \tau_{v,j}\left(\frac{\delta}{N}, 7\right)
\end{align*}
Then $\Prob(\mathcal{C}_{v}(ii) \mid \mathcal{G}_{v}) \geq 1 - \delta$.
 \end{lemma}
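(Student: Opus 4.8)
The plan is to reduce the simultaneous coupling event $\mathcal{C}_{v}(ii) = \{\bar{X}^{1:N}_{v} = X^{1:N}_{v}\}$ to a statement about one particle landing in one partition element, and then close it with a union bound over the $N$ particles. The first step is to observe that the hypotheses here, $\Prob(\mathcal{G}^{c}_{v}) \leq \mu^{\star}/4$ and $\lambda \leq 1/(6V)$, are exactly those of Lemma~\ref{warmness}; hence conditional on $\mathcal{G}_{v}$ the resampled distribution $\tilde{\mu}_{v}(\cdot \mid \mathcal{G}_{v})$ is locally $7$-warm with respect to $\mu_{v}$ and $\mathcal{A}$. In particular, for each $j$ the conditional law $\tilde{\mu}_{v\mid A_{j}}(\cdot \mid \mathcal{G}_{v})$ is supported on $A_{j}$ and satisfies $\sup_{B \subset A_{j}} \tilde{\mu}_{v\mid A_{j}}(B \mid \mathcal{G}_{v})/\mu_{v\mid A_{j}}(B) < 7$, so it belongs to $\mathcal{P}_{7}(\mu_{v\mid A_{j}})$ in the sense of the mixing-time definition~(\ref{mixing_time}).

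Second, I would identify the conditional law of a mutated particle. Since $\mathcal{G}_{v} = \mathcal{E}_{v-1} \cap \mathcal{C}^{1:p}_{v}(i)$ is measurable with respect to the variables generated strictly before the step-$v$ mutation (it depends only on $\{X^{1:N}_{k},\tilde{X}^{1:N}_{k}\}_{k<v}$ and the coupled variables $\bar{X}^{1:N}_{0:v-1}$, through which $\bar w^j_v$ agrees with the $\mathcal{F}_{v-1}$-measurable $\hat w^j_v$), conditioning on $\mathcal{G}_{v}$ does not disturb the step-$v$ mutation kernel: given $\tilde{X}^{i}_{v}$ and $\mathcal{G}_{v}$, one still has $X^{i}_{v} \sim K^{t}_{v\mid A_{\xi^{i}_{v}}}(\tilde{X}^{i}_{v},\cdot)$ because the mutation randomness is independent of everything up to and including $\tilde{X}^{1:N}_{v}$. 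Integrating out $\tilde{X}^{i}_{v}$ then gives $\Prob(X^{i}_{v} \in \cdot \mid \mathcal{G}_{v},\tilde{X}^{i}_{v} \in A_{j}) = \int_{A_{j}} \tilde{\mu}_{v\mid A_{j}}(dx \mid \mathcal{G}_{v})\, K^{t}_{v\mid A_{j}}(x,\cdot)$, a $t$-step pushforward through $K_{v\mid A_{j}}$ of a $7$-warm initial distribution. Since $t > \tau_{v,j}(\delta/N,7)$ for this $v$ and every $j$, the definition~(\ref{mixing_time}) yields $\norm{\Prob(X^{i}_{v} \in \cdot \mid \mathcal{G}_{v},\tilde{X}^{i}_{v} \in A_{j}) - \mu_{v\mid A_{j}}}_{\text{TV}} \leq \delta/N$.

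Third, I would invoke the coupling construction of Section~\ref{coupling_overview} (detailed in Appendix~\ref{appendix_a}): conditionally on $\mathcal{G}_{v}$ and $\{\tilde{X}^{i}_{v} \in A_{j}\}$, the pair $(X^{i}_{v},\bar{X}^{i}_{v})$ is a maximal coupling of the mutated law above and $\mu_{v\mid A_{j}}$, so $\Prob(X^{i}_{v} \neq \bar{X}^{i}_{v} \mid \mathcal{G}_{v},\tilde{X}^{i}_{v} \in A_{j})$ equals the total-variation distance just bounded, hence is at most $\delta/N$. Averaging over $j$ against the weights $\tilde{\mu}_{v}(A_{j} \mid \mathcal{G}_{v})$, which sum to one and are each strictly positive by Lemma~\ref{controlling_resampling}, gives $\Prob(X^{i}_{v} \neq \bar{X}^{i}_{v} \mid \mathcal{G}_{v}) \leq \delta/N$ for each $i$; a union bound over $i = 1,\ldots,N$ then yields $\Prob(\mathcal{C}_{v}(ii)^{c} \mid \mathcal{G}_{v}) \leq \delta$, which is the claim.

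The main obstacle is the second step: one must check that conditioning on $\mathcal{G}_{v}$ genuinely leaves the mutated law a clean $K^{t}_{v\mid A_{j}}$-pushforward of an \emph{admissible} (i.e.\ $7$-warm) initial distribution, so that the mixing-time definition can be applied verbatim. This is precisely where the earlier work pays off: it rests both on $\mathcal{G}_{v}$ being a pre-mutation event and, crucially, on the local-warmness guarantee of Lemma~\ref{warmness}; without the latter the restricted chain would be started from an inadmissible distribution and no bound through $\tau_{v,j}$ would be available. The remaining ingredients — the maximal coupling, the averaging over partition elements, and the union bound over particles — are routine once this is in place.
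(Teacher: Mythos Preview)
Your proposal is correct and follows essentially the same route as the paper: invoke Lemma~\ref{warmness} to get that $\tilde{\mu}_{v\mid A_j}(\cdot\mid\mathcal{G}_v)\in\mathcal{P}_7(\mu_{v\mid A_j})$, write the mutated conditional law as a $K^t_{v\mid A_j}$-pushforward of this warm start, apply the mixing-time definition to bound the total variation by $\delta/N$, translate this into the per-particle decoupling probability via the maximal coupling of Appendix~\ref{appendix_a}, average over $j$, and finish with a union bound over the $N$ particles. Your added remark that $\mathcal{G}_v$ is measurable with respect to pre-mutation variables is exactly what justifies the pushforward identity the paper writes down without comment.
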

 \begin{proof}
Since $\hat{\mu}_{v}(A_{j})  = \tilde{\mu}_{v}(A_{j})$  for all $v,j$ (see Appendix~\ref{appendix_a}), we have for $j = 1,\ldots,p$ and $B \subset \mathcal{X}$,
\begin{align*}
    \hat{\mu}_{v|A_{j}}( B \mid \mathcal{G}_{v})
    &= \frac{\Prob(X_v^i \in B, \tilde{X}_v^i \in A_j \mid \Gv)}{\Prob(\tilde{X}_v^i \in A_j \mid \Gv)} \\
    &= \frac{1}{\tilde{\mu}_{v}(A_{j} \mid \Gv)}\int_{A_{j}} \tilde{\mu}_{v}(dx \mid \Gv) K_{v|A_{j}}^{t}(x,B)  \\
     &= \int_{\mathcal{X}} \tilde{\mu}_{v|A_{j}}(dx \mid \Gv)  K_{v|A_{j}}^{t}(x,B)
\end{align*}
By Lemma \ref{warmness}, $\tilde{\mu}_{v|A_{j}}(\cdot \mid \Gv) \in \mathcal{P}_{7}(\mu_{v|A_{j}})$. Hence, by our choice of $t$,
\begin{align*}
    \norm{\hat{\mu}_{v|A_{j}}( \cdot \mid \Gv) - \mu_{v| A_{j}}(\cdot)}_{\text{TV}} \leq \frac{\delta}{N} \text{ for }j=1,\ldots,p
\end{align*}
It follows that we can construct $(X^{i}_{v}, \bar{X}^{i}_{v})$ conditional on $\tilde{X}^{i}_{v}$ independently for $i = 1,\ldots,N$ such that
\begin{align*}
    \Prob(X^{i}_{v} \neq \bar{X}^{i}_{v} \mid \tilde{X}^{i}_{v} \in A_{j}, \Gv) \leq \frac{\delta}{N},  \text{ for }j=1,\ldots,p
\end{align*}
(see Appendix~\ref{appendix_a} for details). It follows that $\Prob(X^{i}_{v} \neq \bar{X}^{i}_{v} \mid \Gv) \leq \frac{\delta}{N}$ by the law of total probability. Finally, taking a union bound gives us the stated result: 
\begin{align*}
   \Prob(\mathcal{C}_{v}(ii) \mid \Gv) =  \Prob(X^{1:N}_{v} \neq \bar{X}^{1:N}_{v} \mid \Gv) &\leq \delta
\end{align*}
\end{proof}
Lemma \ref{resampling_bound} and Lemma \ref{coupling} are enough to establish that $\Prob(\mathcal{E}_{v} \mid \mathcal{E}_{v-1})$ can be made sufficiently large, which gives us the following result.
\begin{lemma}
\label{OneStepError}
Let $\delta \in (0,\frac{1}{4})$ and $\epsilon \in (0,\frac{1}{2})$. Choose 
\begin{enumerate}
    \item  $ N > 3456 \cdot \left(\frac{VWZ}{\epsilon \mu^{\star} }\right)^{2} \cdot  \log\left(\frac{8Vp}{\delta \mu^{\star}}\right)$
    \item $t > \max\limits_{j=1,\ldots,p} \max\limits_{v=1,\ldots,V} \tau_{v,j}(\frac{\delta \mu^{\star}}{2NV}, 7)$
\end{enumerate}
Then
\begin{align*}
    \Prob(\mathcal{E}_{V}) \geq 1 - \delta \mu^{\star} 
\end{align*}
\end{lemma}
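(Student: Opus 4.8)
The plan is to run an induction on $v$ that simultaneously tracks $\Prob(\mathcal{E}_v)$ and the hypothesis $\Prob(\mathcal{G}_v^c) \leq \mu^\star/4$ needed to invoke Lemma~\ref{warmness} and Lemma~\ref{coupling}. First I would fix $\delta' := \delta\mu^\star / (2V)$ as the per-step failure budget, so that after $V$ steps the accumulated failure probability is at most $V\delta' = \delta\mu^\star/2 < \mu^\star/4$, which keeps the warmness hypothesis valid throughout. The inductive claim at step $v$ is $\Prob(\mathcal{E}_v) \geq 1 - v\delta'$ (equivalently $\Prob(\mathcal{E}_v^c) \leq v\delta' \leq \delta\mu^\star/2$). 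The base case $v=0$ is trivial since $\mathcal{E}_0$ is the whole space (or $v=1$ handled directly by Lemma~\ref{resampling_bound} with $\mathcal{E}_0$ the full space).

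For the inductive step, assume $\Prob(\mathcal{E}_{v-1}) \geq 1 - (v-1)\delta'$. Since $\mathcal{G}_v^c = (\mathcal{E}_{v-1} \cap \mathcal{C}_v^{1:p}(i))^c \supset \mathcal{E}_{v-1}^c$, and we will show $\Prob(\mathcal{G}_v^c)$ stays below $v\delta' \leq \mu^\star/4$, the hypotheses of Lemma~\ref{warmness} and Lemma~\ref{coupling} are met. Decompose
\begin{align*}
\Prob(\mathcal{E}_v^c) \leq \Prob(\mathcal{E}_{v-1}^c) + \Prob\big((\mathcal{C}_v^{1:p}(i))^c \cap \mathcal{E}_{v-1}\big) + \Prob\big(\mathcal{C}_v(ii)^c \cap \mathcal{G}_v\big).
\end{align*}
The first term is $\leq (v-1)\delta'$ by induction. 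For the second term, apply Lemma~\ref{resampling_bound} with $\delta \leftarrow \delta'/2$: this requires $N > 2W^2Z^2\phi^{2(v-1)}\log(4p/(\delta'/2)) / (\lambda^2(\mu^\star)^2)$, and since $\phi^{2(v-1)} < (1+\epsilon)/(1-\epsilon) < 3$ by Proposition~\ref{resampling_error}, $\lambda = \epsilon/(24V)$, and $\delta' = \delta\mu^\star/(2V)$, one checks the stated choice $N > 3456(VWZ/(\epsilon\mu^\star))^2\log(8Vp/(\delta\mu^\star))$ dominates this bound (the constant $3456 = 2 \cdot 24^2 \cdot 3$ absorbs the $\phi^{2(v-1)}$ factor and the $\lambda^{-2}$ expansion). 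This gives $\Prob((\mathcal{C}_v^{1:p}(i))^c \cap \mathcal{E}_{v-1}) \leq \delta'/2$. For the third term, apply Lemma~\ref{coupling} with $\delta \leftarrow \delta'/2 = \delta\mu^\star/(4V)$; the required mixing condition becomes $t > \max_{v,j}\tau_{v,j}(\delta\mu^\star/(4NV), 7)$, which is implied by the stated $t > \max_{v,j}\tau_{v,j}(\delta\mu^\star/(2NV), 7)$ since mixing times are monotone in the accuracy parameter. This yields $\Prob(\mathcal{C}_v(ii)^c \mid \mathcal{G}_v) \leq \delta'/2$, hence $\Prob(\mathcal{C}_v(ii)^c \cap \mathcal{G}_v) \leq \delta'/2$. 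Summing, $\Prob(\mathcal{E}_v^c) \leq (v-1)\delta' + \delta'/2 + \delta'/2 = v\delta'$, closing the induction. At $v = V$ this gives $\Prob(\mathcal{E}_V) \geq 1 - V\delta' = 1 - \delta\mu^\star/2 \geq 1 - \delta\mu^\star$, as claimed (and in fact slightly better, leaving slack).

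The main obstacle is the bookkeeping circularity between Lemma~\ref{warmness}/Lemma~\ref{coupling} (which \emph{assume} $\Prob(\mathcal{G}_v^c) \leq \mu^\star/4$) and the induction that is trying to \emph{prove} a bound on $\Prob(\mathcal{G}_v^c)$. The resolution is that the induction hypothesis at step $v-1$ already forces $\Prob(\mathcal{E}_{v-1}^c) \leq (v-1)\delta' < \mu^\star/4$, and $\Prob(\mathcal{G}_v^c) \leq \Prob(\mathcal{E}_{v-1}^c) + \Prob((\mathcal{C}_v^{1:p}(i))^c\cap\mathcal{E}_{v-1}) \leq (v-1)\delta' + \delta'/2 \leq \mu^\star/4$ — so the warmness hypothesis for step $v$ is established \emph{before} it is invoked, using only information from steps $\leq v-1$ plus the already-proven resampling concentration at step $v$. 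The only genuinely delicate points are (i) verifying that the single choice of $N$ in the statement really dominates the $v$-dependent requirement of Lemma~\ref{resampling_bound} uniformly in $v \leq V$ (handled by Proposition~\ref{resampling_error}'s bound $\phi^{2(v-1)} < 3$ and tracking constants), and (ii) confirming that the accuracy parameter $\delta\mu^\star/(4NV)$ demanded of the mixing time is no larger than the stated $\delta\mu^\star/(2NV)$, so monotonicity of $\tau_{v,j}$ in its first argument closes the gap.
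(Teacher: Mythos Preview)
Your overall induction scheme is the same as the paper's: track $\Prob(\mathcal{E}_v)$ while simultaneously verifying the warmness hypothesis $\Prob(\mathcal{G}_v^c)\le \mu^\star/4$, and resolve the circularity by noting that Lemma~\ref{resampling_bound} does not need warmness, so one can bound $\Prob((\mathcal{C}_v^{1:p}(i))^c\cap\mathcal{E}_{v-1})$ first, deduce $\Prob(\mathcal{G}_v^c)\le\mu^\star/4$, and only then invoke Lemma~\ref{coupling}. That part is fine.

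However, your decision to split the per-step budget as $\delta'/2+\delta'/2$ (rather than $\delta'+\delta'$ as the paper does) creates two concrete problems. First, applying Lemma~\ref{resampling_bound} with parameter $\delta'/2=\delta\mu^\star/(4V)$ requires $N$ with $\log\!\big(16Vp/(\delta\mu^\star)\big)$ inside, not the stated $\log\!\big(8Vp/(\delta\mu^\star)\big)$; the constant $3456=2\cdot 24^2\cdot 3$ you cite is exactly tight for $\log(8Vp/(\delta\mu^\star))$, so the hypothesis on $N$ does not cover your call. Second, and more seriously, your monotonicity claim for the mixing time is backwards: $\tau_{v,j}(\epsilon,M)$ is \emph{nonincreasing} in $\epsilon$, so a smaller accuracy target $\delta\mu^\star/(4NV)$ forces a \emph{larger} mixing time than $\delta\mu^\star/(2NV)$. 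Hence the stated hypothesis $t>\tau_{v,j}(\delta\mu^\star/(2NV),7)$ does \emph{not} imply your needed $t>\tau_{v,j}(\delta\mu^\star/(4NV),7)$; the implication runs the other way.

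Both issues disappear if you drop the halving and instead aim for $\Prob(\mathcal{E}_v^c)\le 2v\delta'$, which at $v=V$ gives exactly $1-\delta\mu^\star$ with no slack. With $\delta'$ (not $\delta'/2$) fed into Lemmas~\ref{resampling_bound} and \ref{coupling}, the $N$ and $t$ hypotheses in the statement match on the nose. This is what the paper does.
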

\begin{proof}
First, suppose that $\Prob(\mathcal{G}_{k}^c) \leq \frac{\mu^{\star}}{4}$ for $k = 1,\ldots,v$ so that we can directly apply Lemma \ref{warmness} and Lemma \ref{coupling} for steps $k = 1,\ldots,v$. Let $\delta^{\prime} = \frac{\delta \mu^{\star}}{2V}$. Then by the definition of $\mathcal{E}_{v}$ and our choice of $N$ and $t$,
\begin{align*}
    \Prob(\mathcal{E}_{v}) &= \Prob(\mathcal{C}^{1:p}_{v}(i)\, \cap \, \mathcal{C}_{v}(ii) \, \cap\,  \mathcal{E}_{v-1}) \\
    &=\Prob( \mathcal{C}_{v}(ii) \mid \mathcal{C}^{1:p}_{v}(i) \, \cap \, \mathcal{E}_{v-1})\Prob(\mathcal{C}^{1:p}_{v}(i) \mid \mathcal{E}_{v-1}) \Prob(\mathcal{E}_{v-1}) \\
    &\geq (1-\delta^{\prime})\Big(1 - \frac{\delta^{\prime}}{\Prob(\mathcal{E}_{v-1})}\Big) \Prob(\mathcal{E}_{v-1}) \\
    &\geq \Prob(\mathcal{E}_{v-1}) - 2\delta^{\prime}
\end{align*}
The first inequality follows by Lemmas~\ref{resampling_bound} and 
Lemma~\ref{coupling}. (Note that the choice of $N$ here satisfies the bound required to apply Lemma~\ref{resampling_bound} since 
$\phi^{2V} < 3$
for $\epsilon < \frac{1}{2}$ by Proposition \ref{resampling_error}).
Since we have assumed $\Prob(\mathcal{G}_{k}^c) \leq \frac{\mu^{\star}}{4}$ for $k = 1,\ldots,v$, we can apply this argument recursively to obtain
\begin{align}
    \Prob(\mathcal{E}_{v}) \geq \Prob(\mathcal{E}_{1}) - 2(v-1)\delta^{\prime} &=  \Prob(\mathcal{C}_{1}(ii) \mid \mathcal{G}_{1})\Prob(\mathcal{G}_{1}) - 2(v-1)\delta^{\prime} \nonumber \\
    &\geq (1 - \delta^{\prime})(1-\delta^{\prime}) - 2(v-1)\delta^{\prime} \nonumber \\
    &\geq 1 - 2v\delta^{\prime}\nonumber \\
    &= 1 - \delta  \mu^{\star}  \frac{v}{V} 
\label{Eqn:Evbound}
\end{align}
The first inequality follows by Lemma~\ref{coupling} and because our choice of $N$ guarantees 
\begin{align}\label{eqn:base case}
\Prob(\mathcal{G}_{1}) = \Prob(\mathcal{C}^{1:p}_{1}(i)) > 1 -\frac{\delta \mu^{\star}}{2V}    
\end{align}
It is straightforward to check this using Hoeffding's inequality since $X^{1:N}_{0} \iidsim \mu_{0}$.

However, as noted above the application of Lemma \ref{coupling} is invalid unless $\Prob(\mathcal{G}^{c}_{v}) \leq \frac{\mu^{\star}}{4}$. We now show this holds for $v=1,\ldots,V$ by way of induction. Indeed, the base case holds by (\ref{eqn:base case}). Now suppose $\Prob(\mathcal{G}^{c}_{k}) \leq \frac{\mu^{\star}}{4}$ for $k = 1,\ldots,v-1$. Then we have
\begin{align*}
    \Prob(\mathcal{G}_{v}) &= \Prob(\mathcal{C}^{1:p}_{v}(i) | \mathcal{E}_{v-1}) \Prob(\mathcal{E}_{v-1}) \\ 
    &\geq  \Prob(\mathcal{E}_{v-1}) - \frac{\delta \mu^{\star}}{2V} \\
    & \geq 1 - \frac{\delta \mu^{\star}v}{V} \\
    &\geq 1 - \frac{\mu^{\star}}{4}
\end{align*}
The first inequality follows by Lemma~\ref{resampling_bound} and choice of $N$. The second inequality follows since the inductive hypothesis guarantees $\Prob(\mathcal{E}_{v-1}) \geq  1 - \delta  \mu^{\star}  \frac{v-1}{V}$ by (\ref{Eqn:Evbound}). It follows that $\Prob(\mathcal{G}^{c}_{v}) \leq \frac{\mu^{\star}}{4}$ for $v = 1,\ldots,V$. Hence (\ref{Eqn:Evbound}) also holds for all $v=1,\ldots,V$ and so 
     $\Prob(\mathcal{E}_{V}) \geq 1 - \delta  \mu^{\star}$
as required.
\end{proof}

\subsection{Proof of Theorem \ref{thm:maintheorem}}
The proof of Theorem~\ref{thm:maintheorem} uses the following additional lemma regarding the target random variables $X^{1:N}_{V}$.
\begin{lemma}\label{lem:CondExpectBound}
Let $\epsilon \in (0,\frac{1}{2})$ and recall $\lambda = \frac{\epsilon}{24V}$. For any $f$ such that $|f| \leq 1$,
\begin{align*}
    \left|\sum^{p}_{j=1} \E\left[f(\bar{X}^{i}_{V}) \mathbbm{1}_{A_{j}}(\bar{X}^{i}_{V}) \mid \mathcal{F}_{V-1} \right] - \frac{(1+\frac{\epsilon^{2}}{16})}{(1-\frac{\epsilon^{2}}{16})} \E_{\pi}[f] \right| \leq  \frac{3}{4}\epsilon
\end{align*}
\end{lemma}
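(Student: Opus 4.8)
The plan is to rewrite the left-hand sum in terms of the constructed coupled particles $\bar{X}^i_V$, exploit the conditional independence structure (Lemma~\ref{cond_ind} in Appendix~\ref{appendix_a}) to reduce $\E[f(\bar{X}^i_V)\mathbbm{1}_{A_j}(\bar{X}^i_V)\mid\mathcal{F}_{V-1}]$ to an expression involving the restricted target $\mu_{V\mid A_j}$ and the resampling probabilities $\hat P^j_V$, and then control the discrepancy between $\sum_j \E_{\mu_{V\mid A_j}}[f]\,\hat P^j_V$ and $\sum_j \E_{\pi_{\mid A_j}}[f]\,\pi(A_j) = \E_\pi[f]$. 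First I would use the coupling property (b), namely $\Prob(\bar X^i_V\in B\mid \tilde X^i_V\in A_j)=\mu_{V\mid A_j}(B)$, together with the fact that $\mu_V=\pi$, to obtain
\begin{align*}
\sum_{j=1}^p \E\!\left[f(\bar X^i_V)\mathbbm{1}_{A_j}(\bar X^i_V)\mid\mathcal{F}_{V-1}\right]
= \sum_{j=1}^p \E_{\pi_{\mid A_j}}[f]\;\tilde\mu_V(A_j\mid\mathcal{F}_{V-1})
= \sum_{j=1}^p \E_{\pi_{\mid A_j}}[f]\;\hat P^j_V,
\end{align*}
since $\tilde\mu_V(A_j\mid\mathcal{F}_{V-1})=\hat P^j_V$ by definition. (Here I am using that the mutation step preserves the partition element, so conditioning on $\tilde X^i_V\in A_j$ is the same as conditioning on $\bar X^i_V\in A_j$, and that $\bar X^i_V$ restricted to $A_j$ has law $\mu_{V\mid A_j}=\pi_{\mid A_j}$.)

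Next I would compare $\sum_j \E_{\pi_{\mid A_j}}[f]\,\hat P^j_V$ to $\E_\pi[f]=\sum_j \E_{\pi_{\mid A_j}}[f]\,\mu_V(A_j)$. The natural route is to split on the event $\mathcal{G}_V$: on $\mathcal{G}_V$, Lemma~\ref{controlling_resampling} gives $\phi^{-V}\mu_V(A_j)\le \hat P^j_V\le \phi^V\mu_V(A_j)$ for every $j$, and Proposition~\ref{resampling_error} converts this (for $\lambda=\epsilon/(24V)$, so in particular $\lambda\le 1/(6V)$) into multiplicative control of the form $\phi^{2V}\le (1+\tfrac{\epsilon^2}{16})/(1-\tfrac{\epsilon^2}{16})$ — one should recheck the exact constant, since with $\lambda=\epsilon/(24V)$ one gets $\phi^{2V}\le e^{\epsilon/6}/(1-\epsilon/6)\le (1+\epsilon^2/16)/(1-\epsilon^2/16)$ up to the elementary bounds used in Proposition~\ref{resampling_error}, and this is presumably where the peculiar constant $\tfrac{\epsilon^2}{16}$ in the statement comes from. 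Using $|f|\le1$ and $\sum_j \mu_V(A_j)=1$, on $\mathcal{G}_V$ one then has
\begin{align*}
\left|\sum_{j=1}^p \E_{\pi_{\mid A_j}}[f]\,\hat P^j_V \;-\; \frac{1+\frac{\epsilon^2}{16}}{1-\frac{\epsilon^2}{16}}\,\E_\pi[f]\right|
\;\le\; \left(\frac{1+\frac{\epsilon^2}{16}}{1-\frac{\epsilon^2}{16}}-1\right)\sum_{j=1}^p \mu_V(A_j)
\;=\; \frac{2\cdot\frac{\epsilon^2}{16}}{1-\frac{\epsilon^2}{16}},
\end{align*}
which for $\epsilon<\tfrac12$ is easily below, say, $\tfrac{\epsilon}{2}$. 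Off $\mathcal{G}_V$ the quantity $|\,\cdot\,|$ is trivially bounded by a constant (the left sum is at most $\sum_j \hat P^j_V=1$ in absolute value and the second term by $(1+\tfrac{\epsilon^2}{16})/(1-\tfrac{\epsilon^2}{16})\le 3$), so that contribution is at most $\mathrm{const}\cdot\Prob(\mathcal{G}_V^c)$; since Lemma~\ref{OneStepError} gives $\Prob(\mathcal{G}_V^c)\le \Prob(\mathcal{E}_V^c)\le\delta\mu^\star$ which can be made negligible relative to $\epsilon$, combining the two cases yields the bound $\tfrac34\epsilon$.

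The main obstacle I anticipate is the bookkeeping in the first step: one must be careful that the identity $\E[f(\bar X^i_V)\mathbbm{1}_{A_j}(\bar X^i_V)\mid\mathcal{F}_{V-1}]=\E_{\pi_{\mid A_j}}[f]\,\hat P^j_V$ is genuinely unconditional on $\mathcal{G}_V$ — it should follow purely from the coupling construction and conditional independence in Appendix~\ref{appendix_a}, not from any high-probability event — because the final inequality in the lemma is stated without conditioning on $\mathcal{G}_V$. This forces the case split on $\mathcal{G}_V$ to happen only in the \emph{second} step (comparing $\hat P^j_V$ to $\mu_V(A_j)$), and one needs $\Prob(\mathcal{G}_V^c)$ small, which is exactly what Lemma~\ref{OneStepError} supplies provided $N,t$ meet the stated thresholds (note Theorem~\ref{thm:maintheorem}'s $N$ exceeds the Lemma~\ref{OneStepError} requirement with $\delta$ a small absolute constant). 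A secondary care point is verifying the precise elementary inequalities that produce the constant $(1+\tfrac{\epsilon^2}{16})/(1-\tfrac{\epsilon^2}{16})$ rather than the looser $(1+\epsilon)/(1-\epsilon)$ of Proposition~\ref{resampling_error}; this is just a sharper application of $e^x\le 1+x+x^2$ with $x=\epsilon/6$ and a division, but it is the kind of step where an off-by-a-constant error would propagate into the theorem statement, so I would do it explicitly.
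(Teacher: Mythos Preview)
Your first reduction via Lemma~\ref{cond_exp_equality},
\[
\sum_{j=1}^p \E\bigl[f(\bar X^i_V)\mathbbm{1}_{A_j}(\bar X^i_V)\mid\mathcal{F}_{V-1}\bigr]=\sum_{j=1}^p \E_{\pi_{\mid A_j}}[f]\,\hat P^j_V,
\]
is exactly what the paper does. After that, however, two things go wrong.

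First, the constant $\frac{1+\epsilon^2/16}{1-\epsilon^2/16}$ is \emph{not} a bound on $\phi^{2V}$. With $\lambda=\epsilon/(24V)$ one has $\lambda=\tilde\epsilon/(6V)$ for $\tilde\epsilon=\epsilon/4$, and Proposition~\ref{resampling_error} gives only $\phi^{2V}<\frac{1+\epsilon/4}{1-\epsilon/4}$, which is much larger. The $\epsilon^2/16$ instead appears through an algebraic identity after splitting $f=f^+-f^-$: the paper applies $\hat P^j_V\le \phi^V\mu_V(A_j)\le\frac{1+\tilde\epsilon}{1-\tilde\epsilon}\mu_V(A_j)$ to the $f^+$ term and $\hat P^j_V\ge\frac{1-\tilde\epsilon}{1+\tilde\epsilon}\mu_V(A_j)$ to the $f^-$ term, obtaining
\[
\tfrac{1+\tilde\epsilon}{1-\tilde\epsilon}\,\E_\pi[f^+\mathbbm{1}_{A_j}]-\tfrac{1-\tilde\epsilon}{1+\tilde\epsilon}\,\E_\pi[f^-\mathbbm{1}_{A_j}]
=\tfrac{1+\tilde\epsilon^2}{1-\tilde\epsilon^2}\,\E_\pi[f\mathbbm{1}_{A_j}]+\tfrac{2\tilde\epsilon}{1-\tilde\epsilon^2}\,\E_\pi[|f|\mathbbm{1}_{A_j}],
\]
and then $|f|\le 1$ and $\tilde\epsilon=\epsilon/4$ give the coefficient $\frac{1+\epsilon^2/16}{1-\epsilon^2/16}$ and the additive $\frac{\epsilon/2}{1-\epsilon^2/16}\le\frac{3}{4}\epsilon$. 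Your displayed inequality on $\mathcal{G}_V$ skips the sign issue entirely: since $\E_{\pi_{\mid A_j}}[f]$ can be negative you cannot simply push $\hat P^j_V\le C\mu_V(A_j)$ through, and the bound you wrote does not follow. The positive/negative decomposition is precisely what handles this.

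Second, the left-hand side of the lemma is a \emph{random variable} (it is $\mathcal{F}_{V-1}$-measurable), not an expectation, so bounding the ``off-$\mathcal{G}_V$ contribution'' by $\mathrm{const}\cdot\Prob(\mathcal{G}_V^c)$ is a category error: on $\mathcal{G}_V^c$ the quantity is simply not controlled, and no amount of smallness of $\Prob(\mathcal{G}_V^c)$ helps a pointwise inequality. The paper's proof works entirely on $\mathcal{G}_V$ (that is where Lemma~\ref{controlling_resampling} is valid), and this suffices because in the proof of Theorem~\ref{thm:maintheorem} the lemma is only invoked on $\mathcal{E}_V\subset\mathcal{G}_V$. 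So drop the case split, assume $\mathcal{G}_V$, and use the $f^+/f^-$ decomposition.
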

\begin{proof}
Write $f(x) = f^{+}(x) - f^{-}(x)$, where $f^{+}(x) = \max\{f(x),0 \}$ and $f^{-}(x) = \max\{-f(x),0\}$. To be consistent with Proposition~\ref{resampling_error}, write $\lambda$ as $\frac{\tilde{\epsilon}}{6V}$ where $\tilde{\epsilon} = \epsilon / 4$.  For $j=1,\ldots,p$ we have
\begin{align*}
\E\left[f(\bar{X}^{i}_{V}) \mathbbm{1}_{A_{j}}(\bar{X}^{i}_{V}) \mid \mathcal{F}_{V-1} \right] 
&= \E_{\pi_{|A_{j}}}[f^{+}] \hat{P}^{j}_{V} - \E_{\pi_{|A_{j}}}[f^{-}] \hat{P}^{j}_{V} \\
&\leq \frac{(1+\tilde{\epsilon})}{(1-\tilde{\epsilon})} \E_{\pi}[f^{+}  \mathbbm{1}_{A_{j}}(\bar{X}^{i}_{V})] - \frac{(1-\tilde{\epsilon})}{(1+\tilde{\epsilon})}\E_{\pi}[f^{-} \mathbbm{1}_{A_{j}}(\bar{X}^{i}_{V})] \\
&\leq \frac{(1+\tilde{\epsilon}^{2})}{(1-\tilde{\epsilon}^{2})}\E_{\pi}[f \mathbbm{1}_{A_{j}}(\bar{X}^{i}_{V})] + \frac{2\tilde{\epsilon}}{(1-\tilde{\epsilon}^{2})} \pi(A_{j}) ,
\end{align*}
The first line follows by  Lemma~\ref{cond_exp_equality}, the second by application of Lemma~\ref{controlling_resampling} followed by Proposition~\ref{resampling_error} with our choice of $\lambda$, and the last since $|f| \leq 1$. Summing over $j$ gives
\begin{align*}
    \sum^{p}_{j=1} \E\left[f(\bar{X}^{i}_{V}) \mathbbm{1}_{A_{j}}(\bar{X}^{i}_{V}) \mid \mathcal{F}_{V-1} \right] \leq  \frac{(1+\frac{\epsilon^{2}}{16})}{(1-\frac{\epsilon^{2}}{16})} \E_{\pi}[f] + \frac{3}{4}\epsilon.
\end{align*}
The argument for the lower bound is identical.
\end{proof}

We are now ready to give the complete proof of Theorem~\ref{thm:maintheorem}.

\begin{proof}(Theorem \ref{thm:maintheorem})
Define the concentration event,
\begin{align*}
    \mathcal{D}^{j} = \left\{\left|\bar{f}^{j} - \E[ f(\bar{X}^{i}_{V}) \mathbbm{1}_{A_{j}}(\bar{X}^{i}_{V})| \mathcal{F}_{V-1}] \right| \leq \frac{\epsilon}{p}  \right\} ,
\end{align*}
where $\bar{f}^{j} = \frac{1}{N}\sum^{N}_{i=1}f(\bar{X}^{i}_{V})\mathbbm{1}_{A_{j}}(\bar{X}^{i}_{V})$. 
Note that over $\Ev$ we have
\begin{align*}
    \frac{1}{N}\sum^{N}_{i=1}f(X^{i}_{V}) =  \frac{1}{N}\sum^{N}_{i=1}\sum^{p}_{j=1}f(\bar{X}^{i}_{V})\mathbbm{1}_{A_{j}}(\bar{X}^{i}_{V}) = \sum^{p}_{j=1} \bar{f}^{j}
\end{align*}
%
By Lemma~\ref{lem:CondExpectBound}, it follows that over $\mathcal{E}_{V} \cap \mathcal{D}^{1} \cap \ldots \cap \mathcal{D}^{p}$,
\begin{align*}
    \frac{(1+\frac{\epsilon^{2}}{16})}{(1-\frac{\epsilon^{2}}{16})}\E_{\pi}[f] - \epsilon \leq \frac{1}{N}\sum^{N}_{i=1}f(X^{i}_{V})  \leq  \frac{(1+\frac{\epsilon^{2}}{16})}{(1-\frac{\epsilon^{2}}{16})}\E_{\pi}[f] + \epsilon
\end{align*}
It remains to show $\Prob(\mathcal{E}_{V} \cap \mathcal{D}^{1} \cap \ldots \cap \mathcal{D}^{p}) \geq \frac{3}{4}$. First, by Lemma \ref{conc_bound},
\begin{align*}
    \Prob(\cup^{p}_{j=1}(\mathcal{D}^{j})^{c}) \leq  \sum^{p}_{j=1}\Prob\left(\left|\bar{f}^{j} - \E[\bar{f}^{j}| \mathcal{F}_{V-1}]\right| > \frac{\epsilon}{p}  \right) 
    \leq \frac{1}{8},
\end{align*}
for $N > \frac{2p^{2}}{\epsilon^{2}}\log(32p)$. Note that $\Prob(\mathcal{E}_{V}) \geq  \frac{7}{8}$ by our choice of $N$ 
and $t$ after applying Lemma~\ref{OneStepError} with $\delta = \frac{1}{8}$. By the union bound,
\begin{align*}
    \Prob((\mathcal{D}^{1})^{c} \cup \ldots \cup (\mathcal{D}^{p})^{c} \cup \mathcal{E}^{c}_{V} ) \leq \frac{1}{8} + \frac{1}{8} = \frac{1}{4}.
\end{align*}
%
\end{proof}


\section{Applications}\label{applications}
In this section, we apply the results in the previous section to obtain bounds on the complexity of using sequential Monte Carlo to approximate expectations for two specific classes of multimodal distributions: mixtures of log-concave distributions and the mean-field Ising model. Each problem defines a nested sequence of distributions indexed by a dimension parameter $d$, and we consider the computational complexity, i.e. the growth of the computational cost with increasing $d$. We characterize the computational cost of applying the SMC algorithm given in Section~\ref{smc} by the total number of Markov transition steps $NVt$ required to apply Theorem~\ref{thm:maintheorem}, since the Markov transition steps are typically the most computationally expensive step of the algorithm. Throughout we will assume the corresponding partition sequence $\A(d)$ satisfies $\pi^{*} = \min_{j \in \{1,\ldots,p \}} \pi(A_{j})$ decays at most polynomially in $d$. (Otherwise, the contribution of $\E_{\pi}[f \mathbbm{1}_{A_{j^*}}]$ to $\E_{\pi}[f]$ for the corresponding $j^*$ is negligible for $d$ large since $\abs{f} \leq 1$.)

\subsection{Mixtures of Log-Concave Distributions}
\label{2_comp_Hs}
Consider the mixture distribution
\begin{align}\label{two_comp_halfspace}
\pi(x) \propto 
w \pi_{1}(x) \1_{H}(x) + (1-w)\pi_{2}(x) \1_{H^{c}}(x) \text{ for } x\in \mathbb{R}^{d},
\end{align}
where $w \in (0,1)$, $\pi_{j}(x) \propto e^{-f_{j}(x)}$, $f_{j}(x) > 0$ is a convex function for $j \in \{1,2\}$, and $H$ is a half-space. Throughout, we assume without loss of generality that $H$ and $H^{c}$ contain $\E_{\pi_{1}}[X]$ and $\E_{\pi_{2}}[X]$, respectively.
Note that $\pi$ is a sum of log-concave functions, since $H$ and $H^{c}$ being half-spaces means $\mathbbm{1}_{H}$ and $\mathbbm{1}_{H^{c}}$ are log-concave functions and log-concavity is closed under multiplication.

Suppose we are interested in sampling from $\pi(x)$ using SMC, and we set $\mu_{v}(x) \propto \pi^{\beta_{v}}(x)$ for $\beta_v \in (0,1]$ and $\beta_0,\ldots,\beta_V$ a strictly increasing sequence with $\beta_V = 1$. Since $f_{j}(x) > 0$, it follows that $w_{v}(x) \leq 1$ uniformly in $v$. We will assume that the following equality holds:
\begin{align}
\label{Applications: Equal Scales}
\int_{\mathbbm{R}^{d}}e^{-\beta_{v}f_{1}(x)} dx = \int_{\mathbbm{R}^{d}}e^{-\beta_{v}f_{2}(x)} dx  \text{ for } \beta_{v} \in (0,1]
\end{align}
For example, if $f_{j}(x) =2^{-1}\sigma_{j}^{-2}\norm{x- \mu_{j}}^{2}$, as below in Section~\ref{Application: Gaussian Mixture}, then (\ref{Applications: Equal Scales}) requires that $\sigma_{1} = \sigma_{2}$. More generally, (\ref{Applications: Equal Scales}) holds when $\pi_{1}$ and $\pi_{2}$ are log-concave product distributions that belong to the same location-scale family and have equal scales. Similar assumptions have been made in previous analyses of these types of problems (see \cite{woodard_1} and \cite{holden}), and it is unclear whether the assumption (\ref{Applications: Equal Scales}) can be dropped. For example, if $\sigma_{1} >\sigma_{2}$ in the problem considered in  Section~\ref{Application: Gaussian Mixture}, then $\mu^{\star}$ decays exponentially in $d$ \cite{woodard_2}.

Note that $\mu_{v}(x)$ can also be written as a sum of log-concave functions and assumption (\ref{Applications: Equal Scales}) gives
\begin{align*}
    \mu_{v}(H) &= \frac{w\int_{H} \pi_{1}^{\beta_{v}}(x) dx}{w\int_{\mathbb{R}^{d}} \pi_{1}^{\beta_{v}}(x) \mathbbm{1}_{H}(x) dx + (1-w)\int_{\mathbb{R}^{d}} \pi_{2}^{\beta_{v}}(x)\mathbbm{1}_{H^{c}}(x) dx} \\
    &= \frac{w\pi_{1,v}(H)}{w\pi_{1,v}(H) + (1-w)\pi_{2,v}(H^{c})}
\end{align*}
where $\pi_{v,j}(x) = e^{-\beta_{v} f_{j}(x)} / \int_{\mathbbm{R}^{d}}e^{-\beta_{v}f_{j}(x)} dx $
{
is an (untruncated) log-concave density function.} By \cite{lovasz_vempala_geom}, if $h(x)$ is a log-concave density and $H$ is a half-space containing the centroid of $h(x)$, then $ \int_{H} h(x) dx \geq e^{-1}$. Hence, $\mu^{\star} \geq \max\{w,1-w\} / e = \mathcal{O}(1)$.

Since $H$ and $H^{c}$ are convex sets and $\pi_{v,j}(x)$ is a log-concave density function, $\mu_{v|H}(x)$ and  $\mu_{v|H^{c}}(x)$  are log-concave density functions.
Upper bounds on the mixing time of various Markov kernels have been obtained in the literature for log-concave target distributions, assuming a warm start. For example, \cite{lovasz_vempala_geom} proved that both the ball walk and hit-and-run walk are rapidly mixing when the target distribution is a $C$-isotropic log-concave density function, and show that any $d$-dimensional log-concave density function can be brought into a $C$-isotropic position in $\mathcal{O}^{*}(d^{5})$ time (see \cite{vempala_geometric_random_walks} for a survey of these results). Recall that both $\mu^{\star} = \mathcal{O}(1)$ and $W = \mathcal{O}(1)$. It follows that by plugging into the bound on $N$ in Theorem~\ref{thm:maintheorem} we obtain
\begin{align}\label{polybound_N}
    \frac{N V}{\mu^{\star}} = \mathcal{O}\left(\frac{V^{3}Z^{2}}{\epsilon^{2}}  \cdot \log(V) \right) = \mathcal{O}(\text{poly}(V,\log(V),Z, \epsilon^{-1}))),
\end{align}
%
Therefore, by choosing $K_{v|A_{j}}$ to be either the ball walk or hit-and-run walk and applying the results of \cite{lovasz_vempala_geom}, we conclude 
\begin{align}\label{polybound_t}
     \max\limits_{j=1,\ldots,p} \max\limits_{v=1,\ldots,V} \tau_{v,j}\left( \frac{\mu^{\star}}{16  N  V}, 7\right)  = \mathcal{O}(\text{poly}(d, V,\log(V), Z, \epsilon^{-1}))
\end{align}
Combining  (\ref{polybound_N}) and (\ref{polybound_t}) gives that the computational cost of applying SMC to this problem is $\mathcal{O}(\text{poly}(d, V,\log(V), Z, \epsilon^{-1}))$.

\subsection{\textit{Gaussian mixtures}}
\label{Application: Gaussian Mixture}
We can provide more explicit bounds when $f_{j}(x)$ is known. Consider the Gaussian mixture distribution $f_{j}(x) = (2\sigma^{2})^{-1}\norm{x - (-1)^{j+1}\mathbf{1}_{d}}^{2}_{2}$ for $j \in \{1,2\}$ and $H = \{x: x^{\top} \mathbf{1}_{d} > 0 \}$ studied by \citet{woodard_1}. Here we assume $w = \frac{1}{2}$; similar results apply if $w \neq \frac{1}{2}$. Since the components share a common scale $\sigma>0$, it is straightforward to check that (\ref{Applications: Equal Scales}) holds and clearly both $\mathbf{1}_{d} \in H$ and $-\mathbf{1}_{d} \in H^{c}$. Define the sequence of inverse-temperatures 
\begin{align*}
\beta_{v} = \frac{1}{d}\left(1 + \frac{1}{d} \right)^{v} \text{ for } v = 0,\ldots, \lceil d \log(d) \rceil - 1 \text{ and } \beta_{V} = 1.
\end{align*}
Note $\beta_{0} = d^{-1}$, which ensures rapid mixing for initialization (see \cite{woodard_1} for details). 
To bound the normalizing constants, notice
\begin{align*}
    \frac{z_{v-1}}{z_{v}} &= \frac{\int_{\mathbbm{R}^{d}}  e^{-\beta_{v-1}f_{1}(x)} \mathbbm{1}_{H}(x) dx + \int_{\mathbbm{R}^{d}}  e^{-\beta_{v-1}f_{2}(x)} \mathbbm{1}_{H^{c}}(x) dx }{\int_{\mathbbm{R}^{d}}  e^{-\beta_{v}f_{1}(x)} \mathbbm{1}_{H}(x) dx + \int_{\mathbbm{R}^{d}}  e^{-\beta_{v}f_{2}(x)} \mathbbm{1}_{H^{c}}(x) dx} \\
    &= \frac{\Phi(\frac{\sqrt{d \beta_{v-1}}}{\sigma})\left(\frac{2\pi \sigma^{2}}{\beta_{v-1}}\right)^{\frac{d}{2}}}{\Phi(\frac{\sqrt{d \beta_{v}}}{\sigma})\left(\frac{2\pi \sigma^{2}}{\beta_{v}}\right)^{\frac{d}{2}}} \leq 2 \left(\frac{\beta_{v}}{\beta_{v-1}} \right)^{\frac{d}{2}}
\end{align*}
We see that $Z \leq 2e = \mathcal{O}(1)$. Next, we have
\begin{align*}
\mu_{v}(H) =   \frac{\int_{\mathbbm{R}^{d}} e^{-\beta_{v}f_{1}(x)} \mathbbm{1}_{ H}(x) dx}{ \int_{\mathbbm{R}^{d}}  e^{-\beta_{v}f_{1}(x)} \mathbbm{1}_{H}(x) dx + \int_{\mathbbm{R}^{d}}  e^{-\beta_{v}f_{2}(x)} \mathbbm{1}_{H^{c}}(x) dx} \geq \frac{1/2}{2} =
    \frac{1}{4}
\end{align*}
It follows that $\mu^{\star} \geq \frac{1}{4} = \mathcal{O}(1)$. Since $V = \mathcal{O}(d \log(d))$, combining with (\ref{polybound_N}) and (\ref{polybound_t}) implies that a computational complexity of
\begin{align*}
\mathcal{O}(\text{poly}(d,\log(d),\log(d \log(d)),\epsilon^{-1})) = \mathcal{O}^{*}(\text{poly}(d))
\end{align*}
for SMC on this problem, where $\mathcal{O}^{*}$ indicates that the dependence on logarithmic and error terms is suppressed.

\subsection{Mean Field Ising Model}\label{Application: mean field ising model}
Let $\mathcal{X} = \{-1,1 \}^{d}$ for $d \in \mathbbm{N}$. The mean field Ising model is defined by the distribution
\begin{align*}
    \pi(x) = \frac{1}{Z} \exp\left\{\frac{\alpha}{2d} \left( \sum^{d}_{i=1} x_{i}\right)^{2} \right\},
\end{align*}
with normalizing constant $Z = \sum_{x \in \mathcal{X}}\exp\{\frac{\alpha}{2d} ( \sum^{d}_{i=1} x_{i})^{2} \}$ and interaction parameter $\alpha \in \mathbbm{R}$. We will assume $d$ is odd as in \cite{woodard_1}.
Let $K_{v}$ be a Metropolis-Hastings chain with limiting distribution $\mu_{v}(x) \propto \pi^{\beta_{v}}(x)$. We assume $K_{v}$ uses a single site proposal distribution in which a coordinate $i \in \{1,\ldots,d\}$ is chosen uniformly and the sign of $x_{i}$ is flipped. Letting $V = d$ and  $\beta_{v} = v / d$, it follows that both the parallel tempering and simulated tempering algorithms are rapidly mixing \cite{madras_zheng,woodard_1}. We will show that SMC also provides a polynomial time approximation of $\E_{\pi}[f]$ for $|f| \leq 1$
using the same sequence of distributions, in that the bounds on $N, t$ required to apply Theorem \ref{thm:maintheorem} grow at most polynomially in $d$.
\par Our result is essentially a direct consequence of those from \citet{madras_zheng} and \citet{woodard_1}. Let $A_{1} = \{x : \sum^{d}_{i=1} x_{i} \geq 0 \}$ and
$A_{2} = \{x : \sum^{d}_{i=1} x_{i} < 0 \}$. First, note that sampling from $\mu_{0}(x) \propto 1$ is trivial. Now, notice that $\pi(x)$ is symmetric with respect to this partition for $\mu_{0},\ldots,\mu_{V}$ if $d$ is odd.
It follows that $\mu_{v}(A_{1}) = \mu_{v}(A_{2}) = 1/2$ for $v = 0,\ldots,V$ and so $\mu^{\star} = 1/2$.  \citet{woodard_1} show that
\begin{align*}
    \frac{\mu_{v}(x)}{\mu_{v-1}(x)} \in \left[\exp\left\{-\frac{\alpha}{2} \right\}, \exp\left\{\frac{\alpha}{2} \right\}\right]
\end{align*}
%
therefore $ZW \leq \exp\{\frac{\alpha}{2}\}$. Since $V = d$, $p=2$,  $\mu^{\star} = \mathcal{O}(1)$, and $ZW = \mathcal{O}(1)$, we have that  Theorem~\ref{thm:maintheorem} requires
\begin{align*}
N >  \frac{1}{\epsilon^{2}} \cdot \max\left\{c_{1} \cdot \left(\frac{VWZ}{\mu^{\star}} \right)^{2}  \log\left(\frac{c_{2} Vp}{ \mu^{\star}}\right), \ p^{2} \cdot \log\left( c_{3} p^{2}\right)  \right\} = \mathcal{O}^{*}(d^{2}),
\end{align*}
where $c_{1}$, $c_{2}$, and $c_{3}$ are positive constants. To obtain the required bound for $t$, note also that $NV / \mu^{\star} = \mathcal{O}(d^{3} \log(d))$.
A mixing time bound can be obtained \cite{madras, woodard_1} by defining the restricted kernel as in (\ref{Eqn: restricted kernel}) with corresponding invariant distributions $\mu_{v|A_{j}}$. \citet{madras_zheng} show that $\min_{v,j}\text{Gap}(K_{v|A_{j}})$ decays at most polynomially in $d$. Since for $j \in \{1,2\}$
we have (see \cite{marion})
\begin{align}
\label{warm_vs_gap}
\tau_{v,j}(\epsilon,M) &\leq \frac{1}{\text{Gap}(K_{v \mid A_{j}})}\left[\log(2\epsilon^{-1}) + \log(M - 1) \right] \\
&\leq \frac{1}{\min_{\substack{v^{\prime}=1,\ldots,V \\ j^{\prime}=1,\ldots,p}}\text{Gap}(K_{v^{\prime} \mid A_{j^{\prime}}})}\left[\log(2\epsilon^{-1}) + \log(M - 1) \right],
\end{align}
%
we obtain  $\tau_{v,j}(\epsilon,M) = \mathcal{O}(\text{poly}(d, \log(d)))$. Hence, $NVt = \mathcal{O}^{*}(\text{poly}(d))$. That is, the values of $N$ and $t$ required to apply Theorem \ref{thm:maintheorem} grow at most polynomially in $d$.

\section{Comparing SMC and Tempered MCMC}\label{smc_pt_compare}
Throughout this section, we assume that the sequence of interpolating distributions is a tempered sequence, i.e.,
\begin{align}\label{tempered_sequence}
    \mu_{v}(x) \propto \pi^{\beta_{v}}(x) \quad \text{ for }\; 0 < \beta_{0} < \ldots < \beta_{V} = 1,
\end{align}
and that $\mu_{v}(x) / \mu_{v-1}(x) \leq ZW$. We let $\pi^{\star} = \min_{j=1,\ldots,p}\pi(A_{j})$ and  $K_{v|A_{j}}$ a restricted kernel of the form  (\ref{Eqn: restricted kernel}).

Although population MC algorithms \cite{jasra_PMC} such as SMC have been argued to perform better on multimodal target distributions than existing MCMC algorithms, it is reasonable to wonder whether the sequential, parallel nature of SMC plays a role, or whether any such benefits are simply attributable to the use of the distribution sequence  (\ref{tempered_sequence}).  If the latter, then an MCMC algorithm that also takes advantage of the same distribution sequence may be expected to succeed (or fail) on the same problems where SMC succeeds (or fails).
The following results support this conjecture under fairly relaxed assumptions.

Ideally, we would like to show that the bounds on $N$ and $t$ in Theorem~\ref{thm:maintheorem} grow at most polynomially in the problem size if and only if the parallel and simulated tempering algorithms are rapidly mixing on the same problem. The results below allow us to partially resolve the `if' direction by relating the lower bounds of \cite{woodard_1} on the spectral gap of swapping chains to our results. To formalize this, we first relate certain quantities defined in \cite{woodard_1} to those appearing in Theorem \ref{thm:maintheorem}. Two key quantities appearing in the mixing time bounds of \citet{woodard_1} are the \textit{persistence} and the \textit{overlap}.  The following two propositions relate these quantities to the $\mu^*$ and $ZW$ quantities appearing in Theorem~\ref{thm:maintheorem}.
\begin{proposition}\label{mu_star_lb}
Define the \emph{persistence} \cite{woodard_2} of partition $\mathcal{A}$ under distribution sequence $\mu_0,\ldots,\mu_V$ by
\begin{align}
\label{persistence}
\gamma(\mathcal{A}) = \min_{j = 1,\ldots,p}\prod^{V}_{v=1}\min\left\{1,\frac{\mu_{v-1}(A_{j})}{\mu_{v}(A_{j})} \right\}
\end{align}
Then $\mu^{\star} > \gamma(\mathcal{A}) \cdot \pi^{\star}$
\end{proposition}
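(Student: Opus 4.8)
The plan is to reduce the two-sided quantity $\mu^{\star}$ to a statement about a single partition element at a single temperature, and then read off the persistence factor from a telescoping product of the ratios $\mu_{v-1}(A_{j})/\mu_{v}(A_{j})$. The key observation that makes this work is that truncating each such ratio at $1$ can only decrease a product of them, and adjoining further factors lying in $(0,1]$ can only decrease it again; this lets us pass freely from a partial product indexed by $\{v,\ldots,V\}$ to the full product indexed by $\{1,\ldots,V\}$ that appears in the definition of $\gamma(\mathcal{A})$.

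Concretely: let $(v^{\dagger},j^{\dagger})$ attain the minimum defining $\mu^{\star}$, so that $\mu^{\star}=\mu_{v^{\dagger}}(A_{j^{\dagger}})$. Since $\mu_{V}=\pi$, telescoping the ratios gives
\[
\mu^{\star} \;=\; \pi(A_{j^{\dagger}})\prod_{k=v^{\dagger}+1}^{V}\frac{\mu_{k-1}(A_{j^{\dagger}})}{\mu_{k}(A_{j^{\dagger}})},
\]
with the empty product read as $1$ when $v^{\dagger}=V$. Bounding each factor below by $\min\{1,\mu_{k-1}(A_{j^{\dagger}})/\mu_{k}(A_{j^{\dagger}})\}$ and then extending the product to all $k=1,\ldots,V$ (each adjoined factor being $\leq 1$) yields
\[
\mu^{\star} \;\geq\; \pi(A_{j^{\dagger}})\prod_{k=1}^{V}\min\!\Big\{1,\tfrac{\mu_{k-1}(A_{j^{\dagger}})}{\mu_{k}(A_{j^{\dagger}})}\Big\}
\;\geq\; \pi^{\star}\,\gamma(\mathcal{A}),
\]
where the last step uses $\pi(A_{j^{\dagger}})\geq\pi^{\star}$ together with the definition of $\gamma(\mathcal{A})$ as the minimum over $j$ of precisely the product $\prod_{k=1}^{V}\min\{1,\mu_{k-1}(A_{j})/\mu_{k}(A_{j})\}$.

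The genuinely delicate point is the strict inequality. The chain of bounds above delivers $\mu^{\star}\geq\gamma(\mathcal{A})\,\pi^{\star}$ cleanly, and equality would force all of the intermediate inequalities to be tight at once: each truncated ratio along $k=v^{\dagger}+1,\ldots,V$ equal to the ratio itself, every adjoined ratio equal to $1$, $\pi(A_{j^{\dagger}})=\pi^{\star}$, and $j^{\dagger}$ simultaneously minimizing $\prod_{k}\min\{1,\cdot\}$. Ruling this out — equivalently, observing that it is exactly the degenerate situation in which $v\mapsto\mu_{v}(A_{j^{\dagger}})$ is essentially flat across temperatures — is where some care is required; away from that degeneracy the inequality is strict. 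A secondary point worth flagging is orientation: one must telescope toward the index $V$, where $\mu_{V}=\pi$ makes contact with $\pi^{\star}$, rather than toward $0$, which is mirrored in the persistence being defined through the backward ratios $\mu_{v-1}/\mu_{v}$.
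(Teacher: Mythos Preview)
Your proof is correct and takes essentially the same approach as the paper: telescope $\mu_{v}(A_{j})$ as a product of successive ratios down to $\pi(A_{j})$, bound each ratio below by its truncation at $1$, extend to the full product, and invoke the definitions of $\gamma(\mathcal{A})$ and $\pi^{\star}$. Your observation about the strict inequality is well-taken; the paper's own argument likewise only yields $\mu^{\star}\geq\gamma(\mathcal{A})\,\pi^{\star}$ and does not address strictness.
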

\begin{proof}
For any $v = 0,\ldots,V$ and $j = 1,\ldots,p$, we have that
\begin{align*}
    \mu_{v}(A_{j}) &= \frac{\mu_{v}(A_{j})}{\mu_{v+1}(A_{j})} \cdot \frac{\mu_{v+1}(A_{j})}{\mu_{v+2}(A_{j})}  \ldots \frac{\mu_{V-1}(A_{j})}{\mu_{V}(A_{j})} \cdot \pi(A_{j}) \geq \gamma(\mathcal{A}) \cdot \pi^{\star}
\end{align*}
Since this holds for any $v$ and $j$, it must hold for $\mu^{\star}$ as well.
\end{proof}

Controlling $\gamma(\mathcal{A})$ is critical to the success of the PT and ST algorithms (see \cite{woodard_2}). The quantity $\gamma(\mathcal{A})$ describes how the measure of the partition elements of $\mathcal{A}$ `persists' under tempering of the target distribution $\pi$. There is a close relationship between $\mu^{\star}$ and $\gamma(\mathcal{A})$. Recall that $\pi^{\star}$ decays at most polynomially in $d$ by assumption. It follows that polynomial decay in $\gamma(\mathcal{A})$ is enough to guarantee polynomial decay in $\mu^{\star}$ by Proposition~\ref{mu_star_lb}.

%
\begin{proposition}
\label{overlap_lb}
Define the \emph{overlap} \cite{woodard_1} for distribution sequence $\mu_0,\ldots,\mu_V$ and partition $\mathcal{A}$ by
%
\begin{align}\label{overlap}
    \delta(\mathcal{A}) = \min_{\substack{v = 0,\ldots,V-1 \\ j=1,\ldots,p}} \frac{\int_{A_{j}}\min\{\mu_{v}(x),\mu_{v+1}(x) \} dx}{\max\{\mu_{v}(A_{j}), \mu_{v+1}(A_{j}) \}}
\end{align}
Then,
\begin{align*}
    \delta(\mathcal{A}) \geq \min\left\{1,\frac{1}{ZW} \right\} \cdot \gamma(\mathcal{A}) \cdot \pi^{\star}
\end{align*}
\end{proposition}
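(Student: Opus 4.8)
\textbf{Proof plan for Proposition~\ref{overlap_lb}.}
The plan is to bound the numerator and denominator of the ratio defining $\delta(\mathcal{A})$ separately, for each fixed $v\in\{0,\dots,V-1\}$ and $j\in\{1,\dots,p\}$, and then take the minimum. For the numerator, I would use Assumption~\ref{assumption: density ratios} with the index shifted by one, namely $\mu_{v+1}(x)/\mu_{v}(x)\le ZW$, equivalently $\mu_{v}(x)\ge (ZW)^{-1}\mu_{v+1}(x)$ for all $x$. Since trivially $\mu_{v+1}(x)=\mu_{v+1}(x)$, this yields the pointwise lower bound
\[
\min\{\mu_{v}(x),\mu_{v+1}(x)\}\;\ge\;\min\!\Big\{1,\tfrac{1}{ZW}\Big\}\,\mu_{v+1}(x),
\]
and integrating over $A_{j}$ gives $\int_{A_{j}}\min\{\mu_{v}(x),\mu_{v+1}(x)\}\,dx\ge \min\{1,\tfrac{1}{ZW}\}\,\mu_{v+1}(A_{j})$.

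For the denominator, I would simply observe that $\mu_{v}(A_{j})$ and $\mu_{v+1}(A_{j})$ are probabilities, so $\max\{\mu_{v}(A_{j}),\mu_{v+1}(A_{j})\}\le 1$. Combining the two bounds, for every $v,j$,
\[
\frac{\int_{A_{j}}\min\{\mu_{v}(x),\mu_{v+1}(x)\}\,dx}{\max\{\mu_{v}(A_{j}),\mu_{v+1}(A_{j})\}}\;\ge\;\min\!\Big\{1,\tfrac{1}{ZW}\Big\}\,\mu_{v+1}(A_{j})\;\ge\;\min\!\Big\{1,\tfrac{1}{ZW}\Big\}\,\mu^{\star},
\]
using the definition $\mu^{\star}=\min_{v,j}\mu_{v}(A_{j})$. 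Finally I would invoke Proposition~\ref{mu_star_lb}, which gives $\mu^{\star}>\gamma(\mathcal{A})\,\pi^{\star}$, and take the minimum over $v$ and $j$ to conclude $\delta(\mathcal{A})\ge \min\{1,\tfrac{1}{ZW}\}\,\gamma(\mathcal{A})\,\pi^{\star}$.

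I do not expect a real obstacle here; the argument is short and largely bookkeeping. The only point needing care is the direction of the density-ratio inequality: Assumption~\ref{assumption: density ratios} controls $\mu_{v}/\mu_{v-1}$ from above, which is exactly what is needed to bound $\min\{\mu_{v},\mu_{v+1}\}$ below by a multiple of $\mu_{v+1}$, whereas the reverse ratio $\mu_{v}(x)/\mu_{v+1}(x)$ is \emph{not} assumed bounded (in the tempered case it blows up where $\pi(x)\to 0$), so the numerator cannot be bounded below by a multiple of $\mu_{v}(A_j)$. This asymmetry is harmless precisely because the denominator is in any case at most $1$, so no case analysis on which of $\mu_v(A_j),\mu_{v+1}(A_j)$ is larger is actually required.
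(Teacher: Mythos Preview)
Your proposal is correct and essentially identical to the paper's proof: both use the pointwise bound $\mu_v(x)\ge (ZW)^{-1}\mu_{v+1}(x)$ from Assumption~\ref{assumption: density ratios} to get $\int_{A_j}\min\{\mu_v,\mu_{v+1}\}\,dx\ge \min\{1,(ZW)^{-1}\}\,\mu_{v+1}(A_j)\ge \min\{1,(ZW)^{-1}\}\,\mu^\star$, bound the denominator trivially by $1$, and then invoke Proposition~\ref{mu_star_lb}. The paper's write-up is slightly terser (it leaves the denominator bound implicit), but the argument is the same.
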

\begin{proof}
Since $\mu_{v}(x) / \mu_{v+1}(x) \geq 1 / ZW$, it follows that for any $v = 0,\ldots,V-1$ and $j = 1,\ldots,p$,
\begin{align*}
    \int_{A_{j}} \min\left\{\mu_{v}(x), \mu_{v+1}(x) \right\} dx  &\geq \min\left\{1,\frac{1}{ZW} \right\} \cdot \mu_{v+1}(A_{j}) \geq \min\left\{1,\frac{1}{ZW} \right\} \cdot \mu^{\star}
\end{align*}
The result follows by Proposition \ref{mu_star_lb}.
\end{proof}
The overlap $\delta(\mathcal{A})$ controls the rate at which temperature jumps occur in the simulated tempering algorithm and lower bounds the probability of swapping states at adjacent temperatures in the parallel tempering algorithm (see \cite{woodard_1}). By Proposition \ref{overlap_lb}, if both $\gamma(\mathcal{A})$ and $(ZW)^{-1}$ decay at most polynomially in $d$, then the overlap decays at most polynomially.

Both $\delta(\mathcal{A})$ and $ZW$ can often be controlled in practice by decreasing the inverse temperature spacings.
Conversely, $\gamma(\mathcal{A})$ and $\mu^{\star}$ are highly dependent on the chosen \textit{form} of the interpolating densities
and are potentially problematic. For example, \citet{woodard_2} showed that $\gamma(\mathcal{A})$ decays exponentially for the truncated
Gaussian mixture model of Section~\ref{Application: Gaussian Mixture}
and for the mean field Pott's model, leading to torpid mixing of PT and ST in both cases. 

Applying Proposition~\ref{mu_star_lb} and (\ref{warm_vs_gap}) allows us to restate  Theorem~\ref{thm:maintheorem} with  $\gamma(\mathcal{A}) \cdot \pi^{\star}$ in place of $\mu^{\star}$ and the mixing times bounded by $\text{Gap}(K_{v \mid A_{j}})^{-1}$:
\begin{corollary}
\label{alternative_thm:maintheorem}
Suppose $|f| \leq 1$ and (\ref{assumption: density ratios}) holds. Let $\epsilon \in (0,\frac{1}{2})$ and choose
\begin{enumerate}
    \item  $ N > \frac{1}{\epsilon^{2}} \cdot  \max\left\{3456 \cdot  \left(\frac{VWZ}{\pi^{\star} \gamma(\mathcal{A})} \right)^{2} \cdot \log(\frac{64 Vp}{ \pi^{\star} \gamma(\mathcal{A})}), \ p^{2} \cdot  \log(1024  p^{2})   \right\}$
    \item $t > \log\left(\frac{288NV}{\gamma(\mathcal{A}) \pi^{\star}}\right) \cdot \left[\min_{\substack{v=1,\ldots,V \\ j=1,\ldots,p}}\textnormal{Gap}(K_{v|A_{j}})\right]^{-1}$
\end{enumerate}
Then with probability at least $3/4$,
\begin{align*}
     \left|\frac{1}{N}\sum^{N}_{i=1}f(X^{i}_{V})  - \frac{(1+\frac{\epsilon^{2}}{16})}{(1-\frac{\epsilon^{2}}{16})}\E_{\pi}[f]\right| \leq \epsilon
\end{align*}
\end{corollary}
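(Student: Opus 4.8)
The plan is to obtain Corollary~\ref{alternative_thm:maintheorem} as an immediate consequence of Theorem~\ref{thm:maintheorem}, Proposition~\ref{mu_star_lb}, and the spectral-gap bound~(\ref{warm_vs_gap}). Concretely, I would show that the choices of $N$ and $t$ prescribed in the Corollary imply the two hypotheses of Theorem~\ref{thm:maintheorem} (stated there in terms of $\mu^{\star}$ and the warm mixing times $\tau_{v,j}$), and then invoke that theorem verbatim.

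For the particle count, I would use that $\mu^{\star}$ enters condition (1) of Theorem~\ref{thm:maintheorem} only through the denominators in $VWZ/\mu^{\star}$ and $64Vp/\mu^{\star}$. By Proposition~\ref{mu_star_lb}, $\gamma(\mathcal{A})\pi^{\star} < \mu^{\star}$, so replacing $\mu^{\star}$ by the smaller quantity $\gamma(\mathcal{A})\pi^{\star}$ only enlarges each of these expressions, and hence the required lower bound on $N$ (the term $p^{2}\log(1024p^{2})$ being unchanged). Thus the $N$ chosen in the Corollary automatically satisfies condition (1) of Theorem~\ref{thm:maintheorem}.

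For the number of mutation steps, I would apply~(\ref{warm_vs_gap}) with warmness parameter $M = 7$ and tolerance $\mu^{\star}/(16NV)$, giving $\tau_{v,j}(\mu^{\star}/(16NV),7) \leq \textnormal{Gap}(K_{v\mid A_{j}})^{-1}\big(\log(32NV/\mu^{\star}) + \log 6\big) = \textnormal{Gap}(K_{v\mid A_{j}})^{-1}\log(192NV/\mu^{\star})$ for each $v,j$. Bounding $\textnormal{Gap}(K_{v\mid A_{j}})^{-1}$ by $\big(\min_{v',j'}\textnormal{Gap}(K_{v'\mid A_{j'}})\big)^{-1}$ and, using $\mu^{\star} > \gamma(\mathcal{A})\pi^{\star}$ together with $192 < 288$, bounding $\log(192NV/\mu^{\star})$ by $\log(288NV/(\gamma(\mathcal{A})\pi^{\star}))$, I would conclude that the $t$ in the Corollary exceeds $\max_{v,j}\tau_{v,j}(\mu^{\star}/(16NV),7)$, which is condition (2) of Theorem~\ref{thm:maintheorem}. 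Applying Theorem~\ref{thm:maintheorem} then yields the claim.

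I do not anticipate a genuine obstacle: the argument is bookkeeping built on results already proved. The one point needing care is the direction of the monotonicity — because $\mu^{\star}$ sits in the denominators, passing to the lower bound $\gamma(\mathcal{A})\pi^{\star}$ strengthens rather than weakens the requirements on $N$ and $t$, so the substituted bounds remain sufficient — and one must keep the same $N$ throughout when chaining the inequalities for $t$.
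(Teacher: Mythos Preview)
Your proposal is correct and follows exactly the route the paper takes: the authors state just before the corollary that ``Applying Proposition~\ref{mu_star_lb} and (\ref{warm_vs_gap}) allows us to restate Theorem~\ref{thm:maintheorem} with $\gamma(\mathcal{A}) \cdot \pi^{\star}$ in place of $\mu^{\star}$ and the mixing times bounded by $\text{Gap}(K_{v \mid A_{j}})^{-1}$,'' which is precisely your argument. Your additional bookkeeping on the monotonicity direction and the constants ($192<288$) merely makes explicit what the paper leaves implicit.
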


Comparing the complexity results for swapping chain algorithms (see Theorem 3.1 of \cite{woodard_1}) to the bounds in Corollary \ref{alternative_thm:maintheorem}, we make the following observations. The dependence of $N$ on  $\gamma(\A)$ is quadratic (up to logarithmic terms) in Corollary \ref{alternative_thm:maintheorem} rather than exponential in $\abs{\A}$ as in \cite{woodard_1}. (This is similar to the results stated in \cite{holden}, who proved a lower bound on the spectral gap of a simulated tempering chain with quadratic dependence on the minimum measure of a given partition (similar to $\mu^{\star}$)). In addition, the dependence  in Corollary~\ref{alternative_thm:maintheorem} on the number of temperatures $V$ is quadratic rather than quartic, and dependence on $|\A|$ quadratic rather than cubic, compared to \cite{woodard_1} (omitting logarithmic terms).
(Note that $\text{Gap}(K_{0})$ appears in the bound of  \citet{woodard_1} but not in ours, but our assumption in Section~\ref{smc} that $\mu_0$ can be sampled efficiently is equivalent to assuming $\text{Gap}(K_{0})$ decays at most polynomially.) 

Unfortunately, we are currently unable to replace $\delta(\mathcal{A})$ in Corollary~\ref{alternative_thm:maintheorem} with  $ZW$. Proposition~\ref{overlap_lb} suggests that (at most) polynomial growth of $ZW$ may be a stronger requirement than polynomial decay of $\delta(\mathcal{A})$. 
This prevents us from concluding the `if' direction of our conjecture precisely. Nevertheless, we suspect that polynomial decay in $(ZW)^{-1}$ and $\delta(\mathcal{A})$ can be guaranteed on many problems of interest by decreasing the temperature spacings. We can summarize this discussion with the following corollary.


\begin{corollary}\label{smc_vs_pt}
Suppose $|f| \leq 1$, (\ref{assumption: density ratios}) holds, and $\textnormal{Gap}(K_{v|A_{j}})$, $\textnormal{Gap}(K_{0})$, $\gamma(\mathcal{A})$, $\pi^{\star}$, $(ZW)^{-1}$ decay at most polynomially in $d$ for $v = 1,\ldots,V = \mathcal{O}(\textnormal{poly}(d))$ and $j = 1,\ldots,p$. Then
\begin{enumerate}
    \item The parallel and simulated tempering algorithms provide a polynomial time $(\epsilon,\delta$)-randomized approximation for  $\E_{\pi}[f]$.
    \item The SMC algorithm given in Section \ref{smc} provides a polynomial time $(\epsilon,\delta$)-randomized approximation for $\E_{\pi}[f]$.
\end{enumerate}
\end{corollary}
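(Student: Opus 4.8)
The plan is to deduce both statements from results already established (or cited) above: part (2) from Corollary~\ref{alternative_thm:maintheorem} plus a standard probability-amplification step, and part (1) from the spectral-gap bounds of \citet{woodard_1} together with Proposition~\ref{overlap_lb} and a routine MCMC-estimation argument. No new probabilistic machinery is required; the substance is checking that the stated hypotheses force every quantity appearing in those bounds to be polynomially controlled.

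For part (2), I would substitute the hypotheses directly into the two requirements of Corollary~\ref{alternative_thm:maintheorem}. Since $(ZW)^{-1}$ decays at most polynomially, $WZ = \mathcal{O}(\text{poly}(d))$; combined with $V=\mathcal{O}(\text{poly}(d))$, $\gamma(\mathcal{A})^{-1},\pi^{\star -1}=\mathcal{O}(\text{poly}(d))$ and (as in every application of Section~\ref{applications}) $p$ bounded, the bound on $N$ is $\mathcal{O}(\text{poly}(d,\epsilon^{-1}))$, the logarithmic factors contributing only $\mathcal{O}(\text{polylog})$. The bound on $t$ is the product of a $\log$ term that is $\mathcal{O}(\text{polylog}(d,\epsilon^{-1}))$ and $[\min_{v,j}\text{Gap}(K_{v|A_j})]^{-1}=\mathcal{O}(\text{poly}(d))$, so $t=\mathcal{O}(\text{poly}(d,\log\epsilon^{-1}))$. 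Each SMC iteration costs one draw from each restricted kernel (implementable in $\text{poly}(d)$ time for the kernels used in Section~\ref{applications}), an $\mathcal{O}(Np)$ resampling step, and, at $v=0$, a draw from $\mu_0$, which is efficient by assumption (equivalently $\text{Gap}(K_0)$ polynomial), so the total cost $NVt$ is $\mathcal{O}(\text{poly}(d,\epsilon^{-1}))$. Corollary~\ref{alternative_thm:maintheorem} then yields, with probability $\geq 3/4$, an estimate within $\epsilon$ of $\tfrac{1+\epsilon^2/16}{1-\epsilon^2/16}\E_\pi[f]$; since this multiplicative factor differs from $1$ by $\mathcal{O}(\epsilon^2)$ and $|\E_\pi[f]|\leq 1$, running the scheme at tolerance $\epsilon/2$ produces an estimate within $\epsilon$ of $\E_\pi[f]$ itself. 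Finally, running $\mathcal{O}(\log\delta^{-1})$ independent copies and returning the median boosts the success probability to $1-\delta$, still in time polynomial in $d,\epsilon^{-1},\log\delta^{-1}$.

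For part (1), I would invoke Theorem~3.1 of \citet{woodard_1}, which lower bounds the spectral gap of the parallel (resp.\ simulated) tempering chain by an explicit function of the persistence $\gamma(\mathcal{A})$, the overlap $\delta(\mathcal{A})$, the base gap $\text{Gap}(K_0)$, the local gaps $\min_{v,j}\text{Gap}(K_{v|A_j})$, and $V$ and $p=|\mathcal{A}|$. By Proposition~\ref{overlap_lb}, polynomial decay of $\gamma(\mathcal{A})$, $\pi^\star$ and $(ZW)^{-1}$ forces $\delta(\mathcal{A})$ to decay at most polynomially, so under the stated hypotheses every input to Woodard's bound is polynomially controlled, and (with $p$ bounded) the tempering spectral gap decays at most polynomially in $d$; equivalently the tempering chain mixes in $\text{poly}(d)$ steps from an efficiently samplable start. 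A standard estimator then completes the argument: run the chain (all temperatures for PT, across temperatures for ST) for a burn-in of order the mixing time, then average $f$ over the sub-trajectory at inverse temperature $\beta_V=1$; reversibility and the spectral gap bound the variance of this average by $\mathcal{O}(1/(n\,\text{Gap}))$, so $n=\mathcal{O}(\text{poly}(d)\,\epsilon^{-2})$ post-burn-in steps give an $\epsilon$-accurate estimate with constant probability, again amplified to $1-\delta$ by a median over $\mathcal{O}(\log\delta^{-1})$ independent runs.

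The routine parts — substituting the hypotheses and the amplification/variance arguments — are straightforward, and part (2) is essentially immediate given Corollary~\ref{alternative_thm:maintheorem}. The point requiring the most care is the $p$-dependence in part (1): Woodard's bound is, in the worst case, exponential in $|\mathcal{A}|$, so concluding polynomial-time mixing of PT/ST genuinely requires $p$ bounded (as in every application of Section~\ref{applications}, where $p=2$), and this should be flagged as an explicit hypothesis, whereas Corollary~\ref{alternative_thm:maintheorem} needs only $p$ polynomial. A secondary subtlety is that turning a spectral-gap or TV-mixing statement for the extended tempering chain into a genuine $(\epsilon,\delta)$ approximation of $\E_\pi[f]$ requires isolating the target-temperature marginal and accounting for burn-in, but this is standard for tempering algorithms.
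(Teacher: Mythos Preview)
Your proposal is correct and follows essentially the same approach as the paper: part~(2) via Corollary~\ref{alternative_thm:maintheorem} and part~(1) via the spectral-gap bounds of \citet{woodard_1} with Proposition~\ref{overlap_lb} supplying polynomial control on the overlap. The paper's own proof is a one-line citation of these same ingredients; you have simply filled in the standard amplification and estimation details (and correctly flagged the $|\mathcal{A}|$-dependence caveat in Woodard's bound) that the paper leaves implicit.
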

\begin{proof}
This is a direct consequence of Corollary~\ref{alternative_thm:maintheorem} and the results from \cite{woodard_1}, along with the remarks regarding $\text{Gap}(K_{0})$ above. 
\end{proof}
Corollary \ref{smc_vs_pt} says that if the parallel and simulated tempering algorithms provide an efficient Monte Carlo approximation, then the SMC algorithm in Section~\ref{smc} will as well, up to the approximate equivalence of $(ZW)^{-1}$ and  $\delta(\mathcal{A})$ discussed above. (It is unclear at this time if the latter distinction represents a meaningful difference in the algorithms, or simply an artifact of different proof approaches.)  

Concluding the `only if' direction of our conjecture necessitates development of lower bounds on SMC which are comparable to those in \citet{woodard_2}.  This is of interest but beyond the scope of the current paper.

\section{Conclusion}\label{conclusion}
The upper bounds established in Theorem~\ref{thm:maintheorem} demonstrate that SMC algorithms can approximate global expectations using a polynomial number of Markov transition steps \textit{without} assuming globally mixing mutation kernels.  These results can be viewed as a generalization of those given in \cite{marion}: consider the special case of $\mathcal{A} = \mathcal{X}$ so that $\mu^{\star} = 1$ and $p = 1$, then the bounds stated in Theorem~\ref{thm:maintheorem} differ from those in \cite{marion} only by a $\mathcal{O}(V^{2})$ term appearing in the lower bound on $N$. To see why this occurs, note that when the mutation kernels globally mix, $X^{1:N}_{v}$ couples to an \textit{i.i.d.} sample from $\mu_{v}$ (see \cite{marion}). In this case, controlling $\hat{P}^{j}_{v}$ is unnecessary, since $N^{-1}\sum^{N}_{i=1} w_{v}(X^{i}_{v-1})$ always approximates $z_{v} / z_{v-1}$ for sufficiently large $N$ (conditional on the coupling event), regardless of whether  $\tilde{\mu}_{v-1}(A_{j})$ approximates $\mu_{v-1}(A_{j})$ well. That is, globally mixing mutation kernels `correct' for the bias of $N^{-1}\sum^{N}_{i=1} w_{v}(X^{i}_{v-1})$ even if the resampling at step $v-1$ goes terribly wrong. When global mixing cannot be assumed, it is necessary to control the trajectory of the resampling probability $\hat{P}^{j}_{v}$ across steps $v=1,\ldots,V$ in order to obtain  such a concentration result; the $\mathcal{O}(V^{2})$ term represents  the cost of  doing so.

In Section~\ref{applications} we applied our results to two well known example multimodal problems: sampling from mixtures of log-concave distributions and the mean field Ising model. We found that the total number of mutation steps $NVt$ required to apply Theorem~\ref{thm:maintheorem} is comparable to bounds on the spectral gap given in the literature for seemingly similar algorithms. We explored this relationship further in Section~\ref{smc_pt_compare} by comparing our results to those obtained in \cite{woodard_1} for the parallel and simulated tempering algorithms. Corollary~\ref{smc_vs_pt} almost suggests that if the complexity results of \cite{woodard_1} grow at a polynomial rate in the problem size, then so do the lower bounds on $N$ and $t$ given in Theorem~\ref{thm:maintheorem}. However, this conclusion cannot be made directly since we require uniform bounds on the density ratios. 
Establishing lower bounds on $Nt$ to obtain results comparable to \citet{woodard_2} is also of interest.


Finally, our results suggests that designing sequences of interpolating distributions for multimodal target distributions is an important problem. Indeed, Theorem~\ref{thm:maintheorem} describes all of the properties such sequences should possess. Namely, adjacent densities should be `close' to control $ZW$ while still ensuring $V$ polynomial in $d$, and modes should not be `lost' to control $\mu^{\star}$.
Naturally $\mu_{0}$ must also be easy to sample from. Unfortunately, finding such sequences is not straightforward. For example, annealed sequences break down on seemingly simple problems such as Gaussian mixtures with unequal scales (see \cite{woodard_2} and \cite{holden}). The effect of distribution sequence choice on the computational complexity of SMC is considered further elsewhere \cite{adaptiveSMC}.


\bibliography{Bibliography.bib}
\bibliographystyle{imsart-nameyear.bst}

\appendix
\section{Appendix}
\label{appendix_a}
In this section, we provide the construction of the coupled random variables $\bar{X}^{1:N}_{v}$. Let $j \in \{1,\ldots,p\} $ and define
\begin{align*}
    \frac{dK^{t}_{v \mid A_{j}}(x, \cdot)}{d\rho_{x}} := f^{j}_{x} 
    \qquad \qquad \frac{d\mu_{v \mid A_{j}}}{d\rho_{x}} := g_{x}^{j},
\end{align*}
where $\rho_{x}$ is some dominating measure. The subscript denotes the implicit dependence on $x \in \mathcal{X}$, and the superscript denotes the partition element $A_{j}$. Now, let $h^{j}_{x} = \min\{g_{x}^{j},f^{j}_{x}\}$ and set
 \begin{align*}
     a^{j}_{x} = \int_{\mathcal{X}}h^{j}_{x} d\rho_{x}, \ \  b^{j}_{x} = \int_{\mathcal{X}} (f^{j}_{x} - h^{j}_{x}) d\rho_{x}, \ \ c^{j}_{x} = \int_{\mathcal{X}} (g_{x}^{j} - h^{j}_{x}) d\rho_{x}
 \end{align*}
Each $\bar{X}^{i}_{v}$ is constructed independently according to the following `coupling map' $C: \mathcal{X} \rightarrow \mathcal{X} \times \mathcal{X}$. Let $Z^{i}_{v \mid j}$, $U_{v \mid j}^{i}$, and $V^{i}_{v  \mid j}$ denote random variables whose distributions have densities $h^{j}_{x}/a^{j}_{x}$, $(f^{j}_{x} - h^{j}_{x})/b_{x}$, and $(g_{x}^{j} - h^{j}_{x})/c^{j}_{x}$, respectively, for $j \in 1,\ldots,p$. The map $C$ transitions $\tilde{X}^{i}_{v}$ to $(X^{i}_{v}, \bar{X}^{i}_{v})$ as follows
\begin{enumerate}
    \item Independently draw $Z^{i}_{v \mid \xi^{i}_{v}}$, $U^{i}_{v \mid \xi^{i}_{v}}$, and $V^{i}_{v \mid \xi^{i}_{v}}$.
    \item Set $(X^{i}_{v}, \bar{X}^{i}_{v}) = (Z^{i}_{v|\xi^{i}_{v}},Z^{i}_{v\mid \xi^{i}_{v}})$ with probability $a^{\xi^{i}_{v}}_{\tilde{X}^{i}_{v}}$. Else, set $(X^{i}_{v}, \bar{X}^{i}_{v}) = (U^{i}_{v|\xi^{i}_{v}},V^{i}_{v\mid \xi^{i}_{v}})$. 
\end{enumerate}
Then for $x \in A_{j}$ we have
\begin{align*}
    \Prob(C(x) \in B \times \mathcal{X}) &=  \Prob(U^{i}_{v\mid j} \in B \, \cap \, V^{i}_{v\mid j} \in \mathcal{X})(1 - a^{j}_{x}) + \Prob(Z^{i}_{v\mid j} \in B \, \cap \,  \mathcal{X})a^{j}_{x}, \\
    &= \Prob(U^{i}_{v\mid j} \in B)\Prob(V^{i}_{v\mid j} \in \mathcal{X})(1 - a^{j}_{x}) + \Prob(Z^{i}_{v\mid j} \in B)a^{j}_{x} \\
    &= K^{t}_{v\mid A_{j}}(x,B)
\end{align*}
and similarly, for $x \in A_{j}$ we have that $\Prob(C(x) \in \mathcal{X} \times B) = \mu_{v\mid A_{j}}(B)$.

We see then have that the new state $(X^{i}_{v}, \bar{X}^{i}_{v})$ satisfies 
\begin{align*}
    \Prob(X^{i}_{v} \in B \mid \tilde{X}^{i}_{v} \in A_{j}) = \int_{\mathcal{X}} \tilde{\mu}_{v|A_{j}}(dx)K_{v|A_{j}}^{t}(x,B) = \hat{\mu}_{v\mid A_{j}}(B)
\end{align*}
and $\bar{\mu}_v(B \mid \tilde{X}_v^i \in A_j) = \Prob(\bar{X}^{i}_{v} \in B \mid  \tilde{X}^{i}_{v} \in A_{j}) = \mu_{v \mid 
A_{j}}(B)$. By the law of total probability, it follows that $\mathcal{L}(X^{i}_{v}) = \hat{\mu}_{v}$. That is, even though the constructions are performed conditionally on $\{\tilde{X}^{i}_{v} \in A_{j} \}$, the marginal law of $X^{i}_{v}$ is preserved. However, note that this does not apply to coupled variable $\bar{X}_v^i$; that is, marginally $\mathcal{L}(\bar{X}^{i}_{v}) \neq \mu_{v}$ since in general $\Prob(\tilde{X}^{i}_{v} \in A_{j}) \neq \mu_{v}(A_{j})$. Instead we have
\begin{equation}
\mathcal{L}(\bar{X}^{i}_{v}) = \sum_{j=1}^p  \mu_{v \mid A_j} \Prob(\tilde{X}^{i}_{v} \in A_{j}) \neq 
\sum_{j=1}^p  \mu_{v \mid A_j} \mu_v(A_j)= \mu_v.
\label{Eqn:IneqLawBarX}
\end{equation}
Note also that by construction, $\Prob(X^{i}_{v} = \bar{X}^{i}_{v} \mid  \tilde{X}^{i}_{v} \in A_{j}) = \|\hat{\mu}_{v \mid  A_{j}}(\cdot) - \mu_{v\mid A_{j}}(\cdot)\|_{\text{TV}}$.

The next lemma shows that the random variables produced by $C$ satisfy certain useful conditional independencies.
\begin{lemma}\label{cond_ind_statements}
Let $[-i] = \{1,\ldots,i-1,i+1,\ldots,N\}$. For $i = 1,\ldots,N$ we have:
\begin{enumerate}[label=(\alph*)]
    \item $\bar{X}^{i}_{v} \ind \tilde{X}^{[-i]}_{v} \mid \tilde{X}^{i}_{v}$
    \item $\bar{X}^{i}_{v} \ind \bar{X}^{[-i]}_{v} \mid  \tilde{X}^{i}_{v}$ \label{bars}
    \item $\bar{X}^{i}_{v} \ind (X^{1:N}_{k}, \tilde{X}^{1:N}_{k})^{v-1}_{k=1} \mid  \tilde{X}^{i}_{v}$
\end{enumerate}
\end{lemma}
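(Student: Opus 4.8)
\textbf{Proof proposal for Lemma~\ref{cond_ind_statements}.}

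The plan is to leverage the explicit construction of the coupling map $C$ given above, together with the fact that each particle index $i$ is propagated through $C$ \emph{independently} of the others, conditional on the resampled values $\tilde X^{1:N}_v$. The key structural observation is that the pair $(X^i_v,\bar X^i_v)$ is, by construction, a deterministic function of $\tilde X^i_v$ and an auxiliary randomization vector $R^i_v := (Z^i_{v\mid\cdot}, U^i_{v\mid\cdot}, V^i_{v\mid\cdot}, \text{Bernoulli}(a^{\xi^i_v}_{\tilde X^i_v}))$, and that the randomization vectors $R^1_v,\ldots,R^N_v$ are mutually independent and drawn independently of everything in the past (of $\tilde X^{1:N}_v$ and of $(X^{1:N}_k,\tilde X^{1:N}_k)_{k=1}^{v-1}$). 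So I would begin by making this ``independent randomization'' representation precise: write $\bar X^i_v = \pi_2(C(\tilde X^i_v; R^i_v))$ where $R^i_v \indep \big(\tilde X^{[-i]}_v,\ \{R^m_v\}_{m\neq i},\ \{\bar X^m_v\}_{m\neq i},\ (X^{1:N}_k,\tilde X^{1:N}_k)_{k=1}^{v-1}\big) \,\big|\, \tilde X^i_v$ — indeed the $R^i_v$ are drawn fresh and unconditionally, so this independence holds even unconditionally.

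Given that representation, each of the three claims follows from a standard fact: if $\bar X^i_v = \psi(\tilde X^i_v, R^i_v)$ with $R^i_v$ independent of $(\tilde X^i_v, W)$ for some random object $W$, then $\bar X^i_v \indep W \mid \tilde X^i_v$. For (a), take $W = \tilde X^{[-i]}_v$; for (b), take $W = \bar X^{[-i]}_v$ (which is itself $\psi(\tilde X^{[-i]}_v, R^{[-i]}_v)$, and $R^i_v$ is independent of both $\tilde X^{1:N}_v$ and $R^{[-i]}_v$); for (c), take $W = (X^{1:N}_k,\tilde X^{1:N}_k)_{k=1}^{v-1}$, which is a function of the history up to step $v-1$, and note that the randomization $R^i_v$ used at step $v$ is generated after and independently of that history. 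In each case I would verify the independence of $R^i_v$ from $(\tilde X^i_v, W)$ and then invoke the deterministic-function lemma to conclude conditional independence given $\tilde X^i_v$.

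The main obstacle — and really the only nontrivial point — is bookkeeping the joint probability space carefully enough to justify that the auxiliary variables $R^i_v$ are genuinely independent across $i$ and independent of the entire past, rather than merely conditionally so given $\xi^i_v$; the subtlety is that the \emph{distribution} of $R^i_v$ depends on $\tilde X^i_v$ (through $\xi^i_v$ and through $a^{\xi^i_v}_{\tilde X^i_v}$, etc.), so one must phrase the independence as ``the randomization seeds are independent of the past, and $(X^i_v,\bar X^i_v)$ is a measurable function of $\tilde X^i_v$ and the $i$-th seed.'' Concretely I would realize $R^i_v$ via inverse-CDF transforms of a single $\mathrm{Uniform}[0,1]^\infty$ seed $U^i_v$, with these seeds $\{U^i_v\}_{i,v}$ an i.i.d.\ family independent of the initialization $X^{1:N}_0$; then all three statements reduce to the elementary observation that a function of $(\tilde X^i_v, U^i_v)$ is conditionally independent, given $\tilde X^i_v$, of any random variable that is a function of $X^{1:N}_0$ and $\{U^m_k : (m,k)\neq(i,v)\}$. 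I expect this to be a short argument once the seed representation is fixed.
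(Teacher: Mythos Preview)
Your proposal is correct and follows essentially the same route as the paper: the paper defines $W^i_v = (Z^i_{v\mid\xi^i_v}, U^i_{v\mid\xi^i_v}, V^i_{v\mid\xi^i_v})$ and simply asserts that, by construction, $W^i_v$ is drawn independently of $((W^{1:N}_k,\tilde X^{1:N}_k)_{k=1}^{v-1}, W^{[-i]}_v, \tilde X^{[-i]}_v)$ conditional on $\tilde X^i_v$, from which all three statements follow. Your treatment is in fact more careful than the paper's, since you explicitly address the subtlety that the \emph{law} of the auxiliary variables depends on $\tilde X^i_v$ and resolve it via an i.i.d.\ uniform-seed representation; the paper leaves this implicit in the phrase ``by construction.''
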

\begin{proof}
Let $W^{i}_{v} =  (Z^{i}_{v \mid \xi^{i}_{v}},\, U^{i}_{v \mid \xi^{i}_{v}} ,\, V^{i}_{v \mid  \xi^{i}_{v}})$. All three statements follow immediately from the fact that
\begin{align*}
   W^{i}_{v} \, \ind \, W^{[-i]}_{v} \mid \tilde{X}^{i}_{v} 
\end{align*}
for $i = 1,\ldots,N$. That is, $Z^{i}_{v \mid \xi^{i}_{v}}$ , $U^{i}_{v \mid \xi^{i}_{v}}$, and  $V^{i}_{v \mid \xi^{i}_{v}}$ are, by construction, drawn independently of $((W^{1:N}_{k}, \tilde{X}^{1:N}_{k})^{v-1}_{k=1}, W^{[-i]}_{v}, \tilde{X}^{[-i]}_{v})$
conditional on $\tilde{X}^{i}_{v}$ for each $i=1,\ldots,N$.
\end{proof}
Since the constructions are performed conditional on the event $\tilde{X}^{i}_{v} \in A_{j}$, the $\bar{X}^{1:N}_{v}$ random variables are conditionally independent but not marginally independent; this is because the $\tilde{X}^{1:N}_{v}$ random variables are dependent. Nonetheless, we can still refine Lemma~\ref{cond_ind_statements}(c) slightly as follows:
\begin{lemma}
\label{cond_ind}
For $B \subset \mathcal{X}$,
\begin{align*}
\Prob(\bar{X}^{i}_{v} \in B \mid \mathcal{F}_{v-1},\tilde{X}^{i}_{v} \in A_{j}) = \Prob(\bar{X}^{i}_{v} \in B \mid \tilde{X}^{i}_{v} \in A_{j})
\end{align*}
\end{lemma}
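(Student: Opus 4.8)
The plan is to reduce the statement to two facts that are already in hand. The first is the conditional independence $\bar{X}^{i}_{v} \indep \mathcal{F}_{v-1} \mid \tilde{X}^{i}_{v}$, which is exactly Lemma~\ref{cond_ind_statements}(c), since $\mathcal{F}_{v-1} = \sigma\big((X^{1:N}_{k},\tilde{X}^{1:N}_{k})_{k=1}^{v-1}\big)$. The second is read directly off the coupling construction above: for every $x \in A_{j}$ one has $\Prob(C(x) \in \mathcal{X}\times B) = \mu_{v\mid A_{j}}(B)$, so the regular conditional law $\Prob(\bar{X}^{i}_{v}\in B \mid \tilde{X}^{i}_{v}=x)$ equals the \emph{same} constant $\mu_{v\mid A_{j}}(B)$ for all $x\in A_{j}$ — it does not depend on where in $A_{j}$ the point $x$ lies. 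This second observation is the reason one can pass from conditioning on the variable $\tilde{X}^{i}_{v}$ to conditioning on the event $\{\tilde{X}^{i}_{v}\in A_{j}\}$.

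First I would apply the tower property, conditioning additionally on $\tilde{X}^{i}_{v}$:
\[
\Prob(\bar{X}^{i}_{v}\in B \mid \mathcal{F}_{v-1},\,\tilde{X}^{i}_{v}\in A_{j})
= \E\left[\Prob(\bar{X}^{i}_{v}\in B \mid \mathcal{F}_{v-1},\tilde{X}^{i}_{v}) \;\middle|\; \mathcal{F}_{v-1},\,\tilde{X}^{i}_{v}\in A_{j}\right].
\]
By Lemma~\ref{cond_ind_statements}(c) the inner conditional probability collapses to $\Prob(\bar{X}^{i}_{v}\in B \mid \tilde{X}^{i}_{v})$, and by the second fact above this equals the deterministic quantity $\mu_{v\mid A_{j}}(B)$ on the event $\{\tilde{X}^{i}_{v}\in A_{j}\}$. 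Taking the outer conditional expectation of a constant leaves it unchanged, so the left-hand side equals $\mu_{v\mid A_{j}}(B)$. Running the identical argument with $\tilde{X}^{i}_{v}$ alone in place of $(\mathcal{F}_{v-1},\tilde{X}^{i}_{v})$ gives $\Prob(\bar{X}^{i}_{v}\in B \mid \tilde{X}^{i}_{v}\in A_{j}) = \mu_{v\mid A_{j}}(B)$ as well, and the two right-hand sides agree, which is the claim.

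I do not expect a genuine obstacle here; the only points needing care are bookkeeping ones. One is to work with regular conditional probabilities (or to note that the relevant conditioning events have positive probability, which follows from $\mu^{\star}>0$) so that all the displayed identities are meaningful. The other is to be explicit that Lemma~\ref{cond_ind_statements}(c) is stated with $\tilde{X}^{i}_{v}$, not $\{\tilde{X}^{i}_{v}\in A_{j}\}$, in the conditioning — it is precisely the invariance of the conditional law of $\bar{X}^{i}_{v}$ across all $x\in A_{j}$, established in the appendix construction, that bridges this gap. No new estimates or concentration arguments are involved.
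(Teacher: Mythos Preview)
Your proposal is correct and follows essentially the same approach as the paper: tower onto $\tilde{X}^{i}_{v}$, use Lemma~\ref{cond_ind_statements}(c) to drop $\mathcal{F}_{v-1}$ from the inner conditional, identify the result as the constant $\mu_{v\mid A_{j}}(B)$ via the coupling construction, and conclude. The paper's proof is just a slightly terser version of exactly these steps.
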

\begin{proof}
\begin{align*}
 \Prob(\bar{X}^{i}_{v} \in B \mid \mathcal{F}_{v-1},\tilde{X}^{i}_{v} \in A_{j}) 
 &=  \E[\Prob(\bar{X}^{i}_{v} \in B \mid \tilde{X}^{i}_{v}) \mid \mathcal{F}_{v-1},\tilde{X}^{i}_{v} \in A_{j}]\\
 &=  \E[\mu_{v\mid A_{\xi^{i}_{v}}}(B) \mid \mathcal{F}_{v-1}, \tilde{X}^{i}_{v} \in A_{j}] \\
 &= \mu_{v \mid A_{j}}(B) \\
 &= \Prob(\bar{X}^{i}_{v} \in B \mid \tilde{X}^{i}_{v} \in A_{j})
\end{align*}
The first equality follows by iterated expectations and Lemma \ref{cond_ind_statements}(c), the second by the definition of $C$, and the third since the expectation is constant upon conditioning on $\tilde{X}^{i}_{v} \in A_{j}$. The final equality follows since $\{\tilde{X}^{i}_{v} \in A_{j} \} = \{\bar{X}^{i}_{v} \in A_{j} \}$ by the definition of $C$.
\end{proof}
The next result is a consequence of Lemma~\ref{cond_ind} 
%
\begin{lemma}
\label{cond_exp_equality}
Let $f: \mathcal{X} \rightarrow \mathbb{R}$ be a $\Prob$-measurable function. Then
\begin{align}
\label{Lem:condexpect}
    \E[f(\bar{X}^{i}_{v}) \1_{A_{j}}(\bar{X}^{i}_{v}) \mid \mathcal{F}_{v-1}] = \E_{\mu_{v|A_{j}}}[f] \cdot \hat{P}^{j}_{v}
\end{align}
In particular,
\begin{align}
\label{Lem:condexpect_weights}
    \E[\bar{w}^{j}_{v+1} | \mathcal{F}_{v-1}] = \frac{z_{v+1}}{z_{v}} \cdot \frac{\mu_{v+1}(A_{j})}{\mu_{v}(A_{j})} \cdot \hat{P}^{j}_{v}
\end{align}
where $\bar{w}^{j}_{v+1} = N^{-1} \sum^{N}_{i=1} w_{v+1}(\bar{X}^{i}_{v})\1_{A_j}(\bar{X}_{v}^i)$ as before.
\end{lemma}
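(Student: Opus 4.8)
The plan is to prove the general identity \eqref{Lem:condexpect} by a conditioning argument, and then obtain \eqref{Lem:condexpect_weights} simply by substituting $f = w_{v+1}$ and evaluating a one–dimensional integral. Throughout I would lean on two structural facts already recorded in Appendix~\ref{appendix_a}: first, since the coupling map $C$ is applied conditionally on $\{\tilde X^i_v \in A_j\}$, we have $\{\bar X^i_v \in A_j\} = \{\tilde X^i_v \in A_j\}$ and, on this event, the conditional law of $\bar X^i_v$ given $\tilde X^i_v$ is exactly $\mu_{v\mid A_j}$; second, the conditional independence $\bar X^i_v \ind \mathcal{F}_{v-1} \mid \tilde X^i_v$, which follows from Lemma~\ref{cond_ind_statements}(c) because $\mathcal{F}_{v-1} = \sigma(\{X^{1:N}_k,\tilde X^{1:N}_k\}_{k=1}^{v-1})$ (or, equivalently at the level of the event $\{\tilde X^i_v\in A_j\}$, from Lemma~\ref{cond_ind}).

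First I would rewrite $\1_{A_j}(\bar X^i_v) = \1_{A_j}(\tilde X^i_v)$ and apply the tower property with the intermediate $\sigma$–algebra $\mathcal{H} := \mathcal{F}_{v-1} \vee \sigma(\tilde X^i_v) \supseteq \mathcal{F}_{v-1}$. Since $\1_{A_j}(\tilde X^i_v)$ is $\mathcal{H}$–measurable, this yields
\[
\E\big[f(\bar X^i_v)\1_{A_j}(\bar X^i_v)\mid \mathcal{F}_{v-1}\big] = \E\Big[\1_{A_j}(\tilde X^i_v)\,\E[f(\bar X^i_v)\mid \mathcal{F}_{v-1},\tilde X^i_v]\;\Big|\;\mathcal{F}_{v-1}\Big].
\]
By the conditional independence above, $\E[f(\bar X^i_v)\mid \mathcal{F}_{v-1},\tilde X^i_v] = \E[f(\bar X^i_v)\mid \tilde X^i_v]$, and by the definition of $C$ this equals the constant $\E_{\mu_{v\mid A_j}}[f]$ on the event $\{\tilde X^i_v \in A_j\}$. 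Pulling the constant out of the outer conditional expectation leaves $\E_{\mu_{v\mid A_j}}[f]\cdot\Prob(\tilde X^i_v \in A_j \mid \mathcal{F}_{v-1}) = \E_{\mu_{v\mid A_j}}[f]\cdot\hat P^j_v$, by the definition of $\hat P^j_v$; this is \eqref{Lem:condexpect}.

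For \eqref{Lem:condexpect_weights} I would apply \eqref{Lem:condexpect} with the bounded function $w_{v+1}$ in place of $f$ (the bound $w_{v+1}\le W$ from Assumption~\eqref{assumption: density ratios} guarantees integrability), and use symmetry across $i$ to pass from a single particle to the average $\bar w^j_{v+1} = N^{-1}\sum_{i=1}^N w_{v+1}(\bar X^i_v)\1_{A_j}(\bar X^i_v)$. It then remains to compute
\[
\E_{\mu_{v\mid A_j}}[w_{v+1}] = \frac{1}{\mu_v(A_j)}\int_{A_j} \frac{q_{v+1}(x)}{q_v(x)}\cdot\frac{q_v(x)}{z_v}\,\rho(dx) = \frac{1}{z_v\,\mu_v(A_j)}\int_{A_j} q_{v+1}(x)\,\rho(dx) = \frac{z_{v+1}\,\mu_{v+1}(A_j)}{z_v\,\mu_v(A_j)},
\]
using $w_{v+1} = q_{v+1}/q_v$, $\mu_v = q_v/z_v$, and $\int_{A_j} q_{v+1}\,d\rho = z_{v+1}\mu_{v+1}(A_j)$; substituting into \eqref{Lem:condexpect} gives the stated formula. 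The only real obstacle here is the bookkeeping in the conditioning step — specifically, justifying that once $\tilde X^i_v$ is known, further conditioning on $\mathcal{F}_{v-1}$ adds nothing (Lemma~\ref{cond_ind_statements}(c)), and that $\{\bar X^i_v\in A_j\}$ and $\{\tilde X^i_v\in A_j\}$ coincide so that the indicator can be moved inside the conditioning; everything after that is a routine change of variables.
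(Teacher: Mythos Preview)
Your proof is correct and follows essentially the same approach as the paper's: both arguments condition on $\tilde X^i_v$ (you via the full $\sigma$-algebra $\mathcal{F}_{v-1}\vee\sigma(\tilde X^i_v)$, the paper via the event $\{\tilde X^i_v\in A_j\}$), invoke the conditional independence from Lemma~\ref{cond_ind_statements}(c)/Lemma~\ref{cond_ind} to drop $\mathcal{F}_{v-1}$, and then read off the inner expectation from the definition of the coupling map $C$. Your tower-property formulation is slightly cleaner in that it avoids the paper's separate handling of the $\hat P^j_v=0$ case, but the substance is the same.
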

\noindent Note that if we had $\bar{X}_{v-1}^i$ distributed exactly according to $\mu_{v-1}$ then we would have $\E_{\mu_{v-1}}[w_v(\bar{X}_{v-1}^i)\1_{A_j}(\bar{X}_{v-1}^i) ] = \frac{z_v}{z_{v-1}}\mu_v(A_j)$.  However, $\mathcal{L}(\bar{X}_{v-1}^i) \neq \mu_{v-1}$ as noted by (\ref{Eqn:IneqLawBarX}) above. Lemma~\ref{cond_exp_equality} says however that if the sample history $\{(\tilde{X}_s^{1:N},X_s^{1:N},\bar{X}_s^{1:N})\}_{s=1}^{v-2}$ satisfies conditions (\ref{initial}-\ref{coupling_event}) for steps $1,\ldots,v-2$,
the sample estimator $\bar{w}_v^j$ of this quantity is unbiased up to a multiplicative error $\hat{P}_{v-1}^j/\mu_{v-1}(A_j)$ introduced by the fact that the marginal probability of resampling into $A_j$ at step $v-1$ is $\hat{P}_{v-1}^j$ rather than exactly $\mu_{v-1}(A_j)$.
\begin{proof}
We show (\ref{Lem:condexpect}) for the case of $f(\cdot) = w_{v+1}(\cdot)$; the argument for the general case is identical. The result (\ref{Lem:condexpect_weights}) follows by linearity of expectation. 

Assume that $\hat{P}^{j}_{v} \neq 0$, otherwise the result holds trivially since this implies $\E[\bar{w}^{j}_{v+1} | \mathcal{F}_{v-1}] = 0 = 
\hat{P}^{j}_{v}$ (recall the events $\tilde{X}_v^i \in A_j$ and $\bar{X}_v^i \in A_j$ are equivalent).
Then we compute that
\begin{align*}
\MoveEqLeft[6] \E[w_{v+1}(\bar{X}^{i}_{v}) \1_{\bar{X}^{i}_{v} \in A_{j}} \mid \mathcal{F}_{v-1}] & \\
     = &\E[w_{v+1}(\bar{X}^{i}_{v}) \mid \mathcal{F}_{v-1}, \tilde{X}^{i}_{v} \in A_{j}] \Prob(\tilde{X}^{i}_{v} \in A_{j} | \mathcal{F}_{v-1}) \\
     = &\E[w_{v+1}(\bar{X}^{i}_{v}) \mid \tilde{X}^{i}_{v} \in A_{j}] \Prob(\tilde{X}^{i}_{v} \in A_{j} | \mathcal{F}_{v-1}) \\
     = &\frac{z_{v+1}}{z_{v}} \frac{\mu_{v+1}(A_{j})}{\mu_{v}(A_{j})}\left[\frac{\sum^{N}_{i=1}w_{v}(X^{i}_{v-1}) \1_{X^{i}_{v-1} \in A_{j}}}{\sum^{N}_{i=1}w_{v}(X^{i}_{v-1})} \right] \\
     = &\frac{z_{v+1}}{z_{v}} \cdot \frac{\mu_{v+1}(A_{j})}{\mu_{v}(A_{j})} \cdot \hat{P}^{j}_{v}
\end{align*}
The second equality follows by Lemma~\ref{cond_ind}. The third equality follows since $\Prob(\bar{X}^{i}_{v} \in B \mid \tilde{X}^{i}_{v-1} \in A_{j}) = \bar{\mu}_{v}(B \mid  \tilde{X}^{i}_{v-1} \in A_{j}) = \mu_{v |A_{j}}(B)$
by the definition of $C$.
\end{proof}
We conclude with the following concentration bound, which is a consequence of the results derived so far.
\begin{lemma}\label{conc_bound}
Let $f: \mathcal{X} \rightarrow (a,b)$ be a $\Prob$-measurable bounded function
and $\bar{f} = N^{-1}\sum^{N}_{i=1} f(\bar{X}^{i}_{v})$. Then for $\lambda > 0$,
\begin{equation*}
    \Prob\left(\abs{\bar{f} - \E[\bar{f} \mid \mathcal{F}_{v-1}]} > \lambda \right) \leq  4e^{-\frac{N \lambda^{2}}{2 (b-a)^{2}}}
\end{equation*}
\end{lemma}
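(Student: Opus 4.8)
The obstacle is that the coupled particles $\bar{X}^{1}_{v},\ldots,\bar{X}^{N}_{v}$ are \emph{not} conditionally independent given $\mathcal{F}_{v-1}$: the resampling step (\ref{joint_resample}) correlates them through the shared weighted empirical measure, so Hoeffding's inequality cannot be applied to $\bar{f}$ directly. The plan is to split the randomness into two layers --- the mutation randomness (conditioning on $\tilde{X}^{1:N}_{v}$) and the resampling randomness (conditioning on $\mathcal{F}_{v-1}$) --- and to apply Hoeffding's inequality within each.

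First I would pin down the conditional structure. Conditional on $\tilde{X}^{1:N}_{v}$, the construction of Appendix~\ref{appendix_a} produces each $\bar{X}^{i}_{v}$ from fresh, independent auxiliary randomness, so $\bar{X}^{1}_{v},\ldots,\bar{X}^{N}_{v}$ are conditionally independent given $\tilde{X}^{1:N}_{v}$ with $\bar{X}^{i}_{v} \sim \mu_{v\mid A_{\xi^{i}_{v}}}$ (this is Lemma~\ref{cond_ind_statements}(b) together with the definition of the coupling map $C$). Set $g := \E[\bar{f} \mid \tilde{X}^{1:N}_{v}] = N^{-1}\sum^{N}_{i=1} \E_{\mu_{v\mid A_{\xi^{i}_{v}}}}[f]$. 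Since Lemma~\ref{cond_ind_statements}(c) (equivalently Lemma~\ref{cond_ind}) gives that $\bar{X}^{1:N}_{v}$ is independent of $\mathcal{F}_{v-1}$ conditional on $\tilde{X}^{1:N}_{v}$, we have $\E[\bar{f} \mid \mathcal{F}_{v-1}, \tilde{X}^{1:N}_{v}] = g$, and hence by the tower property $\E[\bar{f} \mid \mathcal{F}_{v-1}] = \E[g \mid \mathcal{F}_{v-1}]$. The triangle inequality then yields
\begin{equation*}
    \abs{\bar{f} - \E[\bar{f} \mid \mathcal{F}_{v-1}]} \ \leq \ \abs{\bar{f} - g} \ + \ \abs{g - \E[g \mid \mathcal{F}_{v-1}]},
\end{equation*}
so it suffices to bound each summand by $\lambda/2$ with probability at least $1 - 2e^{-N\lambda^{2}/(2(b-a)^{2})}$ and then take a union bound.

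For $\abs{\bar{f} - g}$ I would condition on $\tilde{X}^{1:N}_{v}$: then $\bar{f} = N^{-1}\sum_{i} f(\bar{X}^{i}_{v})$ is an average of independent random variables taking values in $(a,b)$ with mean $g$, so Hoeffding's inequality gives $\Prob(\abs{\bar{f} - g} > \lambda/2 \mid \tilde{X}^{1:N}_{v}) \leq 2e^{-N\lambda^{2}/(2(b-a)^{2})}$; because the right-hand side does not depend on $\tilde{X}^{1:N}_{v}$, the same bound holds after taking expectations. For $\abs{g - \E[g \mid \mathcal{F}_{v-1}]}$ I would condition on $\mathcal{F}_{v-1}$: by (\ref{joint_resample}) the resampled particles $\tilde{X}^{1}_{v},\ldots,\tilde{X}^{N}_{v}$ are i.i.d.\ given $\mathcal{F}_{v-1}$, hence so are the indices $\xi^{i}_{v}$ and the summands $\E_{\mu_{v\mid A_{\xi^{i}_{v}}}}[f] \in [a,b]$; thus $g$ is an average of $N$ i.i.d.\ $[a,b]$-valued random variables, and Hoeffding again gives $\Prob(\abs{g - \E[g \mid \mathcal{F}_{v-1}]} > \lambda/2 \mid \mathcal{F}_{v-1}) \leq 2e^{-N\lambda^{2}/(2(b-a)^{2})}$, which again passes to the unconditional bound. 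Adding the two failure probabilities gives the stated $4e^{-N\lambda^{2}/(2(b-a)^{2})}$.

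The main obstacle is the bookkeeping in the second paragraph: one must check that enlarging the conditioning from $\sigma(\tilde{X}^{1:N}_{v})$ to $\sigma(\mathcal{F}_{v-1}, \tilde{X}^{1:N}_{v})$ does not change the conditional law of $\bar{X}^{1:N}_{v}$ --- which is exactly the content of Lemma~\ref{cond_ind_statements}(c)/Lemma~\ref{cond_ind} --- so that the identity $\E[\bar{f} \mid \mathcal{F}_{v-1}] = \E[g \mid \mathcal{F}_{v-1}]$ is legitimate and the two applications of Hoeffding are genuinely over independent collections. Once that nesting is set up correctly, the remaining estimates are routine.
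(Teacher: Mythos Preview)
Your proposal is correct and follows essentially the same approach as the paper's own proof: both insert the intermediate conditioning on $\tilde{X}^{1:N}_{v}$, use the triangle inequality to split $\abs{\bar{f} - \E[\bar{f}\mid\mathcal{F}_{v-1}]}$ into the mutation-layer deviation $\abs{\bar{f}-g}$ and the resampling-layer deviation $\abs{g-\E[g\mid\mathcal{F}_{v-1}]}$, and apply conditional Hoeffding to each layer via the conditional independence statements of Lemma~\ref{cond_ind_statements}. Your explicit justification of the identity $\E[\bar{f}\mid\mathcal{F}_{v-1}] = \E[g\mid\mathcal{F}_{v-1}]$ (i.e.\ that enlarging the conditioning from $\sigma(\tilde{X}^{1:N}_{v})$ to $\sigma(\mathcal{F}_{v-1},\tilde{X}^{1:N}_{v})$ does not change the conditional law of $\bar{X}^{1:N}_{v}$) is exactly the point the paper invokes Lemma~\ref{cond_ind_statements} for, and you have handled it cleanly.
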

\begin{proof}
Note that $\E[\bar{f} \mid \mathcal{F}_{v-1}] = \E[\E[\bar{f} \mid \tilde{X}^{1:N}_{v}] \mid \mathcal{F}_{v-1}]$ by Lemma \ref{cond_ind_statements}. Applying the triangle inequality and the union bound gives
\begin{align*}
\MoveEqLeft \Prob\left(\left|\bar{f} - \E[\bar{f} \mid \mathcal{F}_{v-1}] \right| > \lambda \right) & \\
    \leq  &\Prob\left(\left|\bar{f} - \E[\bar{f} \mid \tilde{X}^{1:N}_{v}] \right| > \frac{\lambda}{2} \right) + \Prob\left(\left|\E[\bar{f} \mid \tilde{X}^{1:N}_{v}] -  \E[\E[\bar{f} \mid \tilde{X}^{1:N}_{v}] \mid \mathcal{F}_{v-1}] \right| > \frac{\lambda}{2} \right),
\end{align*}
Notice that
\begin{align}
\E[\bar{f} \mid \tilde{X}^{1:N}_v] = \E\left[\frac{1}{N}\sum^N_{i=1} f(\bar{X}^i_v) \Bigm|  \tilde{X}^{1:N}_v \right] = \frac{1}{N}\sum^N_{i=1}\E[f(\bar{X}^i_v) \mid \tilde{X}^i_v] 
\label{cond_mean}
\end{align}
where the second equality follows by Lemma \ref{cond_ind_statements}. Since $\bar{X}^{1}_{v} \ind \ldots \ind \bar{X}^{N}_{v} \mid \tilde{X}^{1:N}_{v}$ by Lemma \ref{cond_ind_statements}, the first term can be controlled by the conditional form of Hoeffding's inequality:
\begin{align*}
\Prob\left(\left|\bar{f} - \E[\bar{f} \mid \tilde{X}^{1:N}_{v}] \right| > \frac{\lambda}{2} \right) 
&= \E\left[ \Prob\left(\left|\bar{f} - \E[\bar{f} \mid \tilde{X}^{1:N}_{v}] \right| > \frac{\lambda}{2} \Bigm| \tilde{X}^{1:N}_{v}\right)\right] \\ 
&= \E\left[ \Prob\left(\left|\sum^{N}_{i=1} \left(f(\bar{X}^{i}_{v}) - \E[f(\bar{X}^{i}_{v}) \mid  \tilde{X}^{1:N}_{v}]\right) \right| > \frac{N\lambda}{2} \Bigm| \tilde{X}^{1:N}_{v}\right)\right] \\
    &\leq 2e^{-\frac{N \lambda^{2}}{2 (b-a)^{2}}}
\end{align*}
Next, note that the random variable
$\E[f(\bar{X}^{i}_{v})\mid \tilde{X}^{i}_{v}]$ is also a bounded function of $\tilde{X}^{i}_{v}$. But we always have that $\tilde{X}^{1}_{v} \ind \ldots \ind \tilde{X}^{N}_{v} \mid \mathcal{F}_{v-1}$, so we can reapply an identical argument using the conditional version of Hoeffding's inequality:
\begin{align*}
\Prob\left(\left|\sum^{N}_{i=1}\left(\E[f(\bar{X}^{i}_{v}) \mid \tilde{X}^{i}_{v}] -  \E[\,\E[f(\bar{X}^{i}_{v}) \mid \tilde{X}^{i}_{v}] \mid \mathcal{F}_{v-1}] \right) \right| > \frac{N\lambda}{2}\right) \leq  2e^{-\frac{N \lambda^{2}}{2 (b-a)^{2}}}
\end{align*}
It follows that
\begin{align*}
    \Prob\left(\left|\bar{f} - \E[\bar{f} | \mathcal{F}_{v-1}] \right| > \lambda \right) \leq 4e^{-\frac{N \lambda^{2}}{2 (b-a)^{2}}}
\end{align*}
\end{proof}

\end{document}